\documentclass[a4paper,UKenglish,cleveref, autoref, thm-restate]{lipics-v2021}

\usepackage{amsmath}
\usepackage{amssymb}
\usepackage{graphicx}
\usepackage{color}
\usepackage{hhline}
\usepackage{tabu}
\usepackage{multirow}
\usepackage{multicol}

\newcommand{\ignore}[1]{}

\usepackage{tikz}
\usetikzlibrary{er,positioning,bayesnet}
\usetikzlibrary{calc,shapes.multipart,chains,arrows}
\usetikzlibrary{arrows,shapes,automata}
\tikzstyle{attribut} = [ellipse,draw,minimum size=22pt,inner sep=1pt]
\tikzstyle{edge} = [draw,thick,->]
\tikzstyle{weight} = [font=\small]
\usepackage[normalem]{ ulem }

\usepackage{caption}

\usepackage{tabu}

\newcommand{\boxtheorem}{\hfill $\blacksquare$\\}
\newcommand{\nit}[1]{{\it #1}}
\DeclareMathOperator*{\argmax}{\mathsf{argmax}}
\DeclareMathOperator*{\argmin}{\mathsf{argmin}}

\newcommand{\h}{\hspace*{3mm}}

\newcounter{Bexample-counter}
\newcounter{Bremark-counter}
{\vskip \abovedisplayskip \refstepcounter{Bexample-counter}%
\noindent {\bf Example \arabic{Bexample-counter}.}  }%

{\vskip \abovedisplayskip \refstepcounter{Bremark-counter}%
\noindent {\bf Remark \arabic{Bremark-counter}.}}%

\newcommand{\mc}[1]{\mathcal{ #1}}

\newcommand{\mbb}[1]{\mathbb{ #1}}
\newcommand{\mbf}[1]{\mathbf{ #1}}

\newcommand{\C}[1]{\mathcal{C}}
\newcommand{\T}[1]{\mathcal{T}}

\newcommand{\na}{{\bf \sf na}}

\newcommand{\Nn}{\mbox{\scriptsize \sf NULL}}

\newcommand{\red}[1]{\textcolor{red}{#1}}

\newcommand{\blue}[1]{\textcolor{blue}{#1}}

\newcommand{\perm}{\mathsf{perm}}

\newcommand{\comlb}[1]{{\vspace{2mm}\noindent \bf  {\red{COMM(LEO):}}}~ #1 \hfill {\bf
    END.}\vspace*{2mm} \\ }

\newcommand\independent{\protect\mathpalette{\protect\independenT}{\perp}}
\def\independenT#1#2{\mathrel{\rlap{$#1#2$}\mkern2mu{#1#2}}}

\hideLIPIcs  

\title{Databases with Missing Values that are Governed by Missingness Mechanisms} 

 \author{Leopoldo {Bertossi}}{Carleton University, Ottawa, Canada \and IMFD, Chile,   }{bertossi@scs.carleton.ca}{https://orcid.org/0000-0002-1825-0097}{}

  \author{Farouk Toumani}{LIMOS, CNRS, Clermont Auvergne University}{farouk.toumani@uca.fr}
 {[orcid]}{}

\authorrunning{L.Bertossi et al.} 

\ccsdesc[100]{\textcolor{red}{Replace ccsdesc macro with valid one}} 

\EventEditors{John Q. Open and Joan R. Access}
\EventNoEds{2}
\EventLongTitle{42nd Conference on Very Important Topics (CVIT 2016)}
\EventShortTitle{CVIT 2016}
\EventAcronym{CVIT}
\EventYear{2016}
\EventDate{December 24--27, 2016}
\EventLocation{Little Whinging, United Kingdom}
\EventLogo{}
\SeriesVolume{42}
\ArticleNo{23}

\begin{document}

\maketitle

\begin{abstract}
We address the problems of giving a semantics to a relational database (RDB) that has missing values (MVs). The causes for the latter are governed by a
Missingness Mechanism that is modelled as a Bayesian Network (BN) that involves the DB attributes as variables. The BN is called a Missingness Graph (MG). Our approach considerable departs from the treatment of RDBs with NULL (values). The combination of the MG and the observed DB allows us to build a block-independent probabilistic DB. We identify two optimal classes of its possible worlds on which  QA can be performed. Those classes jointly capture probabilistic uncertainty and statistical plausibility of the implicit imputation of MVs. We obtain tractability results for the computation of some optimal classes; and we also obtain complexity results that characterize the computational feasibility of our approach.
\end{abstract}

\vspace{-2mm}\section{Introduction}\label{sec:intro}

\vspace{-2mm}
\h It is common to find missing values (MVs) in a  database (DB) $D^\star$, that is, values for some attributes that are not reported. We  use \na, for ``not available", to indicate that the true value is  absent. No other semantics is assigned to  \na.
\ We consider the {\em observed}  DB $D^\star$ as  obtained from an independent  sample from an outside reality.\footnote{This kind of observed DBs, including independence and MVs,  are common with census-related data \cite{gelman}.} We assume that  $D^\star$  is an incomplete representation of the ``true", possibly unknown and {\em underlying} DB $D$ that represents the external reality, and whose attributes have non-\na \ values.

\captionof{figure}{\ \ (a) Relation with Missing Values \hspace{0.5cm} (b) Missingness Graph}\label{table:first}
\vspace{-3mm}\begin{center}
\ignore{ {\tiny \vspace{-2.5cm}$\begin{tabu}{c|c|c|c|}\hline
R^\star & A^\star & B^o & C^\star \\ \hline
\tau_1 & a_1 & 0& c_1\\
\tau_2 & a_2& 1& \na\\
\tau_3 &\na & 0&c_3\\
\tau_4 & a_4&0&c_4\\
\tau_5 & \na&1&c_5\\
\tau_6 &\na&1& \na\\ \hhline{~---}
\end{tabu}$}}
\includegraphics[width=2.5cm]{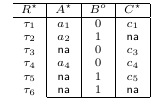}
~~~~~~~~\includegraphics[width=2cm]{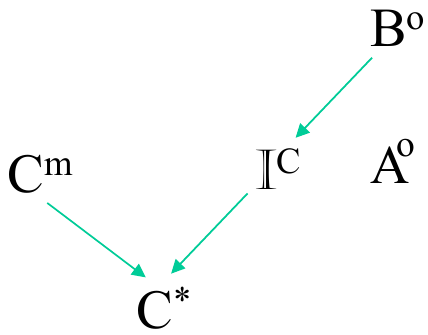}
\end{center}

\vspace{-4mm}
\begin{example}\label{ex:first}  Relation $R^\star$  in Figure \ref{table:first}(a) belongs to  an observed DB $D^\star$. We use global tuple identifiers (tids): $\tau_1$, etc. Some (true) values are missing. \
Using notation  from \cite{mohan21}, \ attribute \  $B^o$ \ is   fully observable, i.e. it never shows  \na, whereas  attributes  $A^\star, C^\star$ may have MVs.
\ $R^\star$ has the {\em observed schema} $R^\star(A^\star,B^o,C^\star)$, whereas the underlying  DB $D$ has the {\em underlying schema} $R(A^m,B^o,C^m)$, where the superscript ``$m$''  indicates that the attribute may have MVs. We do not observe $A^m$ directly, but via $A^\star$. When the latter does not show an \na, it shows the real value of $A^m$.
\boxtheorem
\end{example}

\vspace{-4mm}\h The main questions that underlie our work are the following: \ (a) {\em What is the semantics of the observed DB?}; and (b) {\em How can we query the ``underlying" DB $D$ in a meaningful way ``through" the observed DB?} \  This quest looks hopeless unless we have some additional information. Actually, we will assume we know about the {\em causes for the occurrence of MVs} by means of a Bayesian Network (BN) \cite{pearl} that stochastically models, as a {\em Missingness Graph} (MG) \cite{mohan21}, the  {\em Missingness Mechanisms} (MMs) at play \cite{rubin76,rubinBook}. Those MMs describe under what stochastic conditions MVs may appear.

\vspace{-1mm}\begin{example}\label{ex:second} (ex. \ref{ex:first} cont.) \  For simplicity, assume now that attribute $A$ is fully observed and denoted with $A^o$. The MG in Figure \ref{table:first}(b) shows $A^o$ as
independent from the other attributes. \ $\mbb{I}^C$ is an indicator variable for $C^m$ that (deterministically or with high probability)
takes the value $1$ when the value for $C^m$ is missing, and $0$,
otherwise. \
The MG tells us that whether $C^\star$ takes the value \na \ or not, depends on variable $C^m$ and $B^o$, for the latter, via $\mbb{I}^C$. \boxtheorem
\end{example}

\ignore{\begin{center}
    \includegraphics[width=2cm]{mg3.pdf}
\vspace{-2mm}\captionof{figure}{Missingness Graph.}\label{fig:mg3Intro}
\end{center}}

\vspace{-5mm}\h In general, we start with an observed DB, $D^\star$, and a MG, $\mc{M}$, that models the occurrence of MVs. Observed values are assumed to be certain. We also assume $D^\star$ is compatible with the joint distribution induced by the MG.

\h Using the
combination of $D^\star$ and $\mc{M}$, we create  (virtually or physically) a {\em Block-Independent Probabilistic DB} (BID) \cite{suciu}, which
gives rise to  a collection $\mc{W}$ of {\em possible worlds}, each of them an \na-free DB instance for the underlying schema, with an associated global probability.   Each possible world becomes a possible materialization -without MVs- of the partially observed underlying DB $D$.

\h Except for tuple identifiers (tids), that can be of an auxiliary nature, or intrinsic to data in some applications,  possible worlds may end up having, due to imputations,  duplicate tuples. Accordingly, we adopt  a {\em bag semantics}. This more general setting is quite natural in applications like census-like data. In other cases, one can always fall back to a set-semantics.\footnote{We have developed a set-semantics, on the basis of the bag semantics, but we do not report on it here.}

\h Our approach can be seen as a generalization of traditional {\em imputation techniques} \cite{gelman}, where, instead of a single ``clean" instance, a class of probability-weighted clean instances is considered. This makes for a more principled and uncertainty-aware data semantics and query answering (QA). Our way of dealing with MVs can be understood as a form of collective imputation  by means of  probability distributions over possible values; distributions that are related to each other through an underlying joint distribution determined by a MG. Those ``cell-level" distributions can be seen as footprints of that joint distribution.

\h Query answering (QA) can be done on the BID, which has a clear semantics \cite{suciu}. Without neglecting this ``direct" approach,   we propose a more efficient alternative semantics that builds on the BID, but takes into account the {\em statistical compliance} of possible worlds in relation to the distribution induced by the MG.   More precisely, we start by collecting in different {\em classes} those possible worlds that are essentially the same, the  {\em matching worlds}. They contain the same tuples as multisets. Accordingly, QA on matching worlds returns the same answer.  After that, we introduce a {\em measure of compliance} for classes of matching worlds. We  identify those classes that are maximally compliant with the MG, and do QA on top of them. \  More specifically, in this work we make the following contributions:

\vspace{1mm}
1. We bring into- and develop in the realm of relational DBs some concepts introduced by Mohan and Pearl \cite{mohan21} on the use of Bayesian Networks for the specification of missingness mechanisms. This leads us to introduce and exploit auxiliary probabilistic DBs.

\vspace{1mm}
2. We define the semantics of a DB with  MVs in relation to a MG. \ This is done through  a BID associated to the observed DB and the MG, giving rise to classes of matching possible worlds. Classes are used to define a bag-semantics for data and QA.

\vspace{1mm}
    3. We propose two particular  data and QA semantics:  The \emph{Most-Probable Classes} (MPC), and the \emph{Most-Compliant Classes} (MCC). \  Compliance is measured as a statistical  distance between a world's empirical distribution and that induced by the MG.

 \vspace{1mm}   4. We investigate the computational complexity of the  MCC- and MPC-semantics.  For a broad and common family of notions of compliance, we show one can compute a most-compliant class in polynomial-time in the size of the observed DB, on which QA can be done. Furthermore, although there may be exponentially many classes,  enumerating all the most-compliant classes and query answers on them  can be done with polynomial-time delay.

 \vspace{1mm}   5. We obtain several hardness results for the MCC-semantics and the MPC-semantics. Still, the former has better computational properties than the MPC-semantics.

\vspace{1mm}    6. We show that {\em recovering probabilities} from the the available, observed data is possible for certain classes of qualitative MGs.

\vspace{1mm}
\h In this work, we do not address the important related problem of learning the MGs, nor that of checking the compliance of the initially observed instance against  MGs. These are matter of separate ongoing research. Instead, we assume the MG is given and the observed data complies with it. \ The MG may come from learning from data about the same domain,  and from domain knowledge \cite{heckerman95}.

\h Learning the BN considers MVs as any other categorical values, and then, it is like general learning of BNs that do not involve MVs in the data domains \cite{heckerman}. However, there are some new issues: (a) The introduction of the auxiliary indicator functions; and most importantly (b) the possible lack of values for the unobserved  variables of the form $A^m$: We observe $\na$'s via $A^\star$, but in those cases not the real values of $A^m$ hidden underneath.
In this direction, {\em probability recoverability} techniques like those in  \cite{broeck2015} can be useful. Having said all this, we still retake some of these issues in Section \ref{sec:recover}.

\h
This paper is structured as follows. Section~\ref{sec:prel} provides background. Section~\ref{sec:mgs} introduces MMs and MGs. Section \ref{sec:genBIDs} introduces the BIDs associated to an observed DB. Section~\ref{sec:queries} introduces classes of matching worlds and class-based QA. Section \ref{sec:compliance-pw} introduces the notion of compliance.  Section \ref{sec:complexity} investigates algorithmic and complexity aspects of most-compliant classes and QA. Section \ref{sec:mpdbs} investigates algorithmic and complexity aspects of most-probable classes and QA.   Section \ref{sec:recover} revisits probability recoverability. Section \ref{sec:relWork} discusses related work.     Section~\ref{sec:conclusion}  summarizes our contributions and outlines directions of future work. The Appendices provide additional material and proofs of results.

\vspace{-2mm}
\section{Background and Preliminaries}
\label{sec:prel}

\vspace{-1mm}
{\bf Relational Databases.} \
A relational schema, $S$,  is a finite set of logical predicates, $R, \ldots$,  with fixed arities. Variables, a.k.a. attributes or features, are associated to predicate positions. A relational DB, $D$,  is a collection of relations with finite extensions for the predicates. Their elements, the tuples, have values from fixed and finite domains. $\nit{dom}(X)$ denotes the domain of  attribute $X$. For several attributes, $\bar{X}$, $\nit{dom}(\bar{X})$ is the cartesian product of the individual domains; and $\nit{dom}$ is the union of the domains. \ The string \na, indicating a missing value, \ does not belong to $\nit{dom}$. $\nit{dom}^{\!\star}$ denotes $\nit{dom} \cup \{\na\}$.
 \ Unless otherwise stated, DB relations may have duplicates, i.e. repeated tuples, which we tell apart by means of global {\em tuple identifiers} (tids) that appear in a first attribute of tuples, acting as a surrogate key, as in Example \ref{ex:first}. We denote tids with $\tau, \tau_1, \ldots$. We frequently refer to tuples by their tids.

 \h Relational  queries may have constants from $\nit{dom}$, and then, different from \na.
The set of answers to a query $\mc{Q}$ from a DB $D$ is denoted with $\mc{Q}[D]$.
Boolean queries (BQ) have answers $0$ or $1$. Queries do not mention the tids, nor their auxiliary attributes.

\vspace{1mm}
\noindent {\bf Probabilistic Databases.} \
  A PDB, $D^p$, associated to a relational DB, $D$, is the {\em collection $\mc{W}$ of possible worlds} that are  subinstances $W$ of  $D$. Each $W \in \mc{W}$ has a  probability $P^{\mc{W}}(W)$, such that  $\sum_{W\in\mc{W}}P^{\mc{W}}(W) = 1$ \cite{suciu}. $D^p$  becomes a discrete {\em probability space} $\langle \mc{W}, P^\mc{W}\rangle$. \ The probability of tuple $\tau$ being (true) in $D$ -as seen through $D^p$- is  \
$P(\tau) \ := \sum_{\tau \in W \in \mc{W}} P^{\mc{W}}(W)$.
\ A numeric query $\mc{Q}$ on $D$ becomes a random variable on  $\mc{W}$; with Bernoulli distribution if it is a BQ,  in which case, the probability of $\mc{Q}$ (being true) is \
$P(\mc{Q}) \ := \ P(\mc{Q} = 1) :=
\sum_{W \in \mc{W}: \ W \models \mc{Q}} P^{\mc{W}}(W)$.   We are interested in {\em block-independent} PDBs (BIDs) \cite{suciu}.  Tuples have probabilities, and tuples of a same relation are separated in mutually independent blocks of mutually exclusive tuples.

\vspace{-1mm}\begin{center}
\captionof{table}{A BID}\label{tab:pdbs}
\vspace{-7mm}{\tiny
$\begin{tabu}{c|c|c|c||c|}\hline
R & A & B & C& P \\ \hline
\tau^1_1& a_1 & b_1& c_1&p_1^1\\
\tau^2_1 & a_2& b_2& c_2&p_1^2\\
\hline
\tau^1_2 & a_3& b_3& c_3&p_2^1\\
\tau^2_2& a_4& b_4& c_4&p_2^2\\
\tau^3_2& a_5& b_5& c_5&p^3_2\\
\hline
\end{tabu}$}
\end{center}

\h  Table \ref{tab:pdbs} shows a BID $R$ with two blocks. \ In a possible world $W \in \mc{W}$, the corresponding  relation $R_W$ is built from $R$,
 by choosing, from each block $B_j$, one tuple $\tau_j^i$. The probability of $R_W$ is: \ $p(R_W) := \Pi p_j^i$. Any other  of subrelation of $R$ has probability $0$. Within a block, the sum, $p$, of the tuples' probabilities is not greater than $1$. If none of the tuples is chosen from the block, the block contributes with the factor $(1-p)$  to the relation's probability.
  The probability of a world $W$ is $P^{\mc{W}}(W) : = \Pi_{_{R \in S}} p(R_W)$. A  \emph{most probable database} (MPDB) is a possible world in $\mc{W}$  with maximum probability.  A {\em tuple-independent PDB} (TID) is a particular kind of BID: Each block has a single tuple with a probability not greater  than $1$.

\vspace{-2mm}
\section{Missingness and Observed DBs}\label{sec:mgs}

 \vspace{-2mm}\h A Missingness Mechanism (MM)
 specifies if and how the occurrence of MVs in a variable depends on the values that other (or the same) variables take \cite{rubin76}.  As in  \cite{mohan21}, we represent MMs with {\em Missingness Graphs} (MGs), i.e. BNs whose  directed edges capture stochastic-dependencies, and how the occurrence of MVs for a variable depends on other variables. Accordingly, MGs become  acyclic graphs subject to the {\em Markov Condition}: Given its parents, a variable is independent from its non-descendant variables \cite{darwicheCommACM,pearl}.

\begin{example}  \label{ex:mgs}
There are MMs commonly  found in practice \cite{rubinBook}. The MG in Figure \ref{fig:mg}(a) corresponds to the MCAR case (missing completely at random) in that variable $\mathbb{I}^C$ does not depend on any other variable, and has absolute probabilities of taking values $0$ or $1$.

\vspace{-2mm}
 \begin{center}   \includegraphics[width=6cm]{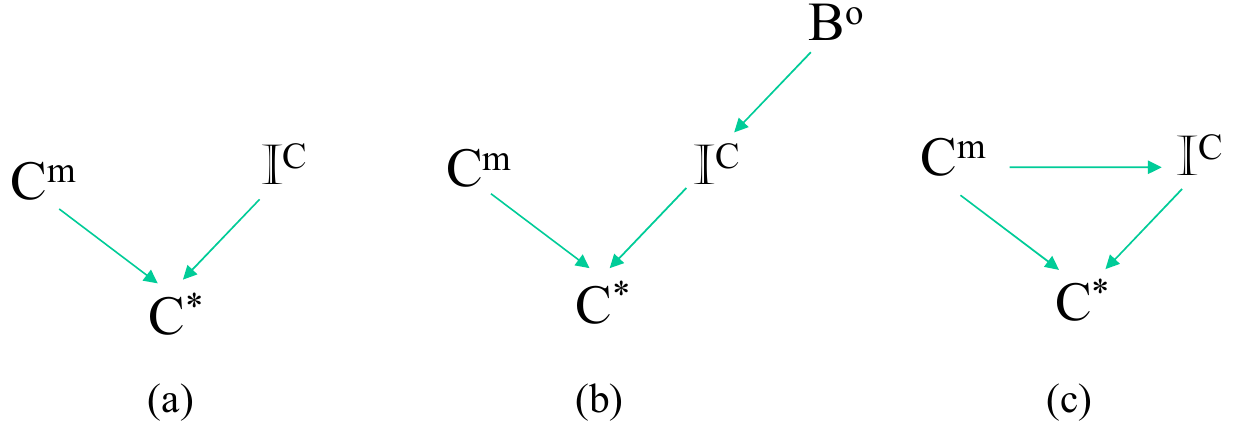}
\end{center}
\vspace{-6mm}\captionof{figure}{Three Missingness Graphs.}\label{fig:mg}

\vspace{2mm}\h The MG in Figure \ref{fig:mg}(b) shows a case of MAR (missing at random):   $C^\star$ stochastically depends on $C^m$ and $\mathbb{I}^C$; while the latter depends on $B^o$.
  As expected, $C^\star$ depends directly on its real, underlying version, $C^m$, and its own  indicator variable.  The BN  has the distributions: \ $P(C^m)$, $P(B^o)$, $P(\mathbb{I}^C|B^o)$, $P(C^\star|C^m,\mathbb{I}^C)$. Due to the  Markov condition, absolute and conditional independencies hold:
$B^o \independent C^m, \ \ \mathbb{I}^C \independent C^m, \ \ (C^\star \independent B^o)|\mathbb{I}^C$.

\h The MG in Figure \ref{fig:mg}(c) shows a case of MNCAR (missing not completely at random); actually of {\em self-censorship}:  $\mathbb{I}^C$ depends on $C^m$. For example, $C^m$ could be $\nit{Salary}^m$, that -with high probability- is not reported when it is very high. When $\mbb{I}^\nit{Salary}$ takes value $1$, $\nit{Salary}^\star$ takes value $\na$. Otherwise, $\nit{Salary}^\star$ takes the real value of $\nit{Salary}^m$.
\boxtheorem \end{example}

\vspace{-3mm}\h In a MG, each variable $X^\star$ is a sink, and has exactly two parents:  $X^m$ and $\mathbb{I}^X$.  $\mathbb{I}^X$ may depend on variables of the forms $Y^o$ and $\mathbb{I}^Z$, for other variables of the form  $Z^m$.  Variables of the form $B^o$ are sources.
\ A MG $\mc{M}$, as a BN, has an underlying {\em joint distribution} $P^\mc{M}$ over all its variables.    In particular, a tuple $\tau$ for the underlying schema $S$ has a marginal probability $P^\mc{M}(\tau)$. We assume variables in a MG are attributes of a relation schema.

\begin{example} \label{ex:new} \ (ex. \ref{ex:second} cont.) \ Consider the observed DB $R^\star$ in Table \ref{tab:dbex0}(a) for the observed schema $S^\star$.  MVs in it are governed by the MG $\mc{M}$ in Figure \ref{table:first}(b). \
 $\mc{M}$ induces the factorized joint distribution:

\vspace{-2mm}
{\footnotesize \begin{equation}
P^\mc{M}(A^o,B^o,C^\star,\mbb{I}^C,C^m)=P^\mc{M}(C^\star~|~C^m,\mbb{I}^C) \times P^\mc{M}(I^C~|~B^o)  \times P^\mc{M}(C^m)    \times P^\mc{M}(B^o) \times P^\mc{M}(A^o). \label{eq:joint}
\end{equation}}

\captionof{table}{\hspace{0.7cm}(a) Observed  \ \hspace{1cm}(b) Underlying  \hspace{1cm}(c) Expanded DB}\label{tab:dbex0} \vspace{-6mm}
\begin{center}
 \scalebox{0.6}{$\begin{tabu}{c|c|c|c|}\hline
 R^\star & A^o & B^o & C^\star\\ \hline
\tau_1& a_1 & 0& c_1\\
\tau_2& a_2& 1& \na\\
\tau_3&a_3 & 0&c_3\\
\tau_4& a_4&0&c_4\\
\tau_5& a_5&1&c_5\\
\tau_6&a_6&1& \na\\ \hhline{~---}
\end{tabu}$}~~~~~~~~~
\scalebox{0.6}{$\begin{tabu}{c|c|c|c|}\hline
 W & A^o & B^o & C^m\\ \hline
 \tau_1& a_1 & 0& c_1\\
 \tau_2& a_2& 1& \underline{c_5}\\
\tau_3&a_3 & 0&c_3\\
\tau_4& a_4&0&c_4\\
\tau_5& a_5&1&c_5\\
\tau_6&a_6&1& \underline{c_2}\\ \hhline{~---}
 \end{tabu}$}~~~~~~~~
\scalebox{0.6}{$\begin{tabu}{c|c|c|c|c|c|c|c|}\hline
R^\nit{ex} & A^o  & B^o & C^\star & \mathbb{I}^C&C^m\\ \hline
\tau_1& a_1 &  0& c_1 & 0&c_1\\
\tau_2& a_2&  1& \na & 1&c_5\\
\tau_3&a_3  &  0&c_3&0&c_3\\
\tau_4& a_4 & 0&c_4 &0&c_4\\
\tau_5& a_5 & 1&c_5 & 0&c_5\\
\tau_6&a_6& 1& \na &1&c_2\\ \hhline{~------}
\end{tabu}$}
 \end{center}

\vspace{-1mm}\h In Table \ref{tab:dbex0}(b), $W$ is an \na-free  instance for the underlying schema $S$,   obtained by imputation of MVs: domain values $c_5$ and $c_2$ (underlined) for $C^m$ replace the MVs. $W$ may coincide or not with the ``true" underlying DB $D$ that we only partially observe.   The ``expanded" relation $R^\nit{ex}$ in Table \ref{tab:dbex0}(c), for an {\em expanded schema} $S^\nit{ex}$, shows a possible sample from the outside reality represented by $\mc{M}$, with values for all variables  in it. $R^\star$ and $R$ can be seen as {\em footprints} of $R^\nit{ex}$.     A ``user" has access only to the observed instance $R^\star$, but queries will be posed in the language  of the underlying schema $S$ (see Sec. \ref{sec:queries}).
 \boxtheorem
\end{example}

\vspace{-4mm}\h {\em We will assume that tuples in observed  $D^\star$  are stochastically independent, as obtained from an independent sample of the real world}.\footnote{If we want to impose tuple correlations, we can use {\em soft constraints} via {\em MarkoViews} \cite{marko,forall,NOW}.} \
$D^\star$ may have duplicate tuples modulo the tids. For auxiliary or intrinsic tids, we assume that they are never missing; and there are no variables for them in a MG.
\ Due to imputations, a possible underlying DB, as $W$ in Example \ref{ex:second},  may also introduce  duplicates. Those duplicates are kept.

\h We  assume that an observed DB $D^\star$ is {\em compliant} with the given MG $\mc{M}$; i.e.  with the (in)dependencies and  induced distributions. We may also assume that each relation has its own MG, which involves variables that appear in the relation's expanded schema. MGs are meant to be used to specify missingness, as opposed to arbitrary and general dependencies.

\h  Finally,  the following basic {\bf Conditions on MGs} hold. \ For arbitrary, but different $c, c^\prime \in \nit{dom}(X^m)$:

\vspace{1mm}
\noindent {\bf (C)} \
  $P^\mc{M}(X^\star=\na|\mbb{I}^X=1,X^m=c, \ldots) = 1, \ P^\mc{M}(X^\star=\na,\ldots|\mbb{I}^X=0,X^m=c,\ldots) = 0,\\
\hspace*{8mm}P^\mc{M}(X^\star=c^\prime,\ldots|\mbb{I}^X=0,X^m=c,\ldots) = 0, \
 P^\mc{M}(X^\star=c|\mbb{I}^X=0,X^m=c, \ldots) = 1$.

 \ignore{
\vspace{1mm}\noindent {\bf (II)} \  $P^\mc{M}(X^\star=\na, \ldots | \mbb{I}^X=1,X^m=c,\ldots) = P^\mc{M}(\ldots~|~\mbb{I}^X=1,X^m=c, \ldots), \
 P^\mc{M}(X^\star=c, \ldots | \mbb{I}^X=0, X^m=c, \ldots)  =
 P^\mc{M}(\ldots~|~\mbb{I}^X=0,X^m=c, \ldots)$.
}

\vspace{1mm}
\h Conditions {\bf (C)}  can be replaced by deterministic {\em structural equations} at $C^\star$ nodes \cite{pearl}: \ $C^\star = \na \mbox{ iff } \mathbb{I}^C=1$, and $C^\star = C^m \mbox{ iff } \mathbb{I}^C=0$. MGs with this property will be called $\mbb{I}$-deterministic. Given the auxiliary role of indicator functions when modeling  the occurrence of MVs, they are in general $\mbb{I}$-deterministic \cite{mohan21}. The relationship between $\mbb{I}^C$ and $C^\star, C^m$ becomes deterministic.
 \ {\em In the rest of this work we will assume that all MGs are $\mbb{I}$-deterministic.}

\ignore{++{\footnotesize \begin{eqnarray} P^\mc{M}(X^\star=\na~|~\mbb{I}^X=1,X^m=c, \ldots) &=& 1, \label{eq:ass1} \\  P^\mc{M}(X^\star=\na,\ldots~|~\mbb{I}^X=0,X^m=c,\ldots) &=& 0, \label{eq:ass2} \\
P^\mc{M}(X^\star=c^\prime,\ldots~|~\mbb{I}^X=0,X^m=c,\ldots) &=& 0,  \label{eq:ass2prime} \\
 P^\mc{M}(X^\star=c~|~\mbb{I}^X=0,X^m=c, \ldots) &=& 1, \label{eq:ass3}
 \end{eqnarray}}

\vspace{-8mm}{\footnotesize \begin{eqnarray}
 P^\mc{M}(X^\star=\na, \ldots | \mbb{I}^X=1,X^m=c,\ldots) &=& P^\mc{M}(\ldots~|~\mbb{I}^X=1,X^m=c, \ldots), \label{eq:ass4}\\
 P^\mc{M}(X^\star=c, \ldots | \mbb{I}^X=0, X^m=c, \ldots) & =&
 P^\mc{M}(\ldots~|~\mbb{I}^X=0,X^m=c, \ldots). \label{eq:ass5}
\end{eqnarray}}
++}

\section{BIDs and Missing Values}\label{sec:genBIDs}

\vspace{-2mm}
 \h An observed DB $D^\star$, together with a MG $\mc{M}$, can be seen as a single representation for a set, $\mc{W}(D^\star)$, of {\em possible worlds} for a BID, $D^p(D^\star,\mc{M})$. They are {\footnotesize \na}-free databases, $W$, for the underlying schema $\mc{S}$. To build the BID, we start by replacing values from $\nit{dom}$ for each occurrence of an   \na \ in $D^\star$. Then,  each tuple with MVs in $D^\star$ gives rise to a  ``block" of  tuples without \na, each of them with a   probability (see Definition \ref{def:blocks}).  \ The following example shows how we assign probabilities to  tuples in a block.

\begin{example} (ex. \ref{ex:new} cont.) \label{ex:new2}  Assume $\nit{dom}(C)= \{c_1, \ldots, c_5\}$. Table \ref{tab:dbex0}(a) gives rise to the relation in Table \ref{tab:blocks}(a).  The first tuple, fully observed, gives rise to a {\em certain} block with one tuple,  with probability $1$. Tuple $\tau_2$ in $R^\star$ gives rise to  a block, $B(\tau_2)$, of five tuples, $\tau_2^1, \ldots, \tau_2^5$, with
 probabilities $p_2^1:=p^{\nit{BID}}(\tau_2^1), \ldots, p_2^5:= p^{\nit{BID}}(\tau_2^5)$; etc.
\ In this way, we obtain a BID as  in Table  \ref{tab:blocks}(b).
\ In  block
$B(\tau_2)$, the probability of tuple $\tau_2^1$ is defined on the basis of (or conditioned to)  the observed values in the same tuple (similarly for the other tuples):
{\footnotesize \begin{equation} p^{\nit{BID}}(\tau_2^1) := P^\mc{M}(C^m=c_1~|~A^o=a_2,B^o=1,C^\star=\na). \label{eq:bid-tuple}
\end{equation}}

\h By choosing one tuple per block, the BID in Table \ref{tab:blocks}(b)
gives rise to a set $\mc{W}(R^\star)$ of possible worlds, $W$, among them, those in Table \ref{tab:blocks}(c),  where the underlined values are obtained by replacing {\footnotesize \na} \ by domain values.
Their probabilities are: \  $P^{\mc{W}}(R_1) = 1\times p_2^1 \times 1 \times 1 \times 1 \times p_6^5$, \ and \ $P^{\mc{W}}(R_2) = 1\times p_2^2 \times 1 \times 1 \times 1 \times  p_6^3$, resp.  \boxtheorem
 \end{example}

 \vspace{-2mm}
 \captionof{table}{\hspace{0.3cm}(a) Blocks \hspace{1.5cm}  (b) A BID \hspace{1.5cm} (c) Two Possible Worlds}\label{tab:blocks}
 \vspace{-5mm}\begin{center}
 \scalebox{0.6}{
$\begin{tabu}{rr|c|c|c|}\hhline{~----}
&R & A^o & B^o & C^m\\ \hhline{~----}
&\tau_1& a_1 & 0& c_1\\
\hhline{-----}
&\tau_2^1& a_2& 1& c_1\\
&\tau_2^2& a_2& 1& c_2\\
B(\tau_2)\hspace*{-2mm}&\tau_2^3& a_2& 1& c_3\\
&\tau_2^4& a_2& 1& c_4\\
&\tau_2^5& a_2& 1& c_5\\
\hhline{-----}
&\cdots& \cdots&\cdots&\cdots
\\ \hhline{~~---}
\end{tabu}$}~~~~~~~~~\scalebox{0.6}{$\begin{tabu}{c|c|c|c||c|}\hline
R^p & A^o & B^o & C^m& p^\nit{BID}\\ \hline
\tau_1& a_1 & 0& c_1&1\\
\hline
\tau_2^1& a_2& 1& c_1&p_2^1\\
\tau_2^2& a_2& 1& c_2&p_2^2\\
\tau_2^3& a_2& 1& c_3& p_2^3\\
\tau_2^4& a_2& 1& c_4&p_2^4\\
\tau_2^5& a_2& 1& c_5&p_2^5\\
\hline
\cdots& \cdots&\cdots&\cdots& \cdots
\\ \hhline{~----}
\end{tabu}$}~~~~~~ \scalebox{0.6}{$\begin{tabu}{c|c|c|c|}\hline
R_1 & A^o & B^o & C^m\\ \hline
\tau_1& a_1 & 0& c_1\\
\tau_2^1& a_2& 1& \underline{c_1}\\
\tau_3&a_3 & 0&c_3\\
\tau_4& a_4&0&c_4\\
\tau_5& a_5&1&c_5\\
\tau_6^5&a_6&1& \underline{c_5}\\ \hhline{~---}
\end{tabu}$}
~~~~
 \scalebox{0.6}{$\begin{tabu}{c|c|c|c|}\hline
R_2 & A^o & B^o & C^m\\ \hline
\tau_1& a_1 & 0& c_1\\
\tau_2^2& a_2& 1& \underline{c_2}\\
\tau_3&a_3 & 0&c_3\\
\tau_4& a_4&0&c_4\\
\tau_5& a_5&1&c_5\\
\tau_6^3&a_6&1& \underline{c_3}\\ \hhline{~---}
\end{tabu}$}
\end{center}

\ignore{XXXX
\vspace{-2mm}

\captionof{table}{Two Possible Worlds.}\label{tab:two}\vspace{-5mm}
\begin{center}
\scalebox{0.6}{
$\begin{tabu}{c|c|c|c|}\hline
R_1 & A^o & B^o & C^m\\ \hline
\tau_1& a_1 & 0& c_1\\
\tau_2^1& a_2& 1& \underline{c_1}\\
\tau_3&a_3 & 0&c_3\\
\tau_4& a_4&0&c_4\\
\tau_5& a_5&1&c_5\\
\tau_6^5&a_6&1& \underline{c_5}\\ \hhline{~---}
\end{tabu}$
~~~~
$\begin{tabu}{c|c|c|c|}\hline
R_2 & A^o & B^o & C^m\\ \hline
\tau_1& a_1 & 0& c_1\\
\tau_2^2& a_2& 1& \underline{c_2}\\
\tau_3&a_3 & 0&c_3\\
\tau_4& a_4&0&c_4\\
\tau_5& a_5&1&c_5\\
\tau_6^3&a_6&1& \underline{c_3}\\ \hhline{~---}
\end{tabu}$
}
\end{center}
XXXXX}

\h By definition fo BID, a {\em possible world} in $\mc{W}(D^\star,\mc{M})$ is an instance $W$ for   schema $S$, with relations $R^W$ that contain, for each $\tau \in R^\star$, only one $\tau^\prime \in B(\tau)$, and nothing more.  Possible worlds  may contain duplicates
which we tell apart via tids. \ In Example \ref{ex:first}, duplicates are generated from $\tau_2$ and $\tau_6$ by replacing \na \ by $c_3$ in the former, and by $a_2$ and  $c_3$ in the latter.

\begin{definition} \em
\label{def:blocks}
Consider an MG $\mc{M}$ and  an observed instance $D^\star$ for schema $S^\star$, and $R^\star$ a relation in $D^\star$ with schema $R(\bar{A}^o,\bar{A}^\star)$, where $\bar{A}^o, \bar{A}^\star$ are lists of fully observed, and possibly taking \na \ attributes, resp.  Let $\tau$ be a tuple in $R^\star$, and $\tau[\bar{A}^o,\bar{A^\prime}^\star]$ its restriction to those attributes without an {\footnotesize \na}, with $\bar{A^\prime}^\star \subseteq \bar{A}^\star$.

\vspace{1mm}\noindent
(a) The {\em block} associated to $\tau$ is the set of tuples for schema $R(\bar{A}^o,\bar{A}^m)$: \ $B(\tau) := \{ \ \tau^\prime~|~\tau^\prime[\bar{A}^o,\bar{A^\prime}^\star] = \tau[\bar{A}^o,\bar{A^\prime}^\star], \mbox{ and},
\mbox{ for each } A \in (\bar{A}^\star \smallsetminus  \bar{A^\prime}^\star), \  \tau^\prime[A] \in \nit{dom}(A^m)\}$.

\vspace{1mm}
\noindent (b)
For $\tau^\prime \in B(\tau)$, its  probability $p(\tau^\prime)$ is the conditional probability: $p^{\nit{BID}}\!(\tau^\prime) := P^\mc{M}(\tau^\prime[\bar{A}^\star \smallsetminus  \bar{A^\prime}^\star]~|~\tau)$.
\ignore{++\begin{equation}
p^{\nit{BID}}\!(\tau^\prime) := P^\mc{M}(\tau^\prime[\bar{A}^\star \smallsetminus  \bar{A^\prime}^\star]~|~\tau).\label{eq:probTuple}
\end{equation}++}
 The probability of the tuple in a singleton block  is $1$.

\vspace{1mm}\noindent (c)
$D^p(D^\star,\mc{M})$ denotes the BID whose relations $R^p$ contain the blocks $B(\tau)$ for $\tau \in R^\star$, and each tuple $\tau^\prime \in R^p$ has probability $p^{\nit{BID}}(\tau^\prime)$.

\vspace{1mm}\noindent (d)
A {\em possible world} associated to $D^\star$ is an instance $W$ for the underlying  schema $S$, with relations $R^W$ that contain, for each $\tau \in R^\star$, only one $\tau^\prime \in B(\tau)$, and nothing more. \ $\mc{W}(D^\star,\mc{M})$ denotes the set of possible worlds; and $P^{\mc{W}}$, its probability distribution. \boxtheorem
\end{definition}

\vspace{-8mm}
\begin{example}\label{ex:connection}  \ (ex. \ref{ex:new2} cont.) \ It is easy to check that, starting from (\ref{eq:bid-tuple}), the  probabilities $p_2^i$ of the tuples $\tau_2^i$,  \ $i=1,\ldots,5$, in block $B(\tau_2)$ in Table \ref{tab:blocks}(b) are as follows:\footnote{Details and additional examples are given in Appendix \ref{app:example}.}

\vspace{-5mm}
{\footnotesize
\begin{eqnarray*}
    p^\nit{BID}(\tau_2^i)\!\!=\!\!\frac{P^\mc{M}(A^o=a_2,B^o=1,C^m=c_i, \mbb{I}^C=1)}{P^\mc{M}(A^o=a_2, B^o=1,\mbb{I}^C=1)} = P^\mc{M}(C^m =c_i).
\end{eqnarray*}}

\vspace{-7mm}\boxtheorem
\end{example}

\vspace{-2mm}
\h Each possible world $W$ of the BID becomes a possible \na-free version of the observed  DB $D^\star$, obtained by multiple imputation \cite{gelman}.
\ Since each possible world $W$ is an instance for the underlying schema $S$, a query $\mc{Q}$ posed to the BID  will be expressed in language of the underlying schema, using only attributes of the forms $A^o$ and $A^m$.
 {\em Then, by definition, {\em posing a query $\mc{Q}$ to the observed database $D^\star$} means querying the associated BID $D^p(D^\star,\mc{M})$.}

\h In this work, we will consider categorical attributes. \ If we assume that the attributes' domains coincide with their active domains (i.e. their values in $D^\star$) or are polynomially-bounded in the size of $D^\star$, the size of $D^p(D^\star,\mc{M})$ is polynomially-bounded in the size of $D^\star$. $\mc{M}$ is bound to be small in comparison with $D^\star$.

\h The number of possible worlds for  $D^p(D^\star,\mc{M})$ can be exponential in the size of $D^\star$. QA becomes hard in data complexity, because QA for BCQ on TIDs, which is $\#P$-hard (in data) \cite{suciu}, can be reduced in polynomial-time (in data) to QA under an observed DB with its MG.

    \begin{theorem} \label{thm:sharp} \em Query answering of BCQs  on observed databases with missing values and its MG via the generated BIDs is $\#P$-hard in data complexity. \boxtheorem
         \end{theorem}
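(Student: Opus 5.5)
We prove the statement by a polynomial-time (in data) reduction from query answering of Boolean conjunctive queries on tuple-independent PDBs (TIDs), which is $\#P$-hard in data complexity for suitable fixed BCQs \cite{suciu}, e.g. $\mc{Q}_0 :\ \exists x\exists y\,(R(x)\wedge S(x,y)\wedge T(y))$. Given a TID $T^p$ with tuples $t$ carrying probabilities $p_t$, we build an observed database $D^\star$ together with MGs such that the generated BID $D^p(D^\star,\mc{M})$ reproduces, for every relevant tuple, the event ``$t$ is present with probability $p_t$''. Then $P(\mc{Q}_0)$ on the TID equals $P(\mc{Q}_0')$ on the BID, for a fixed rewritten query $\mc{Q}_0'$.

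\textbf{Construction.} Each relation $R(\bar X)$ of the TID is turned into an observed relation $R^\star(\bar X^o, E^\star)$, with the original attributes fully observed and one extra attribute $E$ with $\nit{dom}(E^m)=\{0,1\}$. Every TID tuple $t$ yields one observed tuple $(t, \na)$. The query $\mc{Q}_0'$ is $\mc{Q}_0$ with each atom $R(\bar x)$ replaced by $R(\bar x, 1)$, i.e.\ with $E^m=1$; this is still a fixed BCQ with constants from $\nit{dom}$. We must then arrange that
\[
p^{\nit{BID}}\bigl((t,1)\bigr)=P^{\mc M}\bigl(E^m=1 \mid \bar X^o=t,\ E^\star=\na\bigr)=p_t .
\]
The simplest choice is the MCAR-type MG in which $E^m\to E^\star\leftarrow \mbb{I}^E$, $\mbb{I}^E$ has no parents, and $E^m$ has a parent $\bar X^o$ (or a single observed attribute $K^o$ carrying the tid). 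In that MG the conditional above reduces to $P^{\mc M}(E^m=1\mid \bar X^o=t)$, which we set to $p_t$. Since $P^{\mc M}$ is part of the input, and it may depend on data, we instead let the MG be $K^o\to E^m$, where $K^o$ is an observed key column. The CPT then has one row per tuple, so its size is linear in the data. Blocks are $\{(t,0),(t,1)\}$ with probabilities $1-p_t$ and $p_t$, and they are independent. Choosing $(t,1)$ in a world corresponds exactly to $t$ being present in the TID world, and this correspondence is a bijection preserving probabilities. Hence $P(\mc{Q}_0')=P(\mc{Q}_0)$.

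\textbf{Main obstacle.} The delicate point is the dependence of $\mc M$ on the instance. The theorem fixes the query and varies the data, while the paper regards the MG as ``small''. If the MG's CPTs must be independent of $D^\star$, tuple-specific probabilities cannot be encoded. To handle this, we would first try to observe that $\#P$-hardness of $\mc{Q}_0$ on TIDs already holds when all probabilities are $1/2$ (counting version, e.g.\ counting satisfying assignments of monotone bipartite 2-DNF / positive partitioned 2DNF). Then a single fixed MG with $P^{\mc M}(E^m=1)=1/2$, and $E^m$ independent of everything, suffices. In this case Example~\ref{ex:connection}-style reasoning gives $p^{\nit{BID}}=P^{\mc M}(E^m=1)=1/2$, and the MG is constant-size. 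What remains is routine verification that $\mc M$ is $\mbb{I}$-deterministic and satisfies Conditions (C), and that $D^\star$ is compatible with $\mc M$, e.g.\ by choosing $P^{\mc M}(\mbb{I}^E=1)$ positive.
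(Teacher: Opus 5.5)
Your main construction is essentially the paper's proof (Lemma~\ref{lemma:newOne} and Example~\ref{ex:reduc}). You append a binary attribute that is \na{} in every observed tuple, set the MG's conditional for its underlying version given the observed attributes (or a key) to the tuple's TID probability, rewrite each atom with the constant $1$, and conclude from $\#P$-hardness of non-hierarchical self-join-free BCQs on TIDs; like you, the paper accepts an MG whose CPTs are computed from the data.

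Your optional fixed-MG strengthening is worthwhile but mis-justified as stated. The $\#$PP2DNF reduction for $R(x),S(x,y),T(y)$ puts probability $1/2$ only on $R$ and $T$ and probability $1$ on $S$; hardness with \emph{all} probabilities $1/2$ is true, but it is the later Amarilli--Kimelfeld uniform-reliability theorem, not that reduction. The easy fix is to encode the certain $S$-tuples as fully observed tuples with $E^\star=1$: this gives singleton blocks of probability $1$ and keeps the MG constant-size.
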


 \vspace{-3mm}  \h  For the proof, a reduction from QA on TIDs  builds in polynomial-time in data, for a TID $D^{\nit{tid}}$ and a BCQ $\mc{Q}$ for the latter, an MG $\mc{M}$ representing a MAR case of MM, a BID $D^{\nit{bid}}$ (for an associated observed instance with MVs), and a BCQ $\mc{Q}^\prime$, such that the answer to $\mc{Q}$ from $D^{\nit{tid}}$ and that of $\mc{Q}^\prime$ from $D^{\nit{bid}}$ coincide.  $\mc{M}$ and $\mc{Q}^\prime$ are short. $\mc{Q}$ can be chosen to be self-join free and non-hierarchical, for which QA on TIDs is $\#P$-hard \cite{suciu}.  See Appendix \ref{sec:theo1}.

  \ignore{    \comlb{
What about other MMs? For the meta-reviewer it seemed to matter.}}

  \h  Instead of going deeper into the investigation of general QA on the resulting BIDs, we will propose, starting in Section \ref{sec:queries},  an alternative QA semantics. The presence of duplicates will be particularly important.

\vspace{-1mm}
\section{Class-Based Data Semantics}
\label{sec:queries}

\vspace{-1mm}
\h Our data semantics leverages the stochastic and statistical origins of the observed DB $D^\star$.  We adopt a two-dimensional perspective: \ (a) Each possible world is associated with a probability that reflects its stochastic uncertainty. This is what we have so far. \  (b) The second dimension is of a statistical nature: A notion of \emph{compliance}  quantifies how well a possible world aligns with the joint distribution induced by the MG. This second dimension is developed in detail in Section \ref{sec:compliance-pw}. In this section, we prepare the ground, and bring up some relevant issues using our running example.

\vspace{-1mm}
\begin{example}  \label{ex:aggre} (ex. \ref{ex:connection}  cont.)
  Consider the MG in  Figure \ref{table:first}(b),  the observed DB in Table \ref{fig:two}(a), and $\nit{dom}(C^m) = \{0,1,2\}$. \
   Assume: $P^{\mc{M}}(C^m = 0)=\frac{1}{2}$, $P^{\mc{M}}(C^m =1)= P^{\mc{M}}(C^m = 2) =\frac{1}{4}$, that, by Example \ref{ex:connection},  is all we need to build the BID in Table \ref{fig:two}(b). Table \ref{fig:two}(c) shows the most-probable possible world. It has duplicates.

 \vspace{2mm}
\captionof{table}{ \ (a) Observed $D^\star$  \hspace{1.5cm}(b) BID $D^p$ \hspace{2.5cm} (c) MPD w/prob. $(\frac{1}{2})^3$}\label{fig:two} \vspace{-4mm}
 \begin{center}
 {\tiny   $\begin{tabu}{c|c|c|l|}
\hhline{~---}
&A^o & B^o & C^\star \\ \hhline{~---}
\tau_1 &a & 0  & 0  \\ \hhline{~---}
\tau_2 &a & 0  & 0  \\ \hhline{~---}
\tau_3&a & 1  & \na \\ \hhline{~---}
\tau_4&a & 1  & 0  \\ \hhline{~---}
\tau_5&a & 1  & \na \\ \hhline{~---}
\tau_6&a & 1  & 1  \\ \hhline{~---}
\tau_7&a & 1  & \na \\ \hhline{~---}
\tau_8&a & 1  & 2  \\ \hhline{~---}
\end{tabu}$}~~~~~~~~~~~
\scalebox{0.6}{
\begin{tabular}{l|l|l|l||l|}
\cline{2-5}
              & $A^o$ & $B^o$ & $C^m$ & $p^\nit{BID}$ \\ \hline
$\tau_1$                  & $a$ & 0  & 0  & 1 \\ \hline
$\tau_2$                  & $a$ & 0  & 0  & 1 \\ \hline
\multirow{3}{*}{$B(\tau_3)$} & $a$ & 1  & 0  &  $1/2$ \\ \cline{2-5}
                    & $a$ & 1  & 1  & $1/4$   \\ \cline{2-5}
                    & $a$ & 1  & 2  & $1/4$  \\ \hline
$\tau_4$                  & $a$ & 1  & 0  & 1 \\ \hline
\multirow{3}{*}{$B(\tau_5)$} & $a$ & 1  & 0  & $1/2$  \\ \cline{2-5}
                    & $a$ & 1  & 1  &  $1/4$ \\ \cline{2-5}
                    & $a$ & 1  & 2  & $1/4$  \\ \hline
$\tau_6$                  & $a$ & 1  & 1  & 1 \\ \hline
\multirow{3}{*}{$B(\tau_7)$} & $a$ & 1  & 0  & $1/2$  \\ \cline{2-5}
                    & $a$ & 1  & 1  &  $1/4$ \\ \cline{2-5}
                    & $a$ & 1  & 2  &  $1/4$ \\ \hline
$\tau_8$                  & $a$ & 1  & 2  & 1 \\ \hline
\end{tabular}}~~~~~~~~~~~
{\tiny $\begin{tabu}{c|c|c|c|}
\hhline{~---}
&A^o & B^o & C^m \\ \hhline{~---}
\tau_1&a & 0  & 0 \\ \hhline{~---}
\tau_2&a & 0  & 0 \\ \hhline{~---}
\tau_3^1&a & 1  & 0 \\ \hhline{~---}
\tau_4&a & 1  & 0 \\ \hhline{~---}
\tau_5^1&a & 1  & 0 \\ \hhline{~---}
\tau_6&a & 1  & 1 \\ \hhline{~---}
\tau_7^1&a & 1  & 0 \\ \hhline{~---}
\tau_8&a & 1  & 2 \\ \hhline{~---}
\end{tabu}$}
\end{center}

 \h  Table
\ref{tab:ex-worlds} shows all possible worlds. World $W_1$ in it is that shown in Table \ref{fig:two}(c). We use the  notation $W^v$ for possible worlds where $v$ is a vector of length $n$, the number of non-singleton blocks in the BID, with values from $\nit{dom}(C^m)$. Here, $n=3$. World $W^{[0,0,0]}$ is obtained by choosing $0$ for the MVs for $C^m$ in each of the three blocks of the BID. Similarly, $W^{[1,2,0]}$ is obtained by choosing $1$ for block 1; $2$ for  block 2; and $0$ for block 3. The last column of Table \ref{tab:ex-worlds} shows the probability $P^\mc{W}(W)$ of  each world $W$. \boxtheorem
\end{example}

\vspace{-8mm}
 \begin{center}
 \captionof{table}{ \ Possible Worlds' Probabilities}
     \label{tab:ex-worlds}\vspace{-2mm}
\tiny{
$
 \begin{array}{|c|c|l|}
 \hline
  \mathbf{World} &
 \mathbf{Notation} &  ~~~\mathbf{P^\mc{W}(W)}   \\ \hline
W_1 & W^{[000]} &  (1/2)^3=\mbf{0.125} \\ \hline
W_2 & W^{[001]} &  (1/2)^2\times 1/4=0.06  \\ \hline
W_3 &  W^{[002]}  &(1/2)^2\times 1/4=0.06  \\ \hline
W_4 &  W^{[010]}  & (1/2)^2\times 1/4=0.06 	  \\ \hline
 %
 %
\ldots &  \ldots  & \ldots  \\ \hline
 %
 %
 %
 %
 %
 %
W_{20} & W^{[201]} & (1/4)^2\times 1/2=0.03 \\ \hline
W_{21} &  W^{[202]}  & (1/4)^2\times 1/2=0.03 \\ \hline
W_{22}  & W^{[210]}  &(1/4)^2\times 1/2=0.03 \\ \hline
\ldots &  \ldots  & \ldots  \\ \hline
%
 %
W_{26}  & W^{[221]}  &(1/4)^3=0.015 	 \\ \hline
W_{27} & W^{[222]}  & (1/4)^3=0.015   \\ \hline
 \end{array}$}
\end{center}

\ignore{
\vspace{-2mm}
\subsection{A Class-Based Data and QA Semantics}\label{sec:classes}
}

\h The example shows that some  possible worlds coincide as multisets, except for the tids, for example,  $W^{[002]}$ and $W^{[020]}$. \
Given the {\em accidental} coincidence of worlds due to the implicit imputation process, we
will group together worlds into {\em classes}, each class containing the worlds that coincide as multisets modulo tids. Possible worlds in a same class return the same answer to a query.

\vspace{-1mm}
\begin{definition} \label{def:matching} \em (a) Possible worlds $W,W^\prime \in \mc{W}(D^\star,\mc{M})$ obtained from a BID  are  {\em matching} if they become the same multiset instance when stripped from tids. \
(b) A {\em class of possible worlds}, $C$, is a maximal subset (under set-inclusion) of $\mc{W}$ where all  worlds in it are matching with each other. \  $\mc{C}(D^\star,\mc{M})$ denotes the collection of all classes. \
(c) The probability of a class $C$ is: \ $
P^\mc{C}\!(C) \ := \  \sum_{W \in C} P^\mc{W}\!(W)$. \
(d) $C$ is a {\em most probable class} (MP-class) if $P^\mc{C}(C)$ takes a maximum  in $\mc{C}(D^\star,\mc{M})$.   \ (e) Given a query $\mc{Q}$ (in the  the language of  schema $S$), and a class $C \in \mc{C}(D^\star,\mc{M})$, the answer to $\mc{Q}$ from $C$ is \ $\mc{Q}[C] := \mc{Q}[W]$, with arbitrary $W \in C$, and its probability is $P^\mc{C}(C)$.
 \boxtheorem
\end{definition}

\vspace{-6mm}
\begin{example} \label{ex:aggre+}  (ex. \ref{ex:aggre} cont.) Table \ref{tab:classes2prime} shows the different classes, the worlds they contain,  and their probabilities. \ With the aggregate query $\mc{Q}\!: \! \nit{sum}(C^m)$ over the partially observed attribute $C^m$, we obtain the  answers in the right-most column of Table \ref{tab:classes2prime}.
\ignore{$
\{\langle \nit{sum} \! = \! 3; \ 0.125 \rangle,\ \langle \nit{sum} \! = \! 4; \ 0.1875 \rangle,\ \langle \nit{sum} \! = \! 5; \ 0.28125 \rangle,\ \langle \nit{sum} \! = \! 6; \ 0.203125 \rangle,\ \langle \nit{sum} \! = \! 7; \ 0.140625 \rangle,\ \langle \nit{sum} \! = \! 8; \ 0.046875 \rangle,\ \langle \nit{sum} \! = \! 9; \ 0.015625 \rangle\}$.}
 Answer  $\nit{sum} = 5$ is obtained from class $C_1$,  one of the most probable classes, with probability $0.188$ on it.
\boxtheorem
\end{example}

\vspace{-7mm}
\begin{center}
\captionof{table}{Classes and Answers  (bag semantics).}\label{tab:classes2prime}
\vspace{-2mm}\scalebox{0.6}{ $\begin{tabu}{|c|c|c|c|c|}
 \hline
  \text{Classes}     & \text{Worlds}  & P^\mc{C}(C) &   \nit{sum}(C^m)[C]   \\ \hline
C_1 & W^{[002]},  W^{[020]},   W^{[200]} 	& 0.188 & 5 \\ \hline
C_2 & W^{[011]},  W^{[101]}, W^{[110]}  &	0.094  & 	 5 \\ \hline
C_3 &  	W^{[012]}, W^{[021]}, W^{[102]},  &0.188&  6\\
&W^{[120]},W^{[201]},W^{[210]}& &
\\ \hline
C_4 & W^{[111]} 	&0.016 &  6 \\ \hline
 C_5 & W^{[001]}, W^{[010]}, W^{[100]} &0.188	 & 4	\\ \hline
C_6 & W^{[022]}, W^{[202]}, W^{[220]} & 0.094	  & 7	\\ \hline
C_7 & W^{[112]}, W^{[121]}, W^{[211]}  & 0.047	  & 7	\\ \hline
C_8 & W^{[000]} 	 & 0.125   & 3\\ \hline
C_9 & W^{[122]}, W^{[212]}, W^{[221]} & 0.047  & 8	\\ \hline
C_{10} & W^{[222]}  & 0.016 4 & 9	 \\ \hline
 \end{tabu}$}
\end{center}

\ignore{
\vspace{-3mm}
\begin{example} \label{ex:sum} (ex. \ref{ex:aggre+} cont.)
  Table \ref{tab:classes2prime} shows the classes obtained from  the worlds in Table \ref{tab:ex-worlds}.
  \ The query answers  from the  classes are in the last column of Table \ref{tab:classes2prime}. All worlds in the same class yield the same answer for every query, under the set or bag semantics.
  \  Classes $C_1$ and $C_2$ have different probabilities. Worlds in a same class may have different global probabilities.
\boxtheorem
\end{example}
}

\vspace{-1mm}
\begin{remark} \label{rem:canon} ({\bf canonical representation of classes})
Given the BID $D^p(D^\star,\mc{M})$, with a set of $n$ blocks $\mc{B} = \{B_1, \ldots, B_n\}$, including singleton blocks, a class $C$ of worlds is determined by the multiplicities of the tuples they contain.  Let $T = \langle t_1,\ldots, t_m\rangle$, called the {\em support} of the BID, be the {\em vector of distinct} tuples appearing in the BID (without considering the tids).  So, $T$ has a fixed enumeration of tuples. (We will still use the notation $t \in T$ and $|T|$.)
Let $n_j$ denote the multiplicity of tuple $t_j$ across the blocks in $\mc{B}$, i.e. its  number of occurrences in the BID. Accordingly, a class $C$ is uniquely determined by a vector of positive integers $\mathbf{k}=\langle k_1, \ldots, k_m\rangle$, with $\sum_{j \in [1,m]} k_j=n$, where each $k_j \leq n_j$ is the number of occurrences of  $t_j$ in $C$. This class is denoted with $C_{\mathbf{k}}$. \ We denote with $\nit{adm}(\mathbf{k})$
 the fact that $\mathbf{k}$ is {\em admissible}, that is, there is a class $C_{\mathbf{k}}$ for $D^p(D^\star,\mc{M})$. When {\em searching} for such a $\mathbf{k}$, i.e. searching for a class with good properties, we have to check admissibility.
\boxtheorem
\end{remark}

\vspace{-6mm}
\begin{example} (ex. \ref{ex:aggre} cont.) \label{ex:support} Consider the BID in Table \ref{fig:two}(b). Here, $n = 8$, and $T = \langle(a,0,0),$ $ (a,1,0),$ $  (a,1,1), (a,1,2)\rangle$, in this order, with  $m=4$. For  $t_3 \ = (a,1,1)$,  $n_3 = 4$. Class $C_2$ contains $W^{[011]} = \{(a,0,0),$ $ (a,1,0), (a,0,0), (a,1,0), \ (a,1,1), \ (a,1,1),$ $ (a,1,2), (a,1,1)\}$, and its matching worlds (see Table \ref{tab:classes2prime}). $C_2$  is characterized by the vector $\mathbf{k} = \langle 2,2,3,1\rangle$, and denoted $C_{\langle 2,2,3,1\rangle}$. \boxtheorem
\end{example}

\ignore{\vspace{-4mm}
\begin{remark} \label{rem:admiss} Given a BID $D^p(D^\star,\mc{M})$ as in Remark \ref{rem:canon}, and a vector of integers $\mathbf{k}$, we denote with $\nit{adm}(\mathbf{k})$
 the fact that $\mathbf{k}$ is admissible, that is, there is a class $C_{\mathbf{k}}$ for $D^p$. In particular, $\sum_{j=1}^m k_j = n$.  When {\em searching} for such a $\mathbf{k}$, in essence searching for a class with good properties, we will assume admissibility can be checked (or, equivalently, that there are cardinality constraints in place that enforce it). We expect the BID to be clear from the context.\boxtheorem
\end{remark}}

\vspace{-3mm}
\h Now, we can consider a set $\mc{C}^\nit{pref} \subseteq \mc{C}$ of {\em preferred classes}, those with a desired property.  For example, we already have the {\em most-probable classes}. In Section \ref{sec:compliance-pw}, we will consider  {\em most-compliant classes}.  QA can be defined in general, on  an arbitrary set of preferred classes.

\vspace{-1mm}
\begin{definition} \label{def:preferred} \em (Preferred QA-Semantics)  Given $\mc{C}^\nit{pref} \subseteq \mc{C}(D^\star,\mc{M})$ and a
relational query $\mc{Q}$: \   The {\em set of all preferred answers}  is:
\ $\nit{Ans}(\mc{Q},\mc{C}^\nit{pref}) \ := \ \{\langle \mc{Q}[C], P^\mc{C}(C)\rangle \ | \ C \in \mc{C}^\nit{pref\!}\}$.
\boxtheorem
\end{definition}

\vspace{-5mm}
\h Several computational problems arise in relation to this general formulation of QA, and the notion of preferred class.  Some problems will be presented in  Section \ref{sec:complexity}, where we will concentrate mostly on class-related computational problems, leaving QA aside, which is easier than computing classes with certain properties.

\vspace{-2mm}
\section{Possible-\!World Compliance} \label{sec:compliance-pw}

\vspace{-1mm}
\h We can go beyond the purely probabilistic dimension by introducing a new and natural dimension for data and QA semantics: \textit{world compliance}. It quantifies how well a possible world conforms to the joint distribution induced by the MG. \ In  Section \ref{sec:mpdbs}, we will   develop the probabilistic dimension.

\h Some possible worlds (or classes thereof)  may be {\em more compliant} than others w.r.t. the underlying MG $\mc{M}$, with compliance as a measure {\em statistical fidelity}. It can be cast in terms of a {\em statistical distance}\footnote{We use the term ``statistical distance" for a measure of discrepancy between two probability distributions,  encompassing both metric distances, and divergence measures. We will simply talk about ``a distance".} between the data distribution of a world $W$ (and of those in its class) and the  distribution induced by $\mc{M}$.
\ In our running example,
some of the worlds that contribute to the most probable answer, $\nit{sum} = 5$, such as  $W^{[110]}$ and  $W^{[101]}$, may not be the most  compliant. \ignore{Later in this Section, we will define, as particular cases of Definition \ref{def:preferred}, some {\em preferred classes}, $\mc{C}^\nit{mc} \subseteq \mc{C}(D^\star,\mc{M})$,  that contain worlds that are most compliant.}

 \h In order to define compliance,  we start with the {\em empirical distribution} of a world (which is the same for all the worlds in its class). It is based on the multiset nature of worlds.

 \begin{definition} \label{def:emp} \em
 Given a BID $D^p(D^\star,\mc{M})$, and its {\em support} $T$ (see Remark \ref{rem:canon}), the {\em empirical distribution}, $P_{W}^{\!E}$, of a possible world $W \in \mc{W}(D^\star,\mc{M})$ as a multiset is, for $t \in T$: \ $P_{W}^{E}(t)  :=  \nit{mult}^W\!(t)/||W||$,
with $\nit{mult}^W\!(t)$ the multiplicity of $t$ in $W$, and $||W||$ is the bag-cardinality of $W$, i.e. counting duplicates.  \boxtheorem
\end{definition}

\vspace{-4mm}
\h Recall that the support $T$ does not have duplicates. A tuple in it that does not belong to a particular $W$ has empirical probability $0$ in that world.
The empirical distribution  will be compared with  $P^\mc{M}$, the distribution {\em induced} by the MG $\mc{M}$, as a marginal for the underlying schema $S$.  The comparison is made considering only tuples in $T$.

\vspace{2mm}
\captionof{table}{ \ A World and its Distributions}\label{tab:distr} \vspace{-5mm}
\begin{center}
\scalebox{0.6}{\begin{tabular}{l|l|l|l|}
\cline{2-4}
        $W^{[002]}$      & $A^o$ & $B^o$ & $C^m$ \\ \hline
$\tau_1$                  & $a$ & 0  & 0  \\ \hline
$\tau_2$                  & $a$ & 0  & 0  \\ \hline
$B(\tau_3)$ & $a$ & 1  & 0  \\  \hline
$\tau_4$                  & $a$ & 1  & 0 \\ \hline
$B(\tau_5)$ & $a$ & 1  & 0   \\ \hline
$\tau_6$                  & $a$ & 1  & 1  \\ \hline
$B(\tau_7)$
                    & $a$ & 1  & 2  \\ \hline
$\tau_8$                  & $a$ & 1  & 2   \\ \hline
\end{tabular}}~~~~~~~~~
\scalebox{0.6}{\begin{tabular}{l|l|l|l||c|c|}
\cline{2-6}
           $T$  &\!$A^o$\!&\!$B^o$\!&\!$C^m$\!&\!$P^E_{\!W[002]}$\!&\!$P^\mc{M}$\!\\ \hline
$t_1$                  & $a$ & 0  & 0 & 1/4 &0.225\\ \hline
$t_2$ & $a$ & 1  & 0  & 3/8& 0.225\\   \hline
$t_3$                  & $a$ & 1  & 1  & 1/8 &0.1125\\ \hline
$t_4$
                    & $a$ & 1  & 2 & 1/4 & 0.1125\\ \hline
\end{tabular}}
\end{center}

\vspace{-4mm}
\begin{example}\label{ex:emp} (ex. \ref{ex:support} cont.) Consider  world $W^{[002]}$ in Table \ref{tab:distr} belonging to class $C_1$ in Table \ref{tab:classes2prime}; and its  empirical and $\mc{M}$-induced distributions on the set of its four different tuples, those belonging to support $T$. \
The two other worlds in  $C_1$ share the same empirical distribution. \ignore{Since all the tuples shown in $D^p$ belong to this world, the support is $T = \{t_1,t_2,t_3,t_4\}$.}

\h  Let us assume, consistently with Example \ref{ex:aggre}, that $\nit{dom}(C^m) = \{0,1,2\},$ $ \nit{dom}(A^o) = \{a,b\},$ $ \nit{dom}(B^o) = \{0,1\}, \nit{dom}(C^\star) = \{0,1,2,\na\}$, and also: {\footnotesize $
P^\mc{M}(C^m=0)=1/2,
P^\mc{M}(C^m=1) = P^\mc{M}(C^m=2) = 1/4,
P^\mc{M}(B^o=0) = P^\mc{M}(B^o=1) = 1/2,
P^\mc{M}(A^o=a)=0.9,  \
P^\mc{M}(A^o=b) = 0.1$}.

 \h As noted in Example \ref{ex:aggre}, with these probabilities we can compute the induced probabilities (using conditions (I) and (II) on MGs, and   (\ref{eq:joint})): \  For $x \in \{a,b\}, y \in \{0,1\}, z \in \{0,1,2\}$:\ignore{\footnote{Details in Example \ref{ex:emp2} in the Appendix.}}
{\footnotesize $P^\mc{M}(A^o\!=\!x,B^o\!=\!y,C^m\!=\!z)= \!\!\!\!\!\sum\limits_{u \in \{0,1\}, v \in \{0,1,2,\na\}} \hspace{-8mm}P^\mc{M}(A^o\!=\!x,B^o\!=\!y,C^m\!=\!z,\mbb{I}^C\!=\!u, C^\star\!=\!v)
=
P^\mc{M}(C^m\!=\!z) \ \times$ $P^\mc{M}(B^o\!=\!y) \! \times \! P^\mc{M}(A^o\!=\!x)$}.
\ignore{ Accordingly, {\footnotesize $P^\mc{M}(a,0,0) = P^\mc{M}(a,1,0) = 0.9 \times 1/2 \times 1/2= 0.225$,  $P^\mc{M}(a,1,1) = P^\mc{M}(a,1,2) = 0.9 \times 1/2 \times 1/4= 0.1125$}.} We obtain the last column, $P^\mc{M}$, in Table \ref{tab:distr}, also shared with the other worlds in class $C_1$.
\boxtheorem
\end{example}

\vspace{-2mm}
\noindent {\bf Distance-Based Compliance.}
\ If  we have  an abstract measure of {\em distance}, $d(\cdot,\cdot)$,  between an empirical distribution   and  the joint distribution $P^\mc{M}$, we can define compliance.

\begin{definition} \label{def:dist} \em
 (a) The {\em  compliance degree} of $C \in \mc{C}(D^\star,\mc{M})$ is: \
    $d^c(C, \mc{M})  \ := \ d(P^E_W,P^\mc{M}), \ \mbox{ with any } W \in C$. \
(b) $C$ is a {\em most-compliant class} (an MC-class) w.r.t. $d$ if $d^c(C, \mc{M})$ takes a minimum value in $\mc{C}(D^\star,\mc{M})$.  $\mc{C}^\nit{mc}_{\!d}$ denotes the collection of MC-classes relative to $d$.
 \boxtheorem
\end{definition}

\vspace{-5mm}\h The distance in (a) is well defined since all worlds in a class  have the same empirical distribution. \
Several distances between probability distributions offer themselves to define the compliance degree, among them the {\em Kullback-Leibler Divergence} (KLD) \cite{larry}:

\vspace{-2mm}
{\footnotesize \begin{equation}
d^{\mathrm{KL}\!}(P^E_W, P^\mc{M}) \ := \ \mathrm{Div_{KL}}(P_W^E \parallel P^\mc{M})
\ := \ \sum_{t \in T} P_W^E(t) \, \ln \frac{P_W^E(t)}{P^\mc{M}(t)}, \label{eq:kld}
\end{equation}}

\vspace{-2mm} so as the {\em Euclidean Distance}, $d^{\nit{EU}}$\ignore{ \cite{hastie}}\!\!.  For a class $C$, $\nit{KLD}(C)$ denotes the KL-divergence in common to all worlds in $C$ to $P^\mc{M}$.
Any two matching worlds, $W, W^\prime$, become equally compliant, but they may have different global probabilities, $P^\mc{W}(W)$ and $P^\mc{W}(W^\prime)$.

\begin{center}
\captionof{table}{Classes and Compliance  (bag semantics). }\label{tab:classes2}
\vspace{-2mm}\scalebox{0.6}{
 $\begin{tabu}{|c|c|c|c|}
 \hline
  \text{Classes} \  C   & \text{Worlds}  & P^\mc{C}(C) &  \nit{KLD}(C)   \\ \hline
C_1 & W^{[002]},  W^{[020]},   W^{[200]} 	& 0.188 & 0.431\\ \hline
C_2 & W^{[011]},  W^{[101]}, W^{[110]}  &	0.094 &0.518\\ \hline
C_3 &  	W^{[012]}, W^{[021]}, W^{[102]},
 &0.188&    0.452\\
&W^{[120]},W^{[201]},W^{[210]} & &
\\ \hline
C_4 & W^{[111]} 	&0.016&   0.711\\ \hline
 C_5 & W^{[001]}, W^{[010]}, W^{[100]} &0.188	&  0.431\\ \hline
C_6 & W^{[022]}, W^{[202]}, W^{[220]} & 0.094	&  0.711\\ \hline
C_7 & W^{[112]}, W^{[121]}, W^{[211]}  & 0.047	&  0.929\\ \hline
C_8 & W^{[000]} 	 & 0.125 & 	 0.431 \\ \hline
C_9 & W^{[122]}, W^{[212]}, W^{[221]} & 0.047 &  0.972\\ \hline
C_{10} & W^{[222]}  & 0.016 &  1.111\\ \hline
 \end{tabu}$
 }
\end{center}

\begin{example} \label{ex:sum+} (ex. \ref{ex:emp} cont.) With (\ref{eq:kld}), we can compute the classes' KL-divergences to their induced distributions.
 Table \ref{tab:classes2}  shows them.
 The most-compliant classes  are $C_1, C_5, C_8$. The most probable classes are $C_1, C_3, C_5$.
\boxtheorem
\end{example}

\vspace{-3mm}\noindent {\bf Notation:} Consistently with Definition \ref{def:dist}(b), $\mc{C}^\nit{mc}_\nit{KL}(D^\star,\mc{M})$ and $\mc{C}^\nit{mc}_\nit{EU}(D^\star,\mc{M})$ denote the set of MC-classes with respect to the KL- and Euclidean distances, resp.

\vspace{2mm}
\noindent {\bf Goodness-of-Fit Compliance.} \
An alternative take  on possible-world compliance is based on {\em hypothesis testing}. Given  $W \in \mc{W}$ as a sample, we {\em test the hypothesis}, $H_0$, that  it  fits the induced distribution $P^\mc{M}$. \ We use the $\chi^2$-statistic, with $T = \{t_1, \ldots, t_m\}$:

\vspace{-1mm}
{\scriptsize $$\chi^2(W) := \sum_{i=1}^{m} \frac{(P^E_W(t_i) \ - \ P^\mc{M}(t_i))^2}{P^\mc{M}(t_i)},$$}

\vspace{-1mm}which, under $H_0$, has approximately a $\chi^2_{m-1}$-distribution \cite{larry}. It can be seen as a distance, actually a measure of the {\em relative square deviation} of $P^E_W$ from $P^\mc{M}$.

\h In order to compare worlds, we can use a {\em compliance order} based on the $p$-values for the test:    $p^{\!V\!\!}(W) \ := \ P_{H_0}(\chi^2 \ \geq \  \chi ^2(W))$, that is the probability (under $H_0$) that $\chi^2$ -as a random variable- is at least as contradictory to $H_0$ as the value of $\chi^2(W)$. It is defined by: $
W_1 <^\mc{M}_\nit{pv} W_2$ iff $p^{\!V\!\!}(W_1) \ < \ p^{\!V\!\!}(W_2)$. \ In this way, we have the family $\mc{C}^\nit{mc}_{pv}$ of most-compliant classes based on  the {\em p-value of the $\chi^2$-test}.

\vspace{-2mm}
\section{Computing Most-Compliant Classes}
\label{sec:complexity}

\vspace{-1mm}
\h With a general notion of    most-compliant  class, we can turn to
 computational problems, {\em the MC-problems}.   They will be formulated and addressed  using  the class representation in Remark \ref{rem:canon}. In particular,
we consider   classes $C_{\mathbf{k}}$, and a representative world therein, $W_{\mathbf{k}}$, that  can also be seen as encoding its class's degree of compliance.

\ignore{\footnote{These semantics provide alternative perspectives on QA, depending on whether the focus is on statistical likelihood or distributional fidelity.}}
\ignore{++Among them,  given a BID $D^p(D^\star,\mc{M})$, compute a most-compliant (MC) class, or compute a most- probable (MP) class. Given also a
 query $\mc{Q}$,  compute a world $W$ from a MC-class class $C$, and return $\mc{Q}[W]$ and $P^\mc{C}(C)$. Similarly with $C$ a MP-class. We may also want to do this for all MC- or all MP-classes. \
 Diverse computational problems++} \ignore{++emerge, among them, decision, functional, optimization, enumeration \cite{strozecki2023}, and counting problems \cite{toran91}.++}

\ignore{++\h We will concentrate first and mostly on most-compliant-related computational problems; and next we will provide some results on the most-probable-related problems.   They++}


\h The notion of {\em convex-separable} distance \cite{ahuja1993},  $d(\cdot,\cdot)$, will be
critical. Intuitively, it  can be computed by
aggregating independent contributions associated with individual tuples. In this way,
the total distance decomposes as a sum of per-tuple convex terms:
$d(C_{\mathbf{k}}, \mathcal{M})=\sum_{j=1}^m d_j\!\bigl(k_j,
P^\mathcal{M}(t_j)\bigr)$, where each $d_j$ is a convex function capturing the
local cost of the assignment made to tuple $t_j$. The distance is
\emph{strictly} convex-separable when, in addition, each $d_j$ is strictly
convex in $k_j$. Equivalently, the discrete marginal contributions
$(d_j(k_j{+}1,\cdot)-d_j(k_j,\cdot))$ are strictly increasing.

\h Many classical measures of discrepancy are convex-separable. These
include the $f$-divergences~\cite{ali1966} --- such as the KL-divergence,
the (squared) Hellinger distance, the total variation distance \cite{villani},
and the $\chi^2$-statistic --- as well as the
$L^p$ distances~\cite{royden1988}, for $1 \leq p < \infty$, in particular
the $L^1$ Manhattan distance and the squared $L^2$ Euclidean distance.
All of these are strictly convex-separable, except for the total variation
and the $L^1$ distance, which, being piecewise linear, are convex but not
strictly so.

\h Theorem \ref{th:complresultsmcc} below introduces both the computational problems and their computational complexities, in data complexity, i.e. in the size of the BID; and then, also in $|D^\star|$.

\begin{theorem} \label{th:complresultsmcc} \em
(MC-related problems)
Assume $D^p(D^\star,\mc{M})$ is a BID with $n$ blocks and a support $T$ with $m$
distinct tuples. For a  convex-separable distance, $d$, and the associated
family $\mc{C}_d^\nit{mc}$ of most-compliant classes, the following are
computational problems and their computational complexities:

\vspace{1mm}\noindent
(a) \ \textbf{MCC} (MC-Class): Compute an admissible\footnote{That is,
$\nit{adm}(\mathbf{k}^\star)$ holds, as in Remark \ref{rem:canon}; in particular
$\sum_{j=1}^{m} k_j = n$. } class-vector $\mathbf{k}^\star \in \mathbb{N}^{m}$ that minimizes the
compliance-distance (or maximizes compliance): \
$C_{\mathbf{k}^\star} \in \argmin_{\mathbf{k} \in \mathbb{N}^m,\ \sum_{j=1}^{m}
k_j = n} d^c(C_{\mathbf{k}}, \mc{M})$.
\ It holds: \ \textbf{MCC} is in \textbf{FP} (it can be computed in polynomial time).

\vspace{1mm}\noindent (b) \ \textbf{\#MCC} (Counting MC-Classes): Count the
class-vectors $\mathbf{k}^\star$ such that $C_{\mathbf{k}^\star}$ minimizes
$d^c(C_{\mathbf{k}},\mc{M})$.
\ If $d$ is strictly convex-separable, it holds: \ \textbf{\#MCC} is \textbf{\#P-complete}.

\vspace{1mm}\noindent (c) \ \textbf{MCC-Enum} (Enumerating MC-Classes):
Enumerate all class-vectors $\mathbf{k}^\star$ such that $C_{\mathbf{k}^\star}$
minimizes $d^c(C_{\mathbf{k}},\mc{M})$.
\ It holds: \ \textbf{MCC-Enum} is in \textbf{DelayP}.

\vspace{1mm}\noindent (d) \ \textbf{\#MCW} (Counting MC-Worlds): Count the worlds
$W$ minimizing $d^c(W,\mc{M})$, i.e.\ belonging to some MC-class.
\ \ If $d$ is strictly convex-separable, it holds: \ \textbf{\#MCW} is \textbf{\#P-complete}.

\vspace{1mm}\noindent (e) \ \textbf{\#Class} (Class Cardinality): Given a
class-vector $\mathbf{k}$, count the worlds of $C_{\mathbf{k}}$, i.e.\ compute
$|C_{\mathbf{k}}|$. \ \ \textbf{\#MCClass} (MC-Class Cardinality): given the BID,
compute $|C_{\mathbf{k}^\star}|$ for a most-compliant $C_{\mathbf{k}^\star}$.
\ \ If $d$ is strictly convex-separable, it holds: \ \textbf{\#Class} and \textbf{\#MCClass} are \textbf{\#P-complete}.
\boxtheorem
\end{theorem}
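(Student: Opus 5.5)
The plan is to model admissibility of a class-vector $\mathbf{k}$ as a bipartite $b$-matching, or transportation problem. On the left are the $n$ blocks $B_1,\ldots,B_n$, each with supply $1$. On the right are the $m$ support tuples $t_1,\ldots,t_m$. There is an edge $(B_i,t_j)$ iff $t_j$ occurs (modulo tids) in $B_i$. A possible world picks one tuple per block, i.e.\ it is an assignment of each block to an adjacent $t_j$. Its class-vector is the vector of in-degrees on the right. Hence $\nit{adm}(\mathbf{k})$ holds iff this bipartite graph has an assignment with right-degrees exactly $k_j$. By Hall's theorem, or directly by max-flow, this is decidable in polynomial time. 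Moreover, $|C_{\mathbf{k}}|$ equals the number of such assignments.

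For (a), we build a flow network: source $\to B_i$ with capacity $1$, $B_i\to t_j$ with capacity $1$ on edges, and $t_j\to$ sink with capacity $n_j$. The flow through $t_j$ carries the convex cost $d_j(k_j,P^\mc{M}(t_j))$. Since $d$ is convex-separable, this is a separable convex-cost flow problem \cite{ahuja1993}. We replace each $t_j\to$ sink arc by $n_j$ unit arcs whose costs are the marginal increments $d_j(\ell{+}1,\cdot)-d_j(\ell,\cdot)$. Convexity makes these increments nondecreasing, so a min-cost flow of value $n$ fills them in order, and its cost equals $d^c(C_{\mathbf{k}},\mc{M})$ up to the constant $\sum_j d_j(0,\cdot)$. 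Min-cost flow is polynomial, and the network has size polynomial in $n\cdot m \le |D^p|$, so \textbf{MCC} is in \textbf{FP}. For (c), we use flashlight search (binary partition). We fix the values $k_1,k_2,\ldots$ one coordinate at a time. At each node we test, again by a convex-cost flow with the partially fixed $k_j$ enforced as lower/upper capacities, whether the optimal cost under the constraints still equals the global optimum $d^\ast$ from (a). We only descend into branches that pass. Every leaf reached is an MC class-vector, and each branch test is polynomial, which gives polynomial delay. The enumerated $\mathbf{k}^\star$ can then be accompanied by $\mc{Q}[W_{\mathbf{k}^\star}]$, where the witness world is read off the flow.

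For the hardness parts (b), (d) and (e), membership in \textbf{\#P} is straightforward. For (e), a witness is a block-to-tuple assignment checked against $\mathbf{k}$. For (d), a witness is a world together with verification that its cost equals $d^\ast$, computable by (a). For (b), a witness is $\mathbf{k}$ plus an admissibility check and a cost check. For \textbf{\#P}-hardness of \textbf{\#Class}, we reduce from counting perfect matchings in bipartite graphs (the permanent). Given $G=(U\cup V,E)$ with $|U|=|V|=n$, we create one block per $u\in U$ whose tuples are the $v\in V$ adjacent to $u$, encoded as distinct attribute values $A^m=v$ with a MAR/MCAR MG as in Theorem~\ref{thm:sharp}. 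Then $C_{\langle1,\ldots,1\rangle}$ is in bijection with perfect matchings. For \textbf{\#MCClass} and \textbf{\#MCW}, we must make $\langle 1,\ldots,1\rangle$ the unique most-compliant vector. With $P^\mc{M}$ uniform on the $n$ support tuples and a strictly convex-separable $d$ symmetric in the coordinates, strict convexity forces the unique minimizer over $\sum k_j=n$ to be the uniform vector $k_j=1$, provided it is admissible. If $G$ has no perfect matching we can detect that in polynomial time and output $0$. So \#MCW$=$\#MCClass$=\perm(G)$.

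The main obstacle is (b), counting MC class-vectors. Here the number of worlds is irrelevant, and we need a structure where many distinct $\mathbf{k}$ tie at the optimum while admissibility encodes a hard counting problem. My first attempt would be to choose $P^\mc{M}$ so that the unconstrained optimum is a large face of ties. Strictly convex costs with equal marginals between two levels, e.g.\ $P^\mc{M}(t_j)$ placing the continuous optimum at $k_j=1/2$, make each $k_j\in\{0,1\}$ equally good. Then MC vectors correspond to admissible $0/1$ vectors of the right weight, i.e.\ to sets $S$ of right vertices matchable by the blocks. This counts sets saturable by a matching (matroid bases of a transversal matroid), whose counting is \textbf{\#P}-complete (Colbourn--Provan--Vertigan). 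Ensuring this tie-structure with genuinely induced MG probabilities, while keeping the encoding polynomial, is the step that will need the most care.
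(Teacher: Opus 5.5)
Your (a), (c) and \#P-membership arguments match the paper's. The paper also uses the source--block--tuple--sink network with arc-split convex sink costs, and a depth-first search over prefixes of $\mathbf{k}$ pruned by a constrained min-cost flow compared against the global optimum. For (b) you reach the paper's source problem, bases of a transversal matroid with blocks as defining sets and tuples as ground elements. The obstacle you flag there dissolves if you take $P^{\mathcal{M}}$ uniform on the support, as you already do for (d)/(e). All per-tuple costs then coincide with one strictly convex $\phi$. Every class-vector sums to the number $r$ of blocks, which is at most the number of tuples. The exchange argument therefore makes the minimizers of $\sum_g\phi(k_g)$ exactly the $0/1$ vectors of weight $r$, all tied. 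This is what the paper does, and no tuning of a continuous optimum to $1/2$ is needed; your condition $d_j(0)=d_j(1)$ would also work where achievable, but is superfluous. You do need one such vector to be admissible, i.e.\ the blocks must be simultaneously matchable. If they are not, first restrict to the defining sets covered by a maximum matching, which present the same matroid.

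The genuine difference is in (d) and (e). You reduce directly from perfect matchings, with one block per $u\in U$ and one tuple per $v\in V$. Then $C_{\langle 1,\ldots,1\rangle}$ is literally the set of perfect matchings, and $\langle 1,\ldots,1\rangle$ is the unique minimizer of a symmetric strictly convex cost on $n$ coordinates summing to $n$. The paper instead starts from $3$-regular bipartite graphs, with one tuple per edge, a primary block per $u$, and two overlapping supplementary blocks per $v$. It then needs a column-counting step (each $v$ gets exactly one primary selection) and a forced-completion step (the supplementary choices are then unique) to biject unit worlds with perfect matchings. Your construction is shorter. Fed $3$-regular inputs, it already yields blocks of size $3$ with every tuple in exactly $3$ blocks, so it loses nothing, and I don't see the supplementary gadgetry buying additional strength.
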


\ignore{+++++
\begin{theorem} \label{th:complresultsmcc} \em
(MC-related problems)
Assume $D^p(D^\star,\mc{M})$ is a BID with $n$ blocks and a support $T$ with $m$ distinct tuples.
\ For  a {\em convex-separable distance} $d$ and the associated family $\mc{C}_d^\nit{mc}$ of most-compliant classes,
the following are computational problems and their computational complexities:

\vspace{1mm}\noindent
(a) \ \textbf{MCC} (MC-Class): Compute an admissible\footnote{That is, $\nit{adm}(\mathbf{k}^\star)$ holds, as in Remark \ref{rem:canon}; in particular $\sum_{j=1}^{m} k_j = n$. From now on, we leave this condition on class-vectors implicit.} class-vector $\mathbf{k}^\star \in \mathbb{N}^{m}$  that minimizes the compliance-distance (or maximizes compliance): \
$C_{\mathbf{k}^\star} \ \in \argmin_{\mathbf{k} \in \mathbb{N}^m, \ \sum_{j=1}^{m} k_j = n} d^c(C_{\mathbf{k}}, \mc{M})$.

\vspace{1mm}
\noindent \textbf{MCC}  is in \textbf{FP}.
\ignore{++++++  More precisely, the functional problem of computing an element of the $\mc{C}_d^\nit{mc}$ class can be solved in $O(n^3 \times m^3)$ time. ++++++}

\vspace{1mm}\noindent (b) \  \textbf{\#MCC} (Counting MC-Classes): Count the number of MC-classes, i.e. all class-vectors $\mathbf{k}^\star$, such that $C_{\mathbf{k}^\star}$ minimizes $d^c(C_{\mathbf{k}},\mc{M})$.

 \vspace{1mm}\noindent
     \textbf{\#MCC} is \textbf{\#P-complete}.

\vspace{1mm}\noindent (c) \  \textbf{MCC-Enum} (Enumerating MC-Classes): Effectively enumerate all class-vectors $\mathbf{k}^\star$, such that $C_{\mathbf{k}^\star}$ minimizes $\nit{d}^c(C_{\mathbf{k}},\mc{M})$.

\vspace{1mm}\noindent \textbf{MCC-Enum} is in \textbf{DelayP}.
\ignore{++++, with polynomial delay.
\red{and polynomial space complexity $O(??)$}. +++}
\boxtheorem
\end{theorem}
+++}


\ignore{+++++
 We recall that the class-vectors mentioned in this definition (and everywhere) are subject to the {\em admissibility condition} (see Remark \ref{rem:canon}). \
The MCC problem is about computing one good class. QA on it is straightforward. Computing the probability of the class (as in Definition \ref{def:preferred} of QA) is considered in Section \ref{sec:mpcproblems}.
 Theorem \ref{th:complresultsmcc}  tells us that computing a single MC-class is tractable for a wide class of distance functions. Moreover, the (possibly exponentially many)  MC-classes can be effectively enumerated with polynomial delay. However, counting the number of MC-classes turns out to be $\#P$-complete.

These results are obtained through a parsimonious reduction from the \textbf{MCC}-problem to that of computing a ``minimum-weighted perfect matching" in bipartite graphs. Computing an MC-class corresponds to solving an optimal assignment problem where blocks are matched to tuple multiplicities, while minimizing the distance objective.

\comlb{Up to here.}

\ignore{We recall that the support  is  $T=\langle t_1,\ldots, t_m \rangle$.}

\ignore{\begin{definition} (convex-separable distances/divergences) \label{def:sep}
(a) A distance $d(C_{\mathbf{k}}, \mathcal{M})$, for classes $C_\mbf{k}$ with $\mathbf{k}=\langle k_1, \ldots, k_m\rangle$, is called \emph{separable} if it can be expressed as $d(C_{\mathbf{k}}, \mathcal{M})=
\sum_{j=1}^m d_j\!\bigl(k_j, P^\mathcal{M}(t_j)\bigr)$, where $d_j(k_j,P^\mc{M}(t_j))$ is a function that depends on the number of occurrences of tuple $t_j$ in $C_\mbf{k}$ ($t_j$'s contribution to the distance). \ (b)  $d_j(k_j, P^\mathcal{M}(t_j))$ is  {\em discretely convex} (in its first argument) if,  for all integers $k_j \ge 1$,
the discrete {\em marginal contributions} are non-decreasing: \ $d_j(k_j+1, P^\mathcal{M}(t_j)) - d_j(k_j, P^\mathcal{M}(t_j)) \geq d_j(k_j, P^\mathcal{M}(t_j)) - d_j(k_j-1, P^\mathcal{M}(t_j))$. \ $d(C_\mbf{k},\mc{M})$ is called {\em convex-separable} if, in addition, each of the $d_i$ is discretely convex.
\boxtheorem
\end{definition}
\vspace{-2mm}
The second argument of $d_j$ can be any parameter depending on $t_j$. For our application, we chose $P^\mc{M}(t_j)$.
 It is easy to check that $d^\mathrm{KL}$,  $d^\mathrm{EU}$, and the $|chi^2$-statistic are convex-separable, so as many distances used in practice. are convex-separable, so as all the distances, divergences and norms in Table~\ref{table:ex-distances} in the Appendix. Also the }


\ignore{\h The following definition formalizes  computational problems of interest related to MCC.}

\begin{definition} \label{def:mcc} \em
(MC-problems)
Assume $D^p(D^\star,\mc{M})$ is a BID.
We define the following  computational problems:

\vspace{1mm}\noindent
1. \textbf{MCC} (MC-Class): Compute an admissible\ignore{\footnote{That is, $\nit{adm}(\mathbf{k}^\star)$ holds; in particular $\sum_{j=1}^{m} k_j = n$. According to Remark \ref{rem:admiss}, from now on, we leave this condition on class-vectors implicit.}} class-vector $\mathbf{k}^\star \in \mathbb{N}^{m}$  that minimizes the compliance-distance (or maximizes compliance): \
$C_{\mathbf{k}^\star} \ \in \argmin_{\mathbf{k} \in \mathbb{N}^m, \ \sum_{j=1}^{m} k_j = n} d^c(C_{\mathbf{k}}, \mc{M})$.

\vspace{1mm}\noindent 2. \textbf{\#MCC} (Counting MC-Classes): Count the number of MC-classes, i.e. all class-vectors $\mathbf{k}^\star$, such that $C_{\mathbf{k}^\star}$ minimizes

\vspace{1mm}\noindent 3. \textbf{MCC-Enum} (Enumerating MC-Classes): Effectively enumerate all class-vectors $\mathbf{k}^\star$, such that $C_{\mathbf{k}^\star}$ minimizes $\nit{d}^c(C_{\mathbf{k}},\mc{M})$.
\boxtheorem
\end{definition}

\ignore{
\vspace{-4mm}Theorem \ref{thm:mcc} provides  computational complexities \cite{Papadimitriou1994} of MCC-related problems.}

\begin{theorem} \label{thm:mcc} \em
(MC-related problems)
Assume $D^p(D^\star,\mc{M})$ is a BID with $n$ blocks and a support $T$ with $m$ distinct tuples.
\ For  a {\em convex-separable distance} $d$,
we have:

\vspace{1mm}\noindent
1.   \textbf{MCC}  is in \textbf{FP}, the class of functional problems computable in deterministic polynomial time.

\vspace{1mm}\noindent 2.
     \textbf{\#MCC} is \textbf{\#P-complete}, i.e. it belongs and is hard for the class of counting solutions of $\mbf{NP}$ decision problems.

\vspace{1mm}\noindent 3.  \textbf{MCC-Enum} is in \textbf{DelayP}, the class of enumeration problems where the time delay between the output of any two consecutive solutions is polynomial in the input size.
\boxtheorem
\end{theorem}
++++++++++++++++++++++++++}

\vspace{-3mm}\h Strict convexity is used only for hardness in parts (b), (d) and (e). Membership holds with plain convexity. \
\textbf{MCC} is about computing one good class. Once we have it,
QA on it is straightforward. \ignore{Computing the probability of the class (as
in Definition \ref{def:preferred} of QA) is considered in Section
\ref{sec:mpcproblems}.}
Theorem \ref{th:complresultsmcc} tells us that computing a single MC-class can be
achieved in polynomial-time for a wide class of distance functions, and that the
(possibly exponentially many) MC-classes can be enumerated with polynomial delay.
\ Counting MC-classes and MC-worlds, and computing the cardinality of a class and of an MC-class are all intractable: They are
all $\#P$-complete.

\h The tractability results rest on a bijection between admissible class-vectors and
feasible integral flows of a bipartite network, under which the compliance
distance becomes the flow cost: Computing one MC-class is then a ``minimum-cost
flow" problem \cite[chap.\ 6]{ahuja1993}. Enumeration becomes a depth-first search
over class-vector prefixes, where a single min-cost flow computation prunes, at
each node, the prefixes that cannot reach the optimum.

\h The hardness
results, by contrast, are combinatorial, through reductions from counting perfect
matchings and transversal-matroid bases, the two isolating the orthogonal ways an optimal
set can be large, namely many worlds within one class versus many optimal
classes.

\h Notice that Theorem
\ref{th:complresultsmcc} holds, in particular, for $\mc{C}_{\nit{KL}}^\nit{mc}$,
$\mc{C}_{\nit{EU}}^\nit{mc}$ and $\mc{C}^\nit{mc}_{pv}$.

\section{Computing Most-Probable Classes} \label{sec:mpdbs}

\h Unlike the MC-setting,  computational problems related to  most-probable-classes, the MP-problems, have a significantly higher complexity.  Hardness  resides in the need to optimize a \#P-complete function (class-probability) over a possibly  exponentially large solution-space of class-vectors. Even evaluating the objective function at a single point requires solving a \#P-complete problem. Theorem \ref{th:complresultsmp} introduces the problems with their  complexities.

\begin{theorem}\label{th:complresultsmp} \em
  (MP-Related Problems)
Assume $D^p(D^\star,\mc{M})$ is a BID with $n$ blocks and a support $T$ with $m$ distinct tuples. The following are computational problems related to most-probable classes and their complexities, in data complexity, in the size of the BID:

\vspace{1mm}\noindent
(a) \ \textbf{Class Probability:} Given a class $C_{\mathbf{k}}$, with $\mathbf{k} \in \mathbb{N}^{m}$, compute its probability $P^{\mathcal{C}\!}(C_{\mathbf{k}})$. \ It holds: \
\textbf{Class Probability} is \textbf{\#P-complete}.

\vspace{1mm}\noindent (b) \  \textbf{MPC[D]} (Most-Probable Class-Decision Problem): Given a threshold $\theta \in [0,1]$, decide whether there exists a class-vector $\mathbf{k}$\ignore{ = (k_1, \ldots, k_m) \in \mathbb{N}^m$, such that \ \
$\sum_{i=1}^{m} k_i = n
\quad \text{and} \quad  }, such that
$P^\mathcal{C}(C_{\mathbf{k}}) \geq \theta$. \ It holds: \ \textbf{MPC[D]} is in \textbf{NP\textsuperscript{PP}}, the class of decision problems that can be solved in non-deterministic polynomial-time  with access to an oracle for $\mbf{PP}$ (probabilistic polynomial time).

\ignore{
\vspace{1mm}\noindent (c) \  \textbf{CMPC[D]} (Constrained MP-Decision Problem): Given a threshold $\theta \in [0,1]$ and an upper bound vector $\mathbf{b}=(b_1, \ldots, b_m) \in \mathbb{N}^m$, decide if there is \ignore{whether there exists} a class-vector $\mathbf{k}$, \ignore{$ = (k_1, \ldots, k_m) \in \mathbb{N}^m$,} such that, for all $j \in [1,m]$,  $k_j \leq b_j$, and $
P^\mathcal{C}(C_{\mathbf{k}}) \geq \theta
$.

\vspace{1mm}
\textbf{CMPC[D]} is \textbf{NP\textsuperscript{PP}}-complete.}

\vspace{1mm}\noindent (c) \  \textbf{CMPC[D]$^\mathcal{O}$} (Constrained MP-Decision Problem):
 Let $\mathcal{O}$ be a set of of arithmetic comparisons. Given a threshold $\theta \in [0,1]$, a bound-vector $\mathbf{b} = (b_1,\dots,b_m) \in \mathbb{N}^m$, and an operator-vector $\mathbf{op} = (op_1,\dots,op_m) \in \mathcal{O}^m$, \textbf{CMPC[D]} is about deciding if there is a class-vector $\mathbf{k} = (k_1,\dots,k_m) \in \mathbb{N}^m$, such that, for all $j \in [1,m]$,
$k_j \mathrel{op_j} b_j$,
and
$
P^{\mathcal{C}}(C_{\mathbf{k}}) \geq \theta
$. \ It holds: \
\textbf{CMPC[D]$^{\{=\}}$} is \textbf{PP}-complete; \ and \
 \textbf{CMPC[D]$^{\{=,\leq\}}$} is \textbf{NP$^{\textbf{PP}}$}-complete. \ The case $\mathcal{O}=\{=\}$, with $\mathbf{b}=(1,\ldots,1)$, captures key-constraints, with each tuple required to occur exactly once.

\vspace{1mm}\noindent
(d) \  \textbf{MPC[O]} (Most-Probable Class–Optimization Problem): Find a class-vector $\mathbf{k}^\star$, such that  $P^{\mathcal{C}\!}(C_{\mathbf{k}})$ takes a maximum value. \ It holds: \ \textbf{MPC[O]} is in \textbf{FP\textsuperscript{NP\textsuperscript{PP}}}, the class of function-computation problems that are solvable in deterministic polynomial-time with access to an $\mathsf{NP^{PP}}$\!-oracle.
\boxtheorem
\end{theorem}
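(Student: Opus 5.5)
The plan is to treat the four parts in order, using the canonical representation of classes from Remark~\ref{rem:canon}. The central observation is that $P^\mathcal{C}(C_{\mathbf{k}})$ is a permanent-like sum. Build an $n\times m$ matrix $M$ with $M_{i,j}=p^{\nit{BID}}$ of the copy of $t_j$ in block $B_i$, and $0$ if $t_j\notin B_i$. Then $P^\mathcal{C}(C_{\mathbf{k}})$ is the permanent of the $n\times n$ matrix $M^{\mathbf{k}}$ obtained by replicating column $j$ exactly $k_j$ times, divided by $\prod_j k_j!$. The reason is that each world in $C_{\mathbf{k}}$ is a function from blocks to tuples with fibre sizes $\mathbf{k}$, and the permanent counts each such function $\prod_j k_j!$ times.

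For the upper bound in (a), I observe that a class probability is a sum, over the exponentially many assignments of blocks to tuples, of polynomial-time computable products of rationals. Clearing denominators turns it into a $\#P$ function, or a $\mathsf{GapP}$/$\#P$ quotient by a polynomial-time computable integer, which is the usual convention for probability computations.

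For hardness in (a), I reduce from the permanent of a $0/1$ matrix (equivalently, counting perfect matchings in a bipartite graph $G=(U\cup V,E)$, $|U|=|V|=n$). For each $u\in U$ create a block $B_u$, and for each $v\in V$ a distinct support tuple $t_v$. Place $t_v$ in $B_u$ iff $(u,v)\in E$, giving tuples in $B_u$ probability $1/\deg(u)$. Then take $\mathbf{k}=(1,\ldots,1)$, so $P^\mathcal{C}(C_{\mathbf{k}})=\#\mathrm{PM}(G)/\prod_u\deg(u)$.

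The main obstacle is realizing this BID as $D^p(D^\star,\mc{M})$ for some observed DB and MG, since block probabilities are conditional probabilities from $\mc{M}$ and a block contains all completions of the missing attributes. I would try the following encoding:
\begin{itemize}
\item use a fully observed attribute $A^o$ carrying $u$, and a missing attribute $C^m$ whose value encodes $v$;
\item let $C^m$ depend on $A^o$ in the MG, so that $P^\mc{M}(C^m=v\mid A^o=u)$ is uniform on the neighbours of $u$ and $0$ otherwise;
\item keep a MAR mechanism as in the proof of Theorem~\ref{thm:sharp}.
\end{itemize}
Zero-probability tuples can be regarded as absent from blocks, or handled by a domain restriction. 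Distinct $v$'s give distinct support tuples.

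For (c), the $\{=\}$ case reduces to a single class evaluation: all $k_j$ are fixed, so deciding $P^\mathcal{C}(C_{\mathbf{b}})\ge\theta$ is in $\mathbf{PP}$ by the $\#P$ membership from (a). For hardness, use the same reduction with $\mathbf{b}=(1,\ldots,1)$, since deciding $\mathrm{perm}\ge$ a threshold is $\mathbf{PP}$-complete.

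For (b) and the $\{=,\le\}$ case of (c), membership in $\mathbf{NP}^{\mathbf{PP}}$ works by guessing $\mathbf{k}$ (polynomial size, $\sum k_j=n$), checking the constraints, and querying the $\mathbf{PP}$ oracle for $P^\mathcal{C}(C_{\mathbf{k}})\ge\theta$.

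For $\mathbf{NP}^{\mathbf{PP}}$-hardness with $\le$ constraints, I would reduce from E-MAJSAT: given $\phi(\bar x,\bar y)$, decide whether some $\bar x$ makes the majority of $\bar y$ satisfy $\phi$. Each $x_i$ becomes a block with two tuples $x_i^0,x_i^1$, each bounded by $\le 1$. The choice of one of them encodes an assignment, and the remaining blocks should make the class probability proportional to $\#\{\bar y:\phi(\bar x,\bar y)\}$. To obtain that, I would use the standard $\#P$-completeness construction of permanents (Valiant) combined with the $=1$ constraints, keeping a gadget's contribution independent of $\bar x$ up to a fixed factor. This gadget step, together with realizing it via an MG, is where I expect the most work.

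Part (d) follows by binary search on $\theta$ using the (b) oracle: probabilities are rationals of polynomial bit size, so polynomially many queries suffice. A prefix-fixing self-reduction then recovers $\mathbf{k}^\star$ by fixing $k_1,k_2,\ldots$ one at a time through $\{=,\le\}$-constrained queries, giving $\mathbf{FP}^{\mathbf{NP}^{\mathbf{PP}}}$.
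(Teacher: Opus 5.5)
Most of your plan matches the paper: the identity $P^{\mathcal{C}}(C_{\mathbf{k}})=\mathsf{perm}(M_{\mathbf{k}})/\prod_j k_j!$ with duplicated columns, \#P-hardness via $\mathbf{k}=(1,\ldots,1)$, guessing $\mathbf{k}$ plus one $\mathbf{PP}$ query for the $\mathbf{NP}^{\mathbf{PP}}$ upper bounds, a single threshold test for $\mathcal{O}=\{=\}$, and binary search followed by prefix self-reduction for (d). The genuine gap is the $\mathbf{NP}^{\mathbf{PP}}$-hardness of \textbf{CMPC[D]}$^{\{=,\le\}}$. You defer exactly the gadget on which that result rests, and the design you do sketch cannot work. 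Suppose block $X_i$ holds two tuples $x_i^0,x_i^1$ that occur in no other block, and all remaining counts are pinned to $1$. Then every world of $C_{\mathbf{k}}$ makes the same choice in $X_i$, and $P^{\mathcal{C}}(C_{\mathbf{k}})$ factors as $p(x_i^b)$ times a quantity independent of $b$. So the existential choice never reaches the counting part.

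The missing idea is to let the chosen multiplicities act on columns of Valiant's variable gadget itself. For each existential variable, the paper inserts gate nodes $G_P,G_N$ without self-loops, so that the two paths become $x_P\to G_P\to x'$ and $x_N\to G_N\to x'$. It then imposes $k_{x'}=2$, $k_{G_P}\le 1$, $k_{G_N}\le 1$, and $k_j=1$ for all other columns. Keeping the matrix square forces $r$ gate deletions in total, where $r$ is the number of existential variables. A core with zero or two deletions has no cycle cover, so each core deletes exactly one gate. When $G_N$ is deleted, the second copy of column $x'$ takes its place. This removes the edge $x_N\to G_N$ and gives $G_N$ a self-loop, so $x_N$ is forced onto its own self-loop and $x=\mathsf{true}$ is fixed (symmetrically for $\mathsf{false}$). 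Each selected core contributes a factor $2$, which cancels against $\prod_j k_j!=2^r$. Hence $P^{\mathcal{C}}(C_{\mathbf{k}})=\gamma\,N(\alpha(\mathbf{k}))$, where $N(\alpha)=|\{\bar y:\varphi(\alpha,\bar y)=1\}|$ and $\gamma$ is independent of $\mathbf{k}$. Setting $\theta=\gamma\,2^{l-1}$, with $l$ the number of counting variables, completes the reduction.

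Whatever gadgets you use, two cautions apply. Valiant's XOR gadgets carry negative entries, whereas a BID needs a nonnegative row-stochastic matrix. Also, \#P-hardness under Turing reductions does not by itself give $\mathbf{PP}$-hardness of a threshold. The class probability must remain an exact positive multiple of the relevant count, and this also bears on your $\mathbf{PP}$-hardness claim for $\{=\}$.

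Your realization through an MG also fails. Observed values are retained in every imputed tuple (Definition~\ref{def:blocks}). So with $A^o$ carrying $u$, block $B_u$ yields tuples $(u,v)$, and $(u,v)\neq(u',v)$ for $u\neq u'$. Every support tuple then lies in a single block and every class consists of a single world. $P^{\mathcal{C}}$ is then just a product of block probabilities, and the hardness collapses; over this support, $(1,\ldots,1)$ does not even sum to $n$. The paper proves hardness at the level of BIDs and relies on Proposition~\ref{lemma:BIDrepresentation}, whose trick is what you need. Block $B_i$ is marked by an observed value in a dedicated attribute $A_i$ that is missing in all other observed tuples. The joint distribution makes each imputed $A_l$ a function of the imputed $\mathtt{T}$-value. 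As a result, imputations of the same tuple coming from different blocks coincide modulo tids.
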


\vspace{-5mm}\h
The $\#P$-completeness of \textbf{Class Probability}  poses a fundamental bottleneck: Any algorithm for MP-problems must repeatedly solve this hard  problem  during the search process.

\h The proofs, given in Appendix~\ref{proof:mpc}, rely on the algebraic characterization of class probabilities through the {\em permanent} of matrix. In particular, for a class vector $\mathbf{k}$, its probability is expressed as the permanent of a matrix obtained by duplicating columns according to $\mathbf{k}$, up to a factorial normalization. Conversely, the permanent of any stochastic matrix is encoded as the probability of the class corresponding to $\mathbf{k}=(1,\ldots,1)$. This establishes the $\#\mathsf{P}$-completeness of \textbf{Class Probability} and yields the $\mathsf{NP}^{\mathsf{PP}}$ upper-bounds for the decision variants, by guessing $\mathbf{k}$ and testing the resulting probability against the threshold.

\h
For $\textbf{CMPC[D]}^{\{=,\leq\}}$\!, \ $\mathsf{NP}^{\mathsf{PP}}$-hardness is established by a reduction from \textsc{E-MajSat}, asking  if there is an assignment to a set of existential variables, such that the majority of assignments to the remaining variables satisfy the formula.
Building on Valiant's reduction from \#SAT to \textsc{Permanent} \cite{Valiant1979}, we modify only the variable gadgets for the existential variables: A controlled column duplication encodes the existential assignment, while  relying on the remaining components of Valiant's construction,  to count satisfying assignments to the remaining variables; and hence, to test the majority condition. As a consequence, admissible class vectors correspond to existential assignments, and the permanent of the associated matrix counts, up to a fixed factor, the satisfying assignments to the remaining variables. In this way, the majority condition of \textsc{E-MajSat} is captured by the probability threshold in $\textbf{CMPC[D]}^{\{=,\leq\}}$. \
Finally, \textbf{MPC[O]} belongs to $\mathsf{FP}^{\mathsf{NP}^{\mathsf{PP}}}$ by polynomially-many threshold queries to \textbf{MPC[D]}. \
\ignore{\red{The slightly more expressive variant, \textbf{CMPC[D]$^{\{=,\leq\}}$}, that enables control over tuple multiplicities, is $\mathsf{NP^{PP}}$-complete.}} \ We can see that  Theorem
\ref{th:complresultsmp} leaves some lower-bounds open, in particular, for $\mbf{MPC[D]}$.  \ignore{The optimization problem \textbf{MPC[O]} sits higher up in this hierarchy, in $\mathsf{FP^{NP^{PP}}}$\!\!.}

\vspace{-1mm}
\section{Probability Recovery from Data and Causal MGs}\label{sec:recover}

\vspace{-2mm}\h Until now, we have relied on having a MG $\mc{M}$, represented as a Bayesian Network with its local distributions, and the observed data $D^\star$. We also made the assumption that $D^\star$ is generated according to (or is compliant with) $\mc{M}$, which could have been learned from a different dataset, but for the main application domain as that of $D^\star$. \ However, it could be the case that we only have $D^\star$ and a {\em qualitative} MG $\mc{M}$ that represents causal (in)dependencies among database attributes, without explicit local distributions.

\h Since we assume that $\mbb{I}^C =1$   iff $C^m =\na$,  the observed DB $D^\star$ can be extended with columns and contents for the indicator variables $\mbb{I}^C$: \ $\mbb{I}^C$ takes values $1$ or $0$ depending on whether $C^\star$ shows an $\na$ or not. This ``$\mbb{I}$-extended DB" is, then, $\mbb{I}$-deterministic (see Section \ref{sec:mgs}). We still denote it with $D^\star$.

\h In this new setting, we could still be in position to build the probabilistic PDB $D^p(D^\star,\mc{M})$ as in Section \ref{sec:genBIDs}. All we need are the estimates of the probabilities $p^{\nit{BID}}\!(\tau)$ in Definition \ref{def:blocks}(b), those for tuples $\tau$ in  non-singleton blocks $B$ (singleton blocks' tuples have  probability $1$). Then, the question becomes whether we can {\em recover} those probabilities from $D^\star$ by appealing to the qualitative $\mc{M}$. This relevant issue in the more general setting of statistical data is discussed more deeply in \cite[sec. 3]{mohan21}. In the following, we adapt it to our setting. It turns out that the answer may depend on the kind of MMs $\mc{M}$ represents.

\vspace{-1mm}
\begin{definition}\label{def:rec} \em Given qualitative MG $\mc{M}$, a functional $E$ from the class $\mc{D}$ of $\mbb{I}$-extended  observed
databases $D^\star$ to  $\mbb{R}$ is  {\em recoverable}  if, there is a computable functional $\hat{E}$ on $\mc{D}$ that is a {\em consistent estimate} of $E$ over all underlying positive data distributions $P(\bar{A^o},\bar{C^\star})$ that are compatible with $\mc{M}$, in the sense that they capture the same (in)dependencies. \boxtheorem
\end{definition}

\vspace{-5mm}\h We recall that, in this case, a {\em consistent estimate} $\hat{E}$ of $E$ converges in probability to $E$ as the database $D^\star$ grows. More precisely, for every $\epsilon >0$, the probability  that $|\hat{E} - E| > \epsilon$ tends to $0$, where the probability comes from the admissible distributions $P(\bar{C^\star},\bar{A^o}, \bar{\mbb{I}^C})$   \cite{larry}. To establish this, it suffices to express $E$ in terms of the {\em observed data distribution} (i.e. observed frequencies), denoted  $P^{D^\star}(\bar{C^\star},\bar{A^o}, \bar{\mbb{I}^C})$.

\h The estimand $E$ in this definition could be, among others, a marginal or conditional probability;  a numerical queries on the DB, such as an aggregation or a Boolean query.  \ For now, we are interested in estimating the probabilities $p^{\nit{BID}}\!(\tau)$ that are needed to build the BID, those in Definition \ref{def:blocks}(b). In Example \ref{ex:new2}, its is about consistently estimating the probabilities in (\ref{eq:bid-tuple}).

\begin{proposition} \label{prop:recov}\em Consider a  qualitative MG $\mc{M}$ that represents an MCAR or a MAR missingness mechanism.  For every $\tau$ in a non-singleton block of BID $D^p$, the probability $p^{\nit{BID}}\!(\tau)$  is recoverable from the observed database $D^\star$. \boxtheorem
\end{proposition}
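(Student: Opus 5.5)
The plan is to express $p^{\nit{BID}}(\tau^\prime)$, for $\tau^\prime \in B(\tau)$, as a functional of the observed-data distribution $P^{D^\star}(\bar{A^o},\bar{C^\star},\bar{\mbb{I}^C})$ using only the independencies encoded by $\mc{M}$ and the $\mbb{I}$-determinism conditions (C). Consistency then follows because empirical frequencies in $D^\star$ are consistent estimates of the corresponding probabilities, since tuples are assumed independent. Continuous-mapping arguments apply to ratios with positive denominators, and positivity of the distributions is assumed in Definition \ref{def:rec}.

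\textbf{Rewriting the target.} Let $\bar{Z} = \bar{A}^\star\smallsetminus\bar{A^\prime}^\star$ be the attributes missing in $\tau$, and $\bar{Y}=\bar{A^\prime}^\star$ the observed ones. By (C), conditioning on $\tau$ is the same as conditioning on $\bar{A^o}=\tau[\bar{A^o}]$, $\bar{Y}^m=\tau[\bar{Y}]$, $\mbb{I}^{\bar{Y}}=0$ and $\mbb{I}^{\bar{Z}}=1$. So we get
\[
p^{\nit{BID}}(\tau^\prime) = P^\mc{M}(\bar{Z}^m=\tau^\prime[\bar{Z}] \mid \bar{A^o},\bar{Y}^m,\mbb{I}^{\bar{Y}}=0,\mbb{I}^{\bar{Z}}=1),
\]
exactly as in Example \ref{ex:connection}.

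\textbf{Using the independencies.} In the MAR case (which includes MCAR), each indicator $\mbb{I}^X$ has parents only among fully observed variables and other indicators. Hence every indicator is a non-descendant of every $Z^m$, and its ancestors avoid the $m$-variables. By d-separation in $\mc{M}$, the vector of all indicators is then independent of $\bar{Z}^m$ given $\bar{A^o}$ and $\bar{Y}^m$. I expect this to be the main obstacle: one must check that no path via $X^\star$ sinks opens, which is fine because those nodes are sinks and are not conditioned on. I would argue via the Markov condition on an ancestral ordering in which the $\bar{A^o}$, then the $m$-variables, then the indicators come first. Applying this independence lets us set $\mbb{I}^{\bar{Z}}=0$ instead of $1$:
\[
p^{\nit{BID}}(\tau^\prime) = P^\mc{M}(\bar{Z}^m \mid \bar{A^o},\bar{Y}^m,\mbb{I}^{\bar{Y}}=0,\mbb{I}^{\bar{Z}}=0) = P(\bar{Z}^\star=\tau^\prime[\bar{Z}] \mid \bar{A^o},\bar{Y}^\star,\mbb{I}^{\bar{Y}\bar{Z}}=0).
\]
The last step uses (C) again to replace $m$-variables by $\star$-variables when the indicators are $0$. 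This quantity involves only observed variables, namely the complete cases agreeing with $\tau$ on its observed attributes.

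\textbf{Conclusion.} Take $\hat{E}$ to be the ratio of counts in $D^\star$: the number of tuples with values $(\tau[\bar{A^o}],\tau[\bar{Y}],\tau^\prime[\bar{Z}])$ and no \na\ in $\bar{Y}\bar{Z}$, divided by the number of tuples with $(\tau[\bar{A^o}],\tau[\bar{Y}])$ and no \na\ in $\bar{Y}\bar{Z}$. By the law of large numbers and positivity, $\hat{E}$ converges in probability to $p^{\nit{BID}}(\tau^\prime)$. This holds for every $P$ compatible with $\mc{M}$, which proves recoverability. For MCAR the argument simplifies further, since the indicators are independent of all $m$-variables even without conditioning.
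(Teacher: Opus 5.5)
Your proof is correct. It uses the same two ingredients as the paper's proof: the MAR independence between the indicators and the $m$-variables given fully observed attributes, and $\mbb{I}$-determinism to replace $m$-variables by $\star$-variables on complete cases, followed by relative-frequency estimation. The difference is what gets recovered.

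The paper reduces the proposition to recoverability of the whole joint $P(\bar{A}^o,\bar{C}^m,\mbb{I}^{\bar{C}})$. It splits $\bar{A}^o$ according to the parents of the indicators and uses $(\bar{A}^o_1\bar{C}^m\independent\mbb{I}^{\bar{C}})\mid\bar{A}^o_2$ to condition on $\mbb{I}^{\bar{C}}=\bar{0}$. The block probability $p^{\nit{BID}}$ then follows as a ratio of marginals of the recovered joint.

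You work on the block probability directly. You first show that $p^{\nit{BID}}(\tau^\prime)=P^{\mc{M}}(\bar{Z}^m\mid\bar{A}^o,\bar{Y}^m)$ does not depend on the missingness pattern of $\tau$; this is the general form of Example~\ref{ex:connection}. You then read that conditional off the complete cases.

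What each route buys:
\begin{itemize}
\item Yours gives an explicit estimator, a ratio of two counts.
\item Yours derives the needed independence from the graph by d-separation, whereas the paper asserts it.
\item The paper's gives more: once the full joint is recoverable, so is every functional of it. This includes the marginals $P^{\mc{M}}(t_j)$ that the compliance semantics needs, not only the block probabilities.
\end{itemize}

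One small point: your d-separation step relies on the $\bar{A}^o$ being sources and on no indicator being an ancestor of an $m$-variable. These are standard MG conventions. Alternatively, you can take the MAR condition $\mbb{I}\independent\bar{C}^m\mid\bar{A}^o$ as given and apply weak union to get independence given $\bar{A}^o,\bar{Y}^m$.
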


 \vspace{-5mm}\h For MNCAR MGs the situation is much more complex. There are cases where the probabilities can be recovered, and other where they can not \cite{mohan21}, but there is no big picture yet. It is not clear that the recoverable cases can be characterized.

\section{Related Work}\label{sec:relWork}
\h Incomplete DBs, in the sense of missing values in tuples,  have been investigated for a long time, under different representations and semantics. Since the inception of relational DBs, null values, in the form of the SQL constant \Nn, have been used to represent MVs. Its use has been contentious and a subject of several papers. See \cite{libkin,lastLibkin} for a discussion and references. \ In our work, we do not refer to- or handle {\em null values} or \Nn~ as in SQL DBs.

\h Query answering on incomplete DBs is classically framed over the set of possible worlds a database represents, with \emph{certain} and \emph{possible} answers as the two reference semantics~\cite{imielinski}. \ A recent body of investigation  has shed light on the  semantic and algorithmic issues related to the use of \Nn, observing that  neither reference semantics is fully satisfactory: certain answers are often too pessimistic,  while possible answers are too permissive  \cite{libkin,lastLibkin}.

\ignore{\blue{For a related,  simple and practical reconstruction of the use of \Nn \ in SQL that is based on  query-rewriting,  see \cite[Sec. 4]{p2p}.
\  Actually, by replacing MVs by \Nn, and using an SQL-based DBMS for QA, we would be assigning a very particular ``semantics" to data with MVs; one that is embedded in the way SQL DBs {\em operates} with  \Nn. }}

\h An alternative line of work narrows the space of possible worlds by exploiting {\em a priori} knowledge about the missing values. \
More recently, and closer to our semantics, \cite{console} has considered numerical queries in the presence of multiple, {\em labeled} null values. (We use a single constant, \na, to denote a MV.) A numerical null is assigned a distribution over non-null  values. Correlations are represented by simultaneous occurrences of a same labeled null.

\h Instead, in our case, we assume {\em a priori} knowledge of different nature: The \emph{missingness mechanism} represented by an MG. It grounds our semantics
on the causal theory of missing data~\cite{rubin76,mohan21}. Here, an MG can play the role of correlating attribute values. As mentioned in Section \ref{sec:mgs}, we could further impose soft-constraints to capture other forms of correlation, in particular, inter-tuple correlations \cite{marko}.

\h
In PDBs, a query semantics defines how answers  are interpreted and computed under uncertainty. The \emph{possible worlds semantics} determines probability distributions over those worlds and  over query answers, by evaluating the query on each world \cite{suciu}. Alternatively, the \emph{confidence} or \emph{marginal semantics} focuses on marginal probabilities of individual tuples appearing in results, simplifying computations by ignoring correlations \cite{dalvi2004,soliman2007topk,re2005trio}. Extensions such as the \emph{top-k} and the \emph{expected score} semantics rank or score tuples based on probabilities or utilities \cite{soliman2007topk,sarma2006working,re07}. One can also decide to use the  \emph{most probable database}  to evaluate the query \cite{NOW}. \ For aggregate queries, similar semantics apply: the \emph{possible worlds semantics} yields a distribution over aggregate values \cite{dalvi2004}. \ignore{, \emph{marginal semantics} computes marginal probabilities for each aggregate result, and \emph{expected aggregate semantics} returns the expected aggregate value, also with group-by } \ignore{Advanced semantics also consider ranking or selecting likely aggregates via \emph{top-k} or \emph{most probable aggregate} methods \cite{soliman2007topk,soliman2008,re07}.}

\ignore{
\red{Several query semantics have been proposed for probabilistic databases, each providing a different way to interpret and compute query results: the \textit{possible worlds semantics} models the database as a distribution over deterministic worlds and derives a probability distribution over query answers by evaluating each world \cite{dalvi2004}, \emph{confidence} or \emph{marginal semantics} focuses on marginal probabilities of individual tuples appearing in results, simplifying computations by ignoring correlations \cite{dalvi2004,soliman2007topk,re2005trio}. Extensions such as \emph{top-k} and \emph{expected score} semantics rank or score tuples based on probabilities or utilities \cite{soliman2007topk,sarma2006working,re07}, while \emph{most probable world} semantics selects the single most likely deterministic instance satisfying the query \cite{mostProbDB}.
} }

\h In  \cite{benny}, the authors take a probabilistic and general approach to  data quality, which may in principle involve different dimensions of quality, including incompleteness. MVs, as those we deal with, are not specifically  addressed.

\h There is a large body of research on dealing with MVs, mostly in Statistics \cite{gelman}. A common technique is {\em imputation}, which amounts to filling in for them using  values from the domain.
\ MMs were proposed in that context \cite{rubinBook}.
Imputation methods and statistical estimates in the presence of MV come in different forms, and may   depend on MMs \cite{rubinBook,gelman}. \  Our work is not about these forms of  ``classic imputation", but, instead, we do something like an {\em implicit and multiple, BN-informed, probabilistic imputation} that  gives rise to several possible ``imputed" instances with attached probabilities.
\ignore{Various methodologies have been proposed to assess the compatibility of a dataset with a Bayesian Network,  such as {\em model-fitness measures}, e.g.  maximum-likelihood \cite{larry,allisonML}\ignore{\cite{Bishop2006,allisonML,larry}} and  {\em information-theoretic measures} \cite{MacKay2003}. }

\h MMs represented as MGs that take the form of BNs have been introduced and investigated in \cite{mohan13,broeck2015,mohanthesis2017,mohan21}, mostly concentrating on learning BNs under MVs,  {\em recovering} the right probabilities from data with MVs by means of qualitative MGs \cite{mohan21}; and doing inference from the  resulting BNs. Our applications to DBs via PDBs, and QA under them, and related DBs, such as the most-compliant ones,  are new.

\vspace{-3mm}
\section{Conclusions}
\label{sec:conclusion}
\vspace{-2mm}
\h We have provided a principled semantics to a DB  with MVs, whose occurrences are  governed by a quantitative Bayesian Network,  that represents missingness mechanisms.
 The data semantics relies on  the construction of a BID that induces a space of {\em possible worlds} with associated probabilities. Possible worlds are multiset-DBs without  MVs.
They  are classified into  classes of identical members, as multisets.

\h We introduced and investigated two particular collections of classes: The \emph{Most Compliant Classes} -leading to the MCC-semantics- whose members best align with the underlying MG; and the \emph{Most Probable Classes} -leading to he MPC-semantics- whose members are the most likely ones. Query answering (QA) can be done on  any of these classes.  These two semantics reflect the complementary dimensions of statistical plausibility and probabilistic likelihood.

\h
We presented  complexity results for the MCC-semantics. In addition to some intractability results, we notably showed that it is possible to efficiently compute a single MCC-class, on which QA can be performed.  Although enumerating the answers under the MCC-semantics is computationally intractable, it can still be performed with polynomial delay.
\ For the MPC-semantics we unveiled several intractability results, in data complexity.

\h We also studied the problem of using the observed data to {\em recover} the probabilities that could be used as parameters for a qualitative MG. For some classes of MGs this is possible, and doable in polynomial-time in data.

\ignore{
\red{As part of our ongoing work, not reported here, and with the aim of identifying classes of observed DBs for which efficient QA is tractable, we have uncovered
 a natural class of them for which the
MCC- and the MPC-semantics coincide, and QA becomes tractable.} }

\h
We are currently  investigating a hybrid QA semantics that jointly considers compliance and probability, allowing for a more expressive and flexible notion of QA. We have been able to efficiently compute most-probable worlds among the most-compliant ones; and also Pareto-optimal worlds that combine the two criteria.
\ignore{We are also interested in exploring how our class-based semantics could inform or be integrated with advanced imputation strategies, especially in statistical or machine learning pipelines.}
\ignore{We are investigating the  verification and enforcement of  compliance of an observed DB $D^\star$ with the underlying MG.} 
We have  implemented solutions for the just mentioned tractable problems. We have obtained extremely encouraging experimental results with real-world datasets \cite{we}.

\ignore{these techniques and conducting experimental evaluations on real-world datasets will be essential for assessing their practical applicability and performance.}


\bibliographystyle{plain}







\appendix

\newpage
\appendix

\ignore{++
\section{Appendix: Additional Definitions and PDBs}\label{sec:basic}

\begin{definition} \em
\label{def:blocks}
Consider an MG $\mc{M}$ and  an observed instance $D^\star$ for schema $S^\star$, and $R^\star$ a relation in $D^\star$ with schema $R(\bar{A}^o,\bar{A}^\star)$, where $\bar{A}^o, \bar{A}^\star$ are lists of fully observed and possibly taking \na \ attributes, resp.  Let $\tau$ be a tuple in $R^\star$, and $\tau[\bar{A}^o,\bar{A^\prime}^\star]$ its restriction to those attributes without an {\footnotesize \na}, with $\bar{A^\prime}^\star \subseteq \bar{A}^\star$.

\vspace{1mm}\noindent
(a) The {\em block} associated to $\tau$ is the set of tuples for schema $R(\bar{A}^o,\bar{A}^m)$:
\begin{eqnarray}
\mc{B}(\tau) &:=& \{ \ \tau^\prime~|~\tau^\prime[\bar{A}^o,\bar{A^\prime}^\star] = \tau[\bar{A}^o,\bar{A^\prime}^\star], \mbox{ and}, \nonumber \\
&&~~~~~~~\mbox{ for each } A \in (\bar{A}^\star \smallsetminus  \bar{A^\prime}^\star),\nonumber \\  &&~~~~~~~~\tau^\prime[A] \in \nit{dom}(A^m) \ \}. \hspace{-5mm}\label{eq:block}
\end{eqnarray}
\noindent (b)
For $\tau^\prime \in \mc{B}(\tau)$, its  probability $p(\tau^\prime)$ is the conditional probability:
\begin{equation}
p^{\nit{BID}}\!(\tau^\prime) := P^\mc{M}(\tau^\prime[\bar{A}^\star \smallsetminus  \bar{A^\prime}^\star]~|~\tau).\label{eq:probTuple}
\end{equation}
\h The probability of the tuple in a singleton block  is $1$.

\vspace{1mm}\noindent (c)
$D^p\!(D^\star,\mc{M})$ denotes the BID whose relations $R^p$ contain the blocks $\mc{B}(\tau)$ for $\tau \in R^\star$, and each tuple $\tau^\prime \in R^p$ has probability $p^{\nit{BID}}(\tau^\prime)$.

\vspace{1mm}\noindent (d)
A {\em possible world} associated to $D^\star$ is an instance $W$ for the underlying  schema $S$, with relations $R^W$ that contain, for each $\tau \in R^\star$, only one $\tau^\prime \in \mc{B}(\tau)$, and nothing more. \ $\mc{W}(D^\star)$ denotes the set of possible worlds. \boxtheorem
\end{definition}
++}

\section{Appendix: Additional Material for Section \ref{sec:genBIDs}.}\label{app:example}

\begin{example} \label{app:details}(Details for  Example \ref{ex:connection})

\begin{eqnarray*}
p_2^1 &:=& 
P(C^m=c_1~|~A^o=a_2,B^o=1,C^\star=\na)  =  P(C^m=c_1~|~B^o=1,C^\star=\na) \nonumber \\
&=& \frac{P(B^o=1,C^\star=\na,C^m=c_1)}{P(B^o=1,C^\star=\na)} 
= \frac{\sum_{A^o,\mbb{I}^C}P(A^o,B^o=1,C^\star=\na,\mbb{I}^C,C^m=c_1)}{\sum_{A^o,\mbb{I}^C,C^m}P(A^o,B^o=1,C^\star=\na,\mbb{I}^C,C^m)}. \label{eq:frac}
\end{eqnarray*}
For the numerator, we have:

\vspace{2mm}

$P(C^m=c_1) \times P(B^o=1) \times \sum_{A^o,\mbb{I}^C}[P(C^\star=\na~|~C^m=c_1,\mbb{I}^C) \times P(\mbb{I}^C~|~B^o=1) \times  P(A^o)]$ \
$= P(C^m=c_1) \times P(B^o=1) \times \sum_{A^o}[P(C^\star=\na~|~C^m=c_1,\mbb{I}^C=1) \times P(\mbb{I}^C=1~|~B^o=1) \times  P(A^o)]$ \ 
$=
P(C^m=c_1) \times P(B^o=1) \times P(C^\star=\na~|~C^m=c_1,\mbb{I}^C=1) \times P(\mbb{I}^C=1~|~B^o=1)$.

\vspace{2mm}
Similarly, for the denominator:

\vspace{2mm}
$P(B^o=1) \times \sum_{A^o,\mbb{I}^C,C^m}[P(C^\star=\na~|~C^m,\mbb{I}^C) \times P(\mbb{I}^C~|~B^o=1) \times P(C^m) \times P(A^o)]=$ \
$P(B^o=1) \times \sum_{A^o,C^m}[P(C^\star=\na~|~C^m,\mbb{I}^C=1) \times P(\mbb{I}^C=1~|~B^o=1) \times P(C^m) \times P(A^o)]$.

\vspace{1mm}\h By conditions (C) before Example \ref{ex:new},  \ $p_2^1$ becomes:

\vspace{1mm}
\centerline{$
  \frac{P(C^m=c_1)}{ \sum_{A^o,C^m}[  P(C^m) \times P(A^o)]}.$}

  \vspace{1mm}
\h Since $C^m \independent A^o$, the denominator is the sum of the joint distribution of both, which is $1$. Finally: \ $p_2^1=P(C^m=c_1)$. \boxtheorem
\end{example}

\begin{example}\label{ex:2mgs} Consider the MGs  in Fig. \ref{fig:extraMGs}. \ Table \ref{tab:newdata} shows the observed DB and the blocks it gives rise to. 

\vspace{-5mm}
\begin{figure}[h]
\begin{center}
\includegraphics[width=2cm]{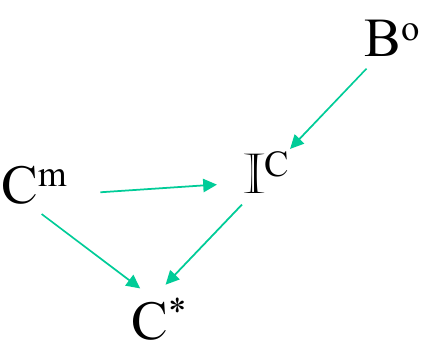}~~~~~~~~~~~~~~~~~~\includegraphics[width=2cm]{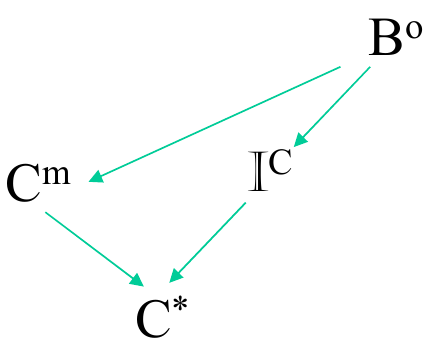}
\caption{MGs (a) and (b).}\label{fig:extraMGs}
\end{center}
\end{figure}

\vspace{-1.3cm}
\begin{center}
\captionof{table}{Observed DB and the blocks. }\label{tab:newdata}
\vspace{-2mm}\scalebox{0.6}{
$\begin{tabu}{c|c|c|}\hline
R^\star & B^o & C^\star\\ \hline
\tau_1& 0& c_1\\
\tau_2& 1& \na\\
\tau_3& 0&c_3\\
\tau_4&0&c_4\\
\tau_5&1&c_5\\
\tau_6&1& \na\\ \hhline{~--}
\end{tabu}$~~~~~~~~~~~~~~~~~~~~~~~~~~$\begin{tabu}{c|c|c||c|}\hline
R^p & B^o & C^m& P\\ \hline
\tau_1& 0& c_1&1\\
\hline
& 1& c_1&\\
& 1& c_2&\\
B(\tau_2)& 1& c_3&?\\
& 1& c_4&\\
& 1& c_5&\\
\hline
\cdots&\cdots&\cdots& 
\\ \hhline{~---}
\end{tabu}$}
\end{center}

\ For both MGs,  we need to compute:\begin{equation}
p^1_2 := P^\mc{M}(C^m=c_1|B^o=1,\mbb{I}^C=1,C^\star=\na).
\end{equation}

MG (a) corresponds to a MNCAR case, whereas (b) to a MAR case. In both cases, we obtain: $p^1_2 = \frac{P^\mc{M}(\mbb{I}^C=1,C^m=c_1,B^o=1)}{P^\mc{M}(\mbb{I}^C=1,B^o=1)}$. \ In case (b), since  $\mbb{I}^C$ and $C^m$ are independent given $B^o$, this can be further simplified into \ $p^1_2=\frac{P(C^m=c_1,B^o=1)}{P(B^o=1)}$. \ In fact, 
\begin{eqnarray}
p^1_2 &=& \frac{P(B^o=1,C^\star=\na,\mbb{I}^C=1,C^m=c_1)}{\sum_{C^m}P(B^o=1,C^\star=\na,\mbb{I}^C=1,C^m)}. \label{eq:frac2}
\end{eqnarray}

 For MG (a),
the numerator in (\ref{eq:frac2}) becomes:  

\vspace{1mm}
\begin{eqnarray*}
&&P(C^\star=\na|C^m=c_1,\mbb{I}^C=1) \times P(\mbb{I}^C=1|C^m=c_1,B^o=1) \times P(C^m=c_1) \times P(B^o=1)\\&& = P(\mbb{I}^C=1|C^m=c_1,B^o=1) \times P(C^m=c_1) \times P(B^o=1)\\&& = P(\mbb{I}^C=1,C^m=c_1,B^o=1).
\end{eqnarray*}
The denominator becomes: 
\begin{eqnarray*}
&&\sum\limits_{c \in \nit{dom}(C^m)} P(C^\star=\na|C^m=c,\mbb{I}^C=1) \times P(\mbb{I}^C=1|C^m=c,B^o=1) \times P(C^m=c) \times P(B^o=1)\\&& = \sum\limits_{c \in \nit{dom}(C^m)}P(\mbb{I}^C=1|C^m=c,B^o=1) \times P(C^m=c) \times P(B^o=1)\\&& =\sum\limits_{c \in \nit{dom}(C^m)}  P(\mbb{I}^C=1,C^m=c_1,B^o=1) \ = \ P(\mbb{I}^C=1,B^o=1).
\end{eqnarray*}
Then, \ $p^1_2 = \frac{P(\mbb{I}^C=1,C^m=c_1,B^o=1)}{P(\mbb{I}^C=1,B^o=1)}$.

\ignore{\frac{P(\mbb{I}^C=1|C^m=c_1,B^o=1) \times P(C^m=c_1) \times P(B^o=1)}{\sum_{c \in \nit{dom}(C^m)}P(\mbb{I}^C=1|C^m=c,B^o=1) \times P(C^m=c) \times P(B^o=1)}$ }

\vspace{1mm}
\h For MG (b),
the numerator becomes: 
\begin{eqnarray*}
&&P(C^\star=\na|C^m=c_1,\mbb{I}^C=1) \times P(C^m=c_1|B^o=1) \times P(\mbb{I}^C=1|B^o=1) \times P(B^o =1)\\
&&= P(C^m=c_1|B^o=1) \times P(\mbb{I}^C=1|B^o=1) \times P(B^o =1)\\&& = P(C^m=c_1,\mbb{I}^C=1|B^o=1) \times P(B^o =1) = P(C^m=c_1,\mbb{I}^C=1,B^o=1).
\end{eqnarray*}

\h The denominator becomes: 

\vspace{1mm}
\noindent
$\sum\limits_{c \in \nit{dom}(C^m)} P(C^m=c|B^o=1) \times P(\mbb{I}^C=1|B^o=1) \times P(B^o =1) = P(\mbb{I}^C=1,B^o=1)$.

Then, as for the MG (a), we obtain: \ $p^1_2 = \frac{P(\mbb{I}^C=1,C^m=c_1,B^o=1)}{P(\mbb{I}^C=1,B^o=1)}$.

\h In case (b), $\mbb{I}^C$ and $C^m$ are independent given $B^o$. Then, 
\begin{eqnarray*}
p^1_2 &=&\frac{P(\mbb{I}^C=1,C^m=c_1,B^o=1)}{P(\mbb{I}^C=1,B^o=1)} \ = \ \frac{P(\mbb{I}^C=1,C^m=c_1 | B^o=1)P(B^o=1)}{P(\mbb{I}^C=1|B^o=1)P(B^o=1)}\\
&= & \frac{P(\mbb{I}^C=1 | B^o=1)P(C^m=c_1 | B^o=1)P(B^o=1)}{P(\mbb{I}^C=1|B^o=1)P(B^o=1)}  \mbox{ \footnotesize{ \ \ (because  $\mbb{I}^C \independent C^m | B^o$)}}
\\
&=&P(C^m=c_1 |B^o=1)=\frac{P(C^m=c_1,B^o=1)}{P(B^o=1)}. \hspace{4.8cm} \blacksquare
\end{eqnarray*}
\end{example}

\section{Appendix: About Theorem \ref{thm:sharp}.}\label{sec:theo1}

For the notions used in the next result, we refer to Example \ref{ex:new}. 

\begin{lemma}\label{lemma:newOne} \em For every fixed underlying schema $\mc{S}$, and  TID $D^p$ for $\mc{S}$, one can efficiently compute (in the size of $D^p$) an  MG $\mc{M}$  whose variables are the attributes of an expanded schema $\mc{S}^\nit{ex}$, and  an associated  observed instance $D^\star$ for the observed schema $S^\star$, such that, for every BCQ $\mc{Q}$ for schema $\mc{S}$, it is possible to efficiently build a BCQ $\mc{Q}^\prime$ for the same schema, such that $\mc{Q}[D^p] = \mc{Q}^\prime[D^p(D^\star,\mc{M})]$.\boxtheorem
\end{lemma}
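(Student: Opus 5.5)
The plan is to encode each TID tuple's presence or absence as a hidden value that becomes missing in the observed instance, so that the BID blocks reproduce the TID's independent Bernoulli choices. The MG will be of MAR type, in the spirit of Figure \ref{fig:mg}(b), and its local distributions will make each block of $D^p(D^\star,\mc{M})$ carry exactly the TID probability.

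\emph{Construction.} For each relation $R(\bar{X})$ of $\mc{S}$ with TID tuples $t_1,\ldots,t_N$ and probabilities $p_1,\ldots,p_N$, I build an underlying relation $R'(K^o,\bar{X}^o,E^m)$ together with its expanded schema $R'^{\nit{ex}}(K^o,\bar{X}^o,E^\star,\mbb{I}^E,E^m)$. Here $K^o$ is a fully observed key-like attribute whose domain is $\{1,\ldots,N\}$, and $E^m$ is a Boolean ``existence'' attribute with $\nit{dom}(E^m)=\{0,1\}$. The MG has edges $K^o\to E^m$, $K^o \to \mbb{I}^E$, and $\bar{X}^o$ hanging off $K^o$ deterministically. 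The edges $E^m\to E^\star$ and $\mbb{I}^E \to E^\star$ are $\mbb{I}$-deterministic, as required in Section \ref{sec:mgs}. The local distributions are chosen as follows:
\begin{itemize}
\item $P^\mc{M}(K^o=i)=1/N$;
\item $P^\mc{M}(\bar{X}^o=t_i[\bar{X}]\mid K^o=i)=1$;
\item $P^\mc{M}(E^m=1\mid K^o=i)=p_i$;
\item $P^\mc{M}(\mbb{I}^E=1\mid K^o=i)=1/2$, or any positive value.
\end{itemize}
If some $p_i\in\{0,1\}$, I perturb the construction or simply drop or keep that tuple. The observed instance $D^\star$ contains one tuple $(i,t_i,\na)$ for each $i$. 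This is all linear in $|D^p|$, and $D^\star$ is compliant with $\mc{M}$ in the intended sense.

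\emph{Block probabilities.} I then compute the block probabilities by the same calculation as in Example \ref{ex:2mgs}(b). Since $\mbb{I}^E \independent E^m \mid K^o$ (a MAR mechanism), we get
$$p^{\nit{BID}}((i,t_i,1)) = P^\mc{M}(E^m=1\mid K^o=i,\bar{X}^o=t_i,E^\star=\na) = P^\mc{M}(E^m=1\mid K^o=i)=p_i.$$
Hence each block $B(\tau_i)$ consists of the two tuples $(i,t_i,1)$ and $(i,t_i,0)$ with probabilities $p_i$ and $1-p_i$. The blocks are independent. Consequently, the map sending a possible world $W$ to $\{t_i : (i,t_i,1)\in W\}$ is a bijection onto the TID's possible worlds that preserves probabilities.

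\emph{Query rewriting and main obstacle.} Given a BCQ $\mc{Q}$, I obtain $\mc{Q}'$ by replacing each atom $R(\bar{x})$ with $R'(k,\bar{x},1)$, using a fresh variable $k$ per atom. Note that queries may use constants such as $1$, which lie in $\nit{dom}$. Then $W\models\mc{Q}'$ iff the image of $W$ satisfies $\mc{Q}$, so $\mc{Q}[D^p]=\mc{Q}'[D^p(D^\star,\mc{M})]$. One caveat: $\mc{Q}'$ is over the derived schema $R'$ rather than literally $\mc{S}$; I read ``the same schema'' as the underlying schema associated with $\mc{M}$. The rewriting preserves self-join-freeness and non-hierarchicality, which is what is needed for Theorem \ref{thm:sharp}.

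The main obstacle I expect is the probability calculation. I must check that conditioning on the observed values, including the $\mc{I}$-deterministic $E^\star=\na$, truly yields $p_i$ and not a quantity distorted by $\mbb{I}^E$. This is exactly where the MAR independence $\mbb{I}^E\independent E^m\mid K^o$ is used. I must also check that the MG satisfies the structural restrictions of Section \ref{sec:mgs}, namely that $\mbb{I}^E$ depends only on $o$-variables and that $E^m$ is allowed to have the parent $K^o$.
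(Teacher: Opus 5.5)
Your proposal is correct and is essentially the construction the paper gives for this lemma, illustrated in Example~\ref{ex:reduc}. It has the same three ingredients: a Boolean attribute that is missing in every observed tuple; a MAR missingness graph whose local distributions give the value $1$ probability $p_i$ in each two-tuple block; and the rewriting of every atom by appending the constant $1$. The paper's $\mc{Q}'$ is likewise over the extended schema, so your reading of ``the same schema'' agrees with it. The differences are inessential: the key $K^o$ is not needed, since a TID's tuples are distinct and the paper conditions directly on the observed values. The edge $K^o\to\bar{X}^o$ conflicts with the convention of Section~\ref{sec:mgs} that $o$-variables are sources, but you can drop it, using any distribution on $\bar{X}^o$ that is positive on each $t_i$; your computation via $\mbb{I}^E \independent E^m \mid K^o$, giving $p^{\nit{BID}}((i,t_i,1))=p_i$, still goes through.
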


\vspace{-3mm}\h This result tells us that arbitrary TIDs can be obtained as special cases of the kind of BIDs we introduced. In fact, Lemma \ref{lemma:newOne} makes a more general claim than Theorem \ref{thm:sharp}:\ignore{Proposition \ref{lemma:BIDrepresentation}}\emph{Any BID can be obtained from a corresponding observed $D^\star$ and a MAR case of MG $\mc{M}$}. \ We now show an example that illustrates the proof of Lemma \ref{lemma:newOne}.

\begin{example} \label{ex:reduc}  Consider schema $\mc{S} = \{R(A,B), \ S(B)\}$ for a the TID in Table \ref{tab:reduc}(a). The schema for the associated observed DB with MVs is $\mc{S}^\star = \{R(A^o,B^o,M_1^\star), \ S(B^o,M_2^\star)\}$. \ Attributes $M_1^\star, M_2^\star$ may exhbit MVs, and $\nit{dom}(M_1^m) = \nit{dom}(M_2^m) = \{0,1\}$.

\h Table \ref{tab:reduc}(a) shows the initial TID that we want to represent as a BID, which we will obtain by first creating the observed instance in Table \ref{tab:reduc}(b). We concentrate on table $R$. 

\vspace{-2mm}
\begin{table}[h]\center
{\footnotesize $\begin{tabu}{c|c|c||c|}\hline
R&A & B& P\\ \hline
\tau_1 & a & b&p_1\\
\tau_2 & a^\prime & b^\prime &p_2\\ \hhline{~---}
\end{tabu}$~~~~~$\begin{tabu}{c|c||c|}
\hline
S&A&P\\ \hline
\tau_3 & a&p_3\\
\tau_4 & b&p_4\\
\tau_5 & b^\prime&p_5\\ \hhline{~--}
\end{tabu}$~~~~~~~~~~~~$\begin{tabu}{c|c|c|c|}\hline
R^\star&A^o & B^o&M_1^\star\\ \hline
\tau_1 & a & b &\na\\
\tau_2 & a^\prime & b^\prime&\na\\ \hhline{~---}
\end{tabu}$
~~~~~$\begin{tabu}{c|c|c|}
\hline
S^\star&A^o&M_2^\star\\ \hline
\tau_3 & a&\na\\
\tau_4 & b&\na\\
\tau_5 & b^\prime&\na\\ \hhline{~--}
\end{tabu}$}\vspace{1mm}
\caption{\ \ \ \ (a) Initial TID. \hspace{2cm} (b) Associated observed DB. } \label{tab:reduc}
\end{table}

\vspace{-5mm}
\h In order to obtain the BID, which should be the one in Figure \ref{fig:result}(b), we use, for $R^\star$, the MG in Figure \ref{fig:result}(a) (showing only the attributes for $R^\star$).

\h We need to define appropriate distributions in the MG, in such a way that we obtain column $P$ in Figure \ref{fig:result}(b). \ For example, for the first probability in that column, it should be: \ $p_1 = p(\tau_1^1) := P(M_1^m = 1| A = a, \ B=b, M_1^\star = \na)$.

\begin{figure}[h]
\begin{multicols}{2}
\centerline{\includegraphics[width=2.5cm]{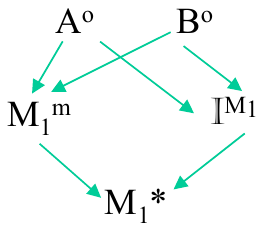}}

{\scriptsize   $\begin{tabu}{c|c|c|c||c|}\hline
R^p&A & B& M_1&P\\ \hline
\tau_1^1 & a & b &1 &p_1\\
\tau_1^2 & a & b&0&(1-p_1)\\ \hhline{~----}
\tau_2^1 & a^\prime & b^\prime &1&p_2\\
\tau_2^2 & a^\prime & b^\prime &0&(1-p_2)\\
\hhline{~----}
\end{tabu}$}
\end{multicols}
\vspace{-5mm}
\caption{\ \ \ (a) Missingness graph. \hspace{2cm} (b) Resulting BID.}\label{fig:result}
\end{figure}

\vspace{-5mm}\h For this MG, the probabilities are calculated as in Example \ref{app:details}, obtaining, for example, for the tuples in the first block of $R^p$:
{\footnotesize \begin{eqnarray*}
p(\tau_1^1) &=& P^\mc{M}(M_1^m=1|A=a, B=b) \times P^\mc{M}(A=a,B=b),\\
p(\tau_2^1) &=& P^\mc{M}(M_1^m=0|A=a, B=b) \times P^\mc{M}(A=a,B=b).
\end{eqnarray*}}

\h Accordingly, we define the probabilities in the MG in such a way that $p(\tau_1^1) = p_1, \ p(\tau_2^1) = (1-p_1)$, etc.

\h Now, consider the query posed to the original TID: \ $\mc{Q}\!: \ \exists x \exists y(R(x,y) \wedge S(y))$. In order to pose the query to the BID, rewrite it into: \ $\mc{Q}^\prime\!: \ \exists x \exists y(R(x,y,1) \wedge S(y,1))$, and we answer it via the BID.
\boxtheorem
    \end{example}

\h Theorem \ref{thm:sharp} follows from Lemma \ref{lemma:newOne}  and the $\#P$-hardness of BCQ evaluation on TIDs \cite{suciu}. It is good enough to pose to the TID a self-join-free BCQ $\mc{Q}$ that is non-hierarchical. We can see that the resulting query $\mc{Q}^\prime$ to be posed to the BID is also non-hierarchical. 

\h A full proof of Lemma \ref{lemma:newOne} can be obtained as a particular case of the proof of Proposition \ref{lemma:BIDrepresentation} below. The latter shows that every BID (not only a TID) can be represented as the BID associated to an observed DB and a  MAR MG.

\h 
For the proof of Proposition \ref{lemma:BIDrepresentation}
below,  we need some preliminary technical notions and simple results about them. We start with the definition of {\em equivalence between a pair of BIDs}, a condition that guarantees that their sets of possible worlds, along with the associated probabilities, coincide.

 \begin{definition} \em (equivalence of BIDs)
\label{def:bid-equiv}
Let $D$ and $D'$ be two BID over the same schema, $\mc{B}$ and $\mc{B}'$ their respective sets of blocks, and  $P^D, P^{D^\prime}$, their respective distributions over the blocks. \ $D$ and $D'$ are \emph{equivalent}, denoted $D \equiv D'$,
if there are bijections $g : D \to D'$, and 
$f : \mc{B} \to \mc{B}^\prime$,  
such that, for every $B \in \mc{B}$:
\begin{enumerate}
    \item[(a)] The restriction of $g$ to $B$ is a bijection from $B$ onto $f(B)$.
    \item[(b)] For every tuple $\tau \in B$, \ 
    $\tau\!\!\downarrow = g(\tau)\!\!\downarrow$, that is, the preimage and the image coincide modulo the tids.
    \item[(c)] For every tuple $\tau \in B$, \ $
    P^D_B(\tau) = P^{D^\prime}_{f(B)}\bigl(g(\tau)\bigr)$,
    where $P^D_B(\tau)$ denotes the probability of $\tau$ in block $B$. \ (Remember that $\tau$ may appear in more than one block, with different probabilities.)
\end{enumerate}
To make the bijections explicit, we sometimes write: \ $
D \stackrel{\tiny g,f}{\equiv} D'$.
\boxtheorem
\end{definition}


As a direct consequence of Definition~\ref{def:bid-equiv}, equivalence of BIDs preserves their possible-world semantics.

\begin{corollary}
\label{cor:equiv-bid} \em
Let $D$ and $D'$ be BIDs with $\mc{W}(D)$ and $\mc{W}(D')$ their respective sets of possible worlds. \ If 
$D \stackrel{{\tiny g,f}}{\equiv} D'$, there exists a bijection $
h : \mc{W}(D) \to \mc{W}(D')$, 
such that, for every $W \in \mc{W}(D)$ and  $\tau \in D$: \ (a) $\tau \in W$ if and only if $g(\tau) \in h(W)$; and 
(b) $P^D(W) = P^{D'}\!\bigl(h(W)\bigr)$.\boxtheorem
\end{corollary}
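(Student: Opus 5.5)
We define $h$ blockwise. By Definition~\ref{def:bid-equiv}, $f$ is a bijection between the block sets $\mc{B}$ and $\mc{B}'$, and the restriction $g|_B$ is a bijection from $B$ onto $f(B)$. A possible world $W \in \mc{W}(D)$ chooses, for each block $B \in \mc{B}$, at most one tuple $\tau_B \in B$. We let $h(W)$ be the world of $D'$ that chooses $g(\tau_B) \in f(B)$ for each block $f(B)$. When $W$ chooses nothing from $B$, which is allowed if the block probabilities sum to less than $1$, $h(W)$ chooses nothing from $f(B)$. Since every block of $D'$ has the form $f(B)$ for exactly one $B$, this gives a well-defined possible world of $D'$.

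Next we show that $h$ is a bijection. Its inverse is built the same way from $f^{-1}$ and $g^{-1}$. Because $g$ maps $B$ onto $f(B)$, the inverse $g^{-1}$ maps $f(B)$ back onto $B$, so applying one construction after the other returns the original choice in every block. Property (a) then holds by construction. If $\tau \in W$, then $\tau$ is the choice $\tau_B$ of its block $B$ (tids are global, so $\tau$ lies in a unique block), hence $g(\tau) \in h(W)$. Conversely, if $g(\tau) \in h(W)$, then $g(\tau)$ is the choice in $f(B)$, and since $g$ is injective on $D$ this forces $\tau_B = \tau$.

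For (b), we use the product form of world probabilities from Section~\ref{sec:prel}: $P^D(W) = \prod_{B} q_B(W)$. Here $q_B(W) = P^D_B(\tau_B)$ if $W$ chooses $\tau_B$ from $B$, and $q_B(W) = 1 - \sum_{\tau \in B} P^D_B(\tau)$ otherwise. Condition (c) gives $P^D_B(\tau_B) = P^{D'}_{f(B)}(g(\tau_B))$. Summing (c) over the bijection $g|_B$ shows that the total mass of $B$ equals that of $f(B)$, so the empty-choice factors also agree. Reindexing the product through the bijection $f$ then yields $P^D(W) = P^{D'}(h(W))$.

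The only delicate step is bookkeeping. We need $h$ to be defined in terms of tids rather than tuple values: the same tuple modulo tids may occur in several blocks with different probabilities, which is why the definition relies on $g$ and $f$ and not on condition (b). Beyond that, we must handle the empty-choice case consistently. Condition (b) is not needed for the corollary itself; it only guarantees that $W$ and $h(W)$ coincide as multisets modulo tids, and hence that they answer queries the same way.
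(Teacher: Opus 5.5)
Your proof is correct and is the blockwise argument the paper has in mind. The paper gives no proof beyond calling the corollary a direct consequence of Definition~\ref{def:bid-equiv}; your construction of $h$ via $f$ and $g|_B$, your treatment of the empty-choice factor, and your remark that condition (b) of the definition matters only for the subsequent query-answering claim fill in exactly what the paper leaves implicit.
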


\vspace{-4mm}
\h Accordingly, if $D \equiv D'$, the two instances are indistinguishable w.r.t. QA, as every query returns exactly the same answers on both, and with the same probabilities.

\h Using the notion of equivalence, we can now show that, for every BID, there is an observed instance, and a MG whose associated BID is equivalent to it.  The proof uses a MAR case of MG. This result can be seen as a {\em representation lemma}. For the notions and notation involved, see Section \ref{sec:genBIDs}, in particular, Definition \ref{def:blocks}.

\begin{proposition} \em
\label{lemma:BIDrepresentation}
For every BID $D$ over a schema $S$, there are an observed instance $D^\star$ for a schema $S^\star$, and a MAR MG $\mathcal{M}$, such that, $D \equiv D^p(D^\star,\mc{M})$. $D^\star$ can be constructed in polynomial-time and size in the size of $D$. \boxtheorem
\end{proposition}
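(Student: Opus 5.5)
The construction follows the pattern of Example~\ref{ex:reduc}, generalized from TIDs to arbitrary blocks. Let $D$ be a BID over $S$. Fix a relation $R(\bar{A})$ of $D$ with blocks $B_1,\ldots,B_r$, where block $B_i$ has tuples $\tau_i^1,\ldots,\tau_i^{\ell_i}$ with probabilities $p_i^1,\ldots,p_i^{\ell_i}$, and let $q_i := 1-\sum_j p_i^j \geq 0$.

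\textbf{Encoding.} The expanded schema gets a fully observed block-identifier attribute $K^o$ with $\nit{dom}(K^o)=\{1,\ldots,r\}$. It also gets a possibly-missing attribute $J^m$ with domain $\{0,1,\ldots,\ell_{\max}\}$, which selects the tuple within the block; value $0$ means no tuple is chosen. The observed instance $D^\star$ has, for each block $B_i$, one tuple $(i,\na)$ in $R^\star(K^o,J^\star)$. Its size is clearly linear in $D$.

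\textbf{Missingness graph.} The MAR MG $\mc{M}$ has edges $K^o \to J^m$, $K^o \to \mbb{I}^J$, and $J^m,\mbb{I}^J \to J^\star$, with $J^\star$ $\mbb{I}$-deterministic. The parameters are:
\begin{itemize}
\item $P(K^o=i)=1/r$;
\item $P(\mbb{I}^J=1 \mid K^o=i)=1$, or any positive value;
\item $P(J^m=j \mid K^o=i)=p_i^j$ for $1\le j\le \ell_i$, $P(J^m=0\mid K^o=i)=q_i$, and $0$ for $j>\ell_i$.
\end{itemize}
The computation of Example~\ref{ex:2mgs}(b) then applies, since $\mbb{I}^J \independent J^m \mid K^o$. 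It gives
$$p^{\nit{BID}}(J^m=j \mid K^o=i, J^\star=\na) = P(J^m=j\mid K^o=i) = p_i^j.$$
Hence the block $B((i,\na))$ of $D^p(D^\star,\mc{M})$ has exactly the tuples $(i,j)$ with the desired probabilities.

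\textbf{Positivity.} If positivity of the distribution matters, the zero entries can simply be dropped from the domain per block, or the BID's zero-probability tuples kept with probability zero.

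\textbf{Recovering the original tuples.} The tuples $(i,j)$ must correspond to the original $\tau_i^j$ modulo tids, as required by Definition~\ref{def:bid-equiv}(b). There are two options.
\begin{itemize}
\item Treat $\bar{A}$ itself as a missing vector attribute: each observed tuple is $(i,\na,\ldots,\na)$. The MG puts a joint conditional on $\bar{A}^m$ given $K^o$, placing mass $p_i^j$ on the values of $\tau_i^j$. A vector of attributes is represented as a chain of nodes $A_1^m\to A_2^m\to\cdots$, all with parent $K^o$ and a shared indicator.
\item Keep $(i,j)$ and note that a query rewriting translates between the two.
\end{itemize}
To get literal equivalence, I take the first option, with one indicator $\mbb{I}$ (child of $K^o$) shared by all $A_k^\star$. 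The conditional of $A_k^m$ given $K^o$ and $A_1^m,\ldots,A_{k-1}^m$ is defined by chain-rule factorization of the block's distribution. This is MAR because $\mbb{I}$ depends only on the fully observed $K^o$.

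\textbf{The empty-choice event.} Mass $q_i$ corresponds to the block contributing no tuple. Definition~\ref{def:blocks}(d) forces exactly one tuple per block, so this case needs care. The plan is to add a fresh sentinel value $\bot$ to the domains, with the $\bot$-tuple given probability $q_i$. The bijection $g$ is then defined on the non-$\bot$ tuples, and the $\bot$ tuples are argued to be query-invisible. Alternatively, if the BID is total ($q_i=0$), nothing is needed.

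\textbf{Verifying equivalence.} Define $f(B_i)=B((i,\na,\ldots))$, and let $g$ map $\tau_i^j$ to the tuple of that block having the same values. Conditions (a)–(c) of Definition~\ref{def:bid-equiv} follow from the probability computation above. Relations are handled independently with separate MGs, as the paper allows. Polynomial size holds because the CPTs have at most $\sum_i \ell_i \cdot |\bar{A}|$ nonzero entries.

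\textbf{Main obstacle.} I expect the hardest part to be the $q_i>0$ case. It is not literally representable under the one-tuple-per-block semantics, and needs the sentinel argument or a refinement of equivalence modulo the null choice.
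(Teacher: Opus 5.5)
Your construction uses the same idea as the paper's proof. There is one observed tuple per block whose data attributes are all \na, an observed value that distinguishes the blocks, and an MG in which the conditional of the missing attributes given that value is the block's distribution. The realization differs. The paper collapses $S$ into a single attribute $\mathtt{T}$ naming whole tuples, and marks block $B_i$ with $n$ binary attributes ($\tau_{B_i}$ shows $1$ on $A_i^\star$ and \na\ on the others). It makes $\mathtt{T}^m$ a child of the $A_l^\star$ and lists $P^{\mc{M}}$ row by row with mass $p_i^j/n$, so that $P^{\mc{M}}(\mathtt{T}^m=t_j\mid A_i^\star=1,\bigwedge_{l\neq i}A_l^\star=\na)=p_i^j$ is read off the table. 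You instead use a single fully observed source $K^o$ as common parent of $\bar{A}^m$ and of the indicator, and get $p^{\nit{BID}}=p_i^j$ from $\mbb{I}\independent\bar{A}^m\mid K^o$ plus $\mbb{I}$-determinism, as in Example~\ref{ex:2mgs}(b). Your graph respects the paper's own conventions on MGs: starred nodes stay sinks and the indicator has only an $o$-parent, so MAR is immediate. It also keeps the original attributes, and avoids the membership bookkeeping of the paper's $A_l$ encoding, which needs care when a tuple lies in several blocks. The price is the sparse chain of CPTs, which the paper's tuple-naming attribute $\mathtt{T}$ avoids.

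Two corrections to your framing. First, the case $q_i>0$ is not an obstacle you need to overcome. The paper's proof takes $\sum_j p_i^j=1$ as part of its notion of BID, in line with Definition~\ref{def:blocks}(d). For $q_i>0$ the equivalence cannot hold at all: every block of $D^p(D^\star,\mc{M})$ carries a conditional distribution summing to $1$, while conditions (a) and (c) of Definition~\ref{def:bid-equiv} force corresponding blocks to have equal sums. A sentinel $\bot$ adds a tuple to $f(B_i)$, which breaks (a), and gives only a query-rewriting correspondence like the $M_1=1$ selection in Example~\ref{ex:reduc}. So simply state the total-block assumption.

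Second, your first option is still not literally equivalent. Each completion in $B((i,\na,\ldots,\na))$ is $(i,\tau_i^j[\bar{A}])$, so condition (b) holds only after projecting $K$ away. Also, the zero-probability completions in $\{i\}\times\nit{dom}(\bar{A}^m)$ must be discarded to get the bijection in (a). The paper's proof carries the same caveats: its derived tuples have the extra columns $A_1,\ldots,A_n$, and only $\mathtt{T}$-values are matched. The first caveat is unavoidable in general, since tids cannot enter an MG and any observed block-identifying value is copied into every completion.
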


\vspace{-5mm}
\noindent {\bf Proof}: \
Assume $D$ consists of $n$ blocks $B_1, \cdots, B_n$, each block $B_i$ containing a set of tuples $t_{i1}, \cdots, t_{1{m_i}}$.
W.l.o.g., we assume that the schema $S$ consists of a single attribute $\mathtt{T}$ that encodes each tuple. 
\ Formally, $S = \{\mathtt{T}\}$, and for every $t_{ij} \in B_i$, $t_{ij}[\mathtt{T}] := \mathtt{``t_{ij}\!\!"}$ (used a value, a constant). With $p_i^j$ we denote the probability of tuple $t_j$ in block $B_i$. 
\ By the definition of a BID (as used in our work): \ $
\sum\limits_{j=1}^{m_i} p_i^j = 1 \quad \text{for each } i \in \{1, \cdots, n\}
$. \ Figure \ref{fig:arbitrarydBID} shows BID $D$. 

\vspace{-3mm}

\begin{figure}[h!]
    \centering
{\scriptsize $\begin{array}{c|c|c|c|}
\hline
\textbf{blockId} & \textbf{tupleId} &  \mathtt{T}  & P   \\ \hline
 B_1 & \tau_1^1  &  t_{11} &   p_1^1\\ \cline{3-4}
  &  \cdots  &  \cdots & \cdots  \\ \cline{3-4}
& \tau_1^{m_1}  &  t_{1{m_1}} &   p_1^{m_1}\\ \hline
  &  \cdots  &  \cdots & \cdots  \\ \hline
 B_n & \tau_n^1  &  t_{n1} &  p_n^1 \\ \cline{3-4}
     &  \cdots  &  \cdots & \cdots  \\ \cline{3-4}
& \tau_n^{m_n}  &  t_{n{m_n}} &   p_n^{m_n}\\ \hline
\end{array}$}
\caption{A BID $D$}
    \label{fig:arbitrarydBID}
\end{figure}

\vspace{-4mm}
 \h We now build  an observed  DB with MVs $D^\star$, a MG $\mc{M}$, and a joint distribution $P^{\mc{M}}$. From this construction, we obtain a BID $D^p$, for which $D^p \equiv D$ holds.

 \h
$D^\star$ is an instance for  schema $\{\mathtt{T}^\star, A^\star_1, \cdots, A^\star_n\}$. Attribute $\mathtt{T}$ has domain $\nit{dom}(\mathtt{T}) = \{t_1, \cdots, t_n\}$, while each attribute $A_i$, $i \in \{1, \cdots, n\}$, has domain $\nit{dom}(A_i) := \{0, 1\}$. Each attribute $A_i$ encodes the block $B_i$, with values $0$ or $1$ depending on whether the tuple belongs to the block or not.
\ For example, for a tuple $t$ in $D^p$ with $t[T] = ``t"$, $t[A_1] = t[A_3] = 1$ tell us  that $t$ belongs to blocks $B_1$ and $B_3$. \ See Figure~\ref{fig:proof-dstar} for an illustration of $D^\star$.

 \begin{figure}[h!]
    \centering
{\scriptsize  $\begin{array}{c|c|c|c|c|}
\hline
\mathbf{blockId} &  \mathtt{T}^\star & A^\star_1 & \cdots & A^\star_n  \\ \hline
 \tau_{B_1}  &  \na & 1 & \na & \na  \\ \cline{2-5}
  \cdots  &  \cdots & \cdots & \cdots & \cdots  \\ \cline{2-5}
   \tau_{B_n}  &  \na & \na & \na & 1  \\ \cline{2-5}
\end{array}$}
\vspace{2mm}\caption{Observed DB $D^\star$ with MVs}
    \label{fig:proof-dstar}\vspace{-0.8cm}
\end{figure}

\h
$D^\star$ contains $n$ tuples $\tau_{B_i}$, each corresponding to a block $B_i$ in  BID $D$. Each tuple $\tau_{B_i}$ has value $1$  for the attribute $A_i$, and value $\na$ for all the other  attributes.  That is, $\tau_{B_i}[A_i] = 1$, and $\tau_{B_i}[\mathtt{T}] = \tau_{B_i}[A_j] = \na$, for $j \neq i$.

\h The MG $\mathcal{M}$, shown in Figure~\ref{fig:proof-mg}, encodes the dependency structure between attributes in the incomplete database. \ $\mathcal{M}$ specifies that the value of  attribute $\mathtt{T}$ depends on the values of the attributes ${A^\star_i}$.

\vspace{-5mm}
\begin{figure}[h!]
    \centering
    \includegraphics[width=5cm]{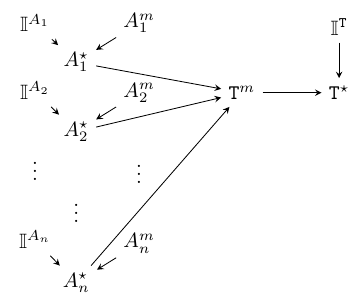}
   \caption{ \ The MG $\mathcal{M}$} \ 
    \label{fig:proof-mg}\vspace{-0.6cm}
\end{figure}







%

\ignore{++++
\begin{figure}[h]
    \centering
    \begin{tikzpicture}[->, >=stealth, node distance=0.6cm and 1cm]
        \node (IB1) {$\mbb{I}^{A_1}$};
        \node (IB2) [below=of IB1] {$\mbb{I}^{A_2}$};
        \node (IBdots) [below=of IB2] {$\vdots$};
        \node (IBn) [below=of IBdots] {$\mbb{I}^{A_n}$};

        \node (B1) [right=of IB1] {$A^m_1$};
        \node (B2) [below=of B1] {$A^m_2$};
        \node (Bdots) [below=of B2] {$\vdots$};
        \node (Bn) [below=of Bdots] {$A^m_n$};

        \node (B1star) [below=of IB1,xshift=0.7cm,yshift=0.5cm] {$A^\star_1$};
        \node (B2star) [below=of B1star] {$A^\star_2$};
        \node (Bstardots) [below=of B2star] {$\vdots$};
        \node (Bnstar) [below=of Bstardots] {$A^\star_n$};

        \node (T) [right=of B2] {\texttt{$\mathtt{T}^m$}};

        \node (TStar) [right=of T] {\texttt{T}$^\star$};
        \node (IT) [above=of TStar] {$\mbb{I}^{\texttt{T}}$};

        \draw[->] (IB1) -- (B1star);
        \draw[->] (IB2) -- (B2star);
        \draw[->] (IBn) -- (Bnstar);

        \draw[->] (B1) -- (B1star);
        \draw[->] (B2) -- (B2star);
        \draw[->] (Bn) -- (Bnstar);

        \draw[->] (B1star) -- (T);
        \draw[->] (B2star) -- (T);
        \draw[->] (Bnstar) -- (T);

        \draw[->] (T) -- (TStar);

        \draw[->] (IT) -- (TStar);

    \end{tikzpicture}
    \caption{\red{The MG $\mathcal{M}$}}
    \label{fig:proof-mg}\vspace{-0.9cm}
\end{figure}
++++++++++++++++++++++++++}

\h The joint distribution $P^{\mathcal{M}}$, shown in Figure~\ref{fig:proof-dist}, contains one row $J^j_{B_i}$ for each tuple $t_j$ in block $B_i$. Each row is defined as follows: \ (a) $J^j_{B_i}[T^m] = t_j$. \ (b) 
$J^j_{B_i}[A_l^m] := \text{in}(t_j, B_l)$, with $\text{in}(t_j, B_l) := 0$ \ if \ $t_j \in B_l$, and $0$, otherwise. \ (c) $J^j_{B_i}[A^\star_l] = 1$ \ if \ $t_j \in B_l$, and $\na$, otherwise; and $J^j_{B_i}[\mbb{I}^{A_l}] = 0$ \ if \ $t_j \in B_l$, and $1$, otherwise. 

\h The probability associated with each row in $D^p$ is: \ $
\Pr(J^j_{B_i}) = \frac{p^j_i}{n}$. \ 
These probabilities are normalized, ensuring that
\ $\sum_{i=1}^n \sum_{j=1}^{|B_i|} P^{\mathcal{M}}(J^j_{B_i}) = 1$.

\vspace{-2mm}
\begin{figure}[h!]
    \centering
{\scriptsize $\begin{array}{c|c|c|c|c|c|c|c|c|c|c|c|c|}
\hline
  \mathbf{tupleId} & \mathtt{T}^m & A_1^m & \cdots & A_n^m& A_1^\star & \mbb{I}^{A_1}&\cdots & A_n^\star& \mbb{I}^{A_n}& \mathtt{T}^\star &  \mbb{I}^{\mathtt{T}} &  \nit{Prob} \\ \hline
J^1_{B_1} & t_1 & in(t_1,B_1) & \cdots &   in(t_1,B_n) & 1 & 0 & \cdots & \na & 1 & \na & 1 & \frac{p^1_1}{n} \\ \cline{2-13}
\cdots & \cdots & \cdots & \cdots & \cdots & \cdots & \cdots & \cdots & \cdots & \cdots & \cdots & \cdots & \cdots\\ \cline{2-13}
J^n_{B_1} & t_n & in(t_n,B_1) & \cdots &   in(t_n,B_n) & 1 & 0 & \cdots & \na & 1 & \na & 1 & \frac{p^n_1}{n} \\ \cline{2-13}
\cdots & \cdots & \cdots & \cdots & \cdots & \cdots & \cdots & \cdots & \cdots & \cdots & \cdots & \cdots & \cdots\\ \cline{2-13}
J^n_{B_n} & t_n & in(t_n, B_n) & \cdots &   in(t_n, B_n) & \na & 1 & \cdots & 1 & 0 & \na & 1 & \frac{p^n_n}{n} \\ \cline{2-13}
\end{array}$}
\vspace{2mm}\caption{Joint distribution $P^{\mc{M}}$}
    \label{fig:proof-dist}\vspace{-8mm}
\end{figure}

\ignore{++
\[
\begin{aligned}
& J^j_{B_i}[T^m] = t_j, \\
& J^j_{B_i}[A_l^m] = \text{in}(t_j, B_l), \quad \text{where } \\ &~~~~~~~~~~~~~~~~~~~~~~~\text{in}(t_j, B_l) =
\begin{cases}
0 & \text{if } t_j \in B_l, \\
1 & \text{otherwise},
\end{cases} \\
& J^j_{B_i}[A^*_l] =
\begin{cases}
1 & \text{if } t_j \in B_l, \\
\text{na} & \text{otherwise},
\end{cases}\\
&
J^j_{B_i}[I^{A_l}] =
\begin{cases}
0 & \text{if } t_j \in B_l, \\
1 & \text{otherwise}.
\end{cases}
\end{aligned}
\]
++}

\begin{figure}[h!]
    \centering
{\scriptsize $\begin{array}{cc|c|c|c|c|c|}
\hline
 \mathbf{blockId} & \mathbf{tupleId} &  \mathtt{T} & A_1 & \cdots & A_n & P   \\ \hline
 f(\tau_{B_1}) & \tau_{B_1}^1  &  t_1 & 1 & \cdots & A_{n,1} &  p_1^1\\ \cline{3-7}
  &  \cdots  &  \cdots & \cdots & \cdots & \cdots & \cdots  \\ \cline{3-7}
& \tau_{B_1}^{n}  &  t_n & 1 & \cdots & A_{n,n} & p_1^n\\ \hline
\cdots  &  1  &  \cdots & \cdots & \cdots & \cdots & \cdots  \\ \hline
 f(\tau_{B_n}) & \tau_{B_n}^1  &    t_1 & A_{1,1} & \cdots & 1 &  p_n^1\\ \cline{3-7}
  &  \cdots  &  \cdots & \cdots & \cdots & \cdots & \cdots  \\ \cline{3-7}
& \tau_{B_1}^{n}  &  t_n & A_{1,n} & \cdots & 1 & p_n^n\\ \hline
\end{array}$}
\vspace{2mm}\caption{BID $D^p$ derived from $D^\star$}
    \label{fig:derivedBID}\vspace{-0.5cm}
\end{figure}

\h The BID instance $D^p$, derived from $(D^\star, \mathcal{M},P^{\mathcal{M}})$ is shown in Figure~\ref{fig:derivedBID}. \ It contains a block $f(\tau_{B_i})$ for each  tuple with MVs $\tau_{B_i}$ in $D^\star$. \ Each block $f(\tau_{B_i})$ consists of tuples $\tau^j_{B_i}$ that correspond to the possible imputations of $\tau_{B_i}$, derived from tuples $t_j$ encoded in $D^p$.

\h The probability $P^{\mathcal{M}}(\tau^j_{B_i})$ assigned to a tuple $\tau^j_{B_i}$ in block $B_i$ of the BID $D^p$ is as follows:
{\footnotesize \begin{equation}
P^\mathcal{M}(\tau^j_{B_i}) =
P^\mathcal{M}\Big(
    T^m = t_j \;\Big|\;
    \bigwedge_{l=1}^n A^m_l \in (t_j, B_l), \  \mathtt{T}^\star = \text{na}, \ 
A^\star_i = 1, \mbb{I}^{A_i} = 0, \ 
\bigwedge_{l \neq i} A^\star_l = \na, \ 
\bigwedge_{l \neq i} \mbb{I}^{A_l} = 1
\Big). 
\end{equation}}
\h According to the MG $\mc{M}$, $T^m$ depends \emph{only} on $A^\star_l$, for $l \in [1, n]$.
Hence, we can simplify the conditional probability to:

\vspace{-5mm}
{\footnotesize \begin{equation*}
\begin{aligned}
P^\mathcal{M}(\tau^j_{B_i}) &= \ P^\mathcal{M}(\mathtt{T}^m = t_j \mid A^\star_i = 1, \bigwedge_{l \neq i} A^\star_l = \na) \ = \ \frac{P^\mathcal{M}(\mathtt{T}^m = t_j,  A^\star_i = 1,  \bigwedge_{l \neq i} A^\star_l = \na)}{P^\mathcal{M}(A^\star_i = 1, \bigwedge_{l \neq i} A^\star_l = \na)} \\
&= \ \frac{\frac{p^j_i}{n}}{\sum\limits_{l=1}^n \frac{p^l_i}{n}} \ = \ \frac{p^j_i}{\sum\limits_{l=1}^n p^l_i} \ = \ p^j_i.
\end{aligned}
\end{equation*}}

\h The following lemma establishes that the transformation from a BID instance $B$ to the derived instance $D^p$ via the intermediate construction $(D^*, \mathcal{M}, P^{\mathcal{M}})$ preserves the probabilistic semantics over the original schema $S$.

\h It should be clear that all the elements associated to the observed DB $D^\star$ can be built in polynomial-time in the size of the original BID DB $D$. \ It remains to verify that for the BID $D^p$ associated to $D^\star$, it holds: \ $D \equiv D^p$.

\ignore{++
\begin{lemma} \em
Let $D$ be an arbitrary BID over a schema $S = \{\mathtt{T}\}$. Let $D^*$, $\mathcal{M}$, and the joint distribution $P^{\mathcal{M}}$ be constructed from $D$ as described above. Then, we have
$D^p \equiv B$, where  $D^p$ is the BID derived from $D^*$, $\mathcal{M}$, and $P^{\mathcal{M}}$.
\end{lemma}

\vspace{1mm}
\noindent {\bf Proof:}++}

\h In fact, let $B_i$, with $i \in \{1, \cdots, n\}$, denote the blocks in the BID  $D$.
By construction, the mapping $f$ which assigns each block
 $B_i$ in $B$  to the block $f(\tau_{B_i})$ in $D^p$ is bijective. Additionally, there is a one-to-one correspondence between the tuples in each block $B_i$ and those in the corresponding block $\mathcal{B}(\tau_{B_i})$, such that:
$\tau_i^j \in B_i, \text{ with } \tau_i^j[\mathtt{T}] = \mathtt{``t_j\hspace{-1mm}"},\text{ and: } p(\tau_i^j) = p_i^j \ \mbox{ iff } \
\tau_{B_i}^j \in \mathcal{B}(\tau_{B_i}), \text{ with } \tau_{B_i}^j[\mathtt{T}] = \mathtt{``t_j}\hspace{-1mm}", \text{ and } p(\tau_{B_i}^j) = p_i{}^j$.
\ We conclude that \ $D \equiv D^p$.\boxtheorem



\section{Appendix: Proof of Theorem \ref{th:complresultsmcc}.}

\h The results in Theorem \ref{th:complresultsmcc}(a) \ignore{\ref{thm:mcc}} are obtained through a  reduction of the \textbf{MCC}-problem to a particular case of {\em minimum-cost flow problem  in bipartite graphs} (MCFP) \cite[chap. 6]{ahuja1993}. The key insight is that finding a most-compliant class corresponds to solving a ``min-cost flow problem" where blocks are matched to tuple multiplicities, while minimizing the distance objective.

\ignore{++++
\h \red{The MCFP, as appearing  in our context, can be formulated in general as follows: \ Consider a directed bipartite graph $\mc{G} = \langle V_1 \stackrel{\cdot}{\cup} V_2, E\rangle$, with edges in $E$ from vertices in $V_1$ to vertices in $V_2$. Each edge $(o,d)$ has a numerical weight (or cost) $w(o,d)$ and a capacity $c(o,d)$. Furthermore, $\mc{G}$ becomes $\mc{G}^\star = \langle \{s_1,s_2\} \cup V_1 \cup V_2, E^\star\rangle$ by adding to extra nodes, $s_1, s_2$, for {\em source} and {\em sink} nodes. The former is connected to all nodes in $V_1$, and the latter becomes a connection for the vertices in $V_2$. 
The new arcs from $s_1$ have capacity $1$ and weight $0$; whereas the arcs pointing to $s_2$ have capacity $1$ and weight $w(v,s_2) \geq 0$. 
\ The optimization problem is that of minimizing the overall cost $\sum_{e \in E^\star} w(e)$, subject to the constraints XXXX}
\comlb{Finish this up above. Below is it stated in terms of "lower-bounds".}
+++++}

\h We first recall the definition  of {\em convex-separable} distance \cite{ahuja1993} as applied to our setting.

\begin{definition} \em (convex-separable distance/divergence) \label{def:sep}\\
(a) A distance $d(C_{\mathbf{k}}, \mathcal{M})$, for classes $C_\mbf{k}$ with
$\mathbf{k}=\langle k_1, \ldots, k_m\rangle$, is \emph{separable} if it can be
expressed as $d(C_{\mathbf{k}}, \mathcal{M})=
\sum_{j=1}^m d_j\!\bigl(k_j, P^\mathcal{M}(t_j)\bigr)$, where
$d_j(k_j,P^\mc{M}(t_j))$ is a function that depends on the number of occurrences
of tuple $t_j$ in $C_\mbf{k}$ ($t_j$'s contribution to the distance). \\
(b) $d_j(k_j, P^\mathcal{M}(t_j))$ is {\em discretely convex} (in its first
argument) if, for all integers $k_j \ge 1$, the discrete {\em marginal
contributions} are non-decreasing: \
$d_j(k_j+1, P^\mathcal{M}(t_j)) - d_j(k_j, P^\mathcal{M}(t_j)) \geq
d_j(k_j, P^\mathcal{M}(t_j)) - d_j(k_j-1, P^\mathcal{M}(t_j))$; and
{\em strictly discretely convex} if these inequalities are strict ($>$) for all
$k_j \ge 1$. \\
(c) $d(C_\mbf{k},\mc{M})$ is {\em convex-separable} if, in addition, each of the
$d_j$ is discretely convex; and {\em strictly convex-separable} if each
$d_j$ is strictly discretely convex.
\boxtheorem
\end{definition}

\h We formalize now the reduction of our MCC-setting (and problems) to a MCFP.
The reduction constructs a flow network whose feasible integral flows are in one-to-one correspondence with the possible worlds of the BID; and the cost of a flow equals the compliance distance of the corresponding class-vector.

\subsection{From MCC to MCFP.}\label{sec:constr}

We first recall the ingredients of our MCC-problems. \ $D^\star$, the observed instance, has an associated BID $D^p$ with a set of blocks $\mc{B} = \{B_1,\dots,B_n\}$; the MG $\mathcal{M}$ indices a joint probability distribution  $P^\mathcal{M}$ on the tuples in $D^p$. The different tuples (no duplicates) are $T = \{t_1, \ldots, t_m\}$. \ See  Remark \ref{rem:canon} for some notation and terms we use here.

\h An instance of the MCC-problem (to be later reduced to a MCFP-problem,) is given by:
\ (a)  The set of \emph{blocks}  $\mc{B}$.
    \ (b)   The set $T$ of distinct \emph{tuples}. \ 
(c) For each block $B_i$ and tuple $t_j$, with $t_j \in B_i$, a probability $p_i(t_j) \in [0,1]$. Notice that a tuple in $T$ may appear in more than one block, and with different probabilities.

\h The goal is to compute an \emph{admissible class-vector} $\mathbf{k}^\star = \langle k_1^\star,\dots,k_m^\star\rangle \in \mathbb{N}^m$, such that:
\ignore{\[
C_{\mathbf{k}^\star} \ \ \in \ \ \argmin_{\substack{\mathbf{k}\in\mathbb{N}^m\\ \sum_{j=1}^m k_j = n}} d^c\!\left(C_{\mathbf{k}}, \mathcal{M}\right).
\]}
\begin{equation}
C_{\mathbf{k}^\star} \ \ \in \ \ \argmin_{\mathbf{k}\in\mathbb{N}^m,\ {\tiny \sum_{j=1}^m} k_j = n} \!\!\!\! \!\! d^c\!\left(C_{\mathbf{k}}, \mathcal{M}\right).\label{eq:vector}
\end{equation}

\h The want to obtain an optimal class-vector as in (\ref{eq:vector}) from an optimal solution to an MCFP. For the latter, 
 we construct a {\em flow network}, namely, a directed graph $\mc{G}(D^p) = \langle V,E\rangle$ with a source and sink nodes $s_l$ and $s_r$, for left and right, resp.; as we now proceed to explain.

\begin{definition} \em (flow network $\mc{G}(D^p)$)\label{def:network}
Let $D^p(D^\star,\mc{M})$ be a BID with blocks $\mc{B}=\{B_1,\dots,B_n\}$ and
tuples $T=\langle t_1,\dots,t_m\rangle$. The associated network is a directed
graph $\mc{G}(D^p)=(V,E)$ equipped with a lower-bound function
$\ell:E\to\mathbb{N}$, a capacity function $c:E\to\mathbb{N}$, and, on the
sink-arcs, a family of convex arc-cost functions.

\smallskip\noindent\textbf{Nodes.}\quad
$V=\{s_l\}\cup L\cup R\cup\{s_r\}$, where
$L=\{b_i \mid B_i\in\mc{B}\}$ is the set of \emph{block-nodes},
$R=\{t_j \mid t_j\in T\}$ is the set of \emph{tuple-nodes},
and $s_l,s_r$ are the \emph{source} and \emph{sink}. The pair $(L,R)$ forms
the bipartition of the middle layer.

\smallskip\noindent\textbf{Arcs.}\quad
$E=E_{s_l}\cup E_{LR}\cup E_{s_r}$, defined layer by layer:
\begin{enumerate}
  \item \emph{Source arcs:} $E_{s_l}=\{(s_l,b_i)\mid b_i\in L\}$, with
        $\ell(s_l,b_i)=c(s_l,b_i)=1$.
  \item \emph{Middle arcs:} $E_{LR}=\{(b_i,t_j)\mid t_j\in B_i \text{ and } p_i(t_j)>0\}$,
        with $\ell(b_i,t_j)=0$, and $c(b_i,t_j)=1$.
  \item \emph{Sink arcs:} $E_{s_r}=\{(t_j,s_r)\mid t_j\in R\}$, with
        $\ell(t_j,s_r)=0$, and
$c(t_j,s_r)=\operatorname{in\text{-}deg}(t_j)$, the number of middle
        arcs entering $t_j$ (equivalently, the number of blocks that may select
        $t_j$).
\end{enumerate}

\vspace{-2mm}\noindent\textbf{Cost.}\quad
\begin{enumerate}
    \item All arcs in $E_{s_l}\cup E_{LR}$ have zero cost. 
    \item Each sink-arc
$(t_j,s_r)$ is assigned a convex \emph{cost function}, \ 
$\operatorname{cost}_j\!\!: \{0,1,\dots,n\}\to\mathbb{R}_{\ge 0}$, of
 non-negative integer throughput variables $x_j$, defined by:
\begin{equation}\label{eq:fj}
  cost_j(x_j)\;:=\;d_j\!\left(\tfrac{x_j}{n},\,P^{\mc{M}}(t_j)\right),
\end{equation}
where $d_j$ is the $j$-th separable component of the distance $d$ \ (see Definition~\ref{def:sep}). The argument
$x_j$ ranges over admissible throughput values on $(t_j,s_r)$; it is
instantiated at the flow value $f(t_j,s_r)$ once a flow $f$ is fixed (see Definition~\ref{def:flow} right below).
\end{enumerate}
\end{definition}

\begin{definition} \em (feasible integral flow) \label{def:flow}
\ (a) \ A \emph{feasible integral flow} on $\mc{G}(D^p)$ is a function
$f:E\to\mathbb{N}$ that satisfies the following constraints:

\begin{enumerate}
\item {\bf Integrality.}
\ignore{\smallskip\noindent\textbf{(i) Integrality.}\quad}
For every arc $e\in E$, $\;f(e)\in\mathbb{N}$.


\item {\bf Bound constraints.}
For every arc $e\in E$, $\;\ell(e)\le f(e)\le c(e)$; namely:
\begin{eqnarray*}
 &&f(s_l,b_i) = 1, \ \mbox{ for } \ b_i\in L, \\
&&0\le f(b_i,t_j) \le 1,  \ \mbox{ for } \ (b_i,t_j)\in E_{LR}, \\
  &&0\le f(t_j,s_r) \le \operatorname{in\text{-}deg}(t_j), \ \mbox{ for } \ t_j\in R.
\end{eqnarray*}

\ignore{\begin{align*}
  f(s_l,b_i) &= 1, && b_i\in L, \\
  0\le f(b_i,t_j) &\le 1, && (b_i,t_j)\in E_{LR}, \\
  0\le f(t_j,s_r) &\le \operatorname{in\text{-}deg}(t_j), && t_j\in R.
\end{align*}}

\vspace{1mm}
\item {\bf Flow conservation at block-nodes.} \ 
For every $b_i\in L$, the inflow from the source equals the outflow into the
middle layer:
\[
  f(s_l,b_i)\;=\!\!\sum_{j:\,(b_i,t_j)\in E_{LR}}\!\! f(b_i,t_j).
\]
Combined with {\bf 2.}, this forces each block-node to send exactly one unit along
a single middle arc.

\item {\bf Flow conservation at tuple-nodes.}
For every $t_j\in R$, the inflow from the middle layer equals the outflow into
the sink:
\[
  \sum_{i:\,(b_i,t_j)\in E_{LR}}\!\! f(b_i,t_j)\;=\;f(t_j,s_r).
\]
\end{enumerate}

\smallskip\noindent
The source $s_l$ and the sink $s_r$ are the only nodes exempt from
conservation; by {\bf 3.} and the source bounds, the total flow leaving $s_l$ and
entering $s_r$ equals $n$. 

\vspace{1mm}
\noindent (b) \smallskip\noindent\textbf{Induced flow-vector and cost.}\quad
Every feasible integral flow $f$ induces the \emph{flow-vector}:

\vspace{-5mm}\begin{equation}\label{eq:flow-vector}
  \mathbf{k}(f)
  \;=\;\bigl\langle f(t_1,s_r),\,f(t_2,s_r),\,\dots,\,f(t_m,s_r)\bigr\rangle
  \;\in\;\mathbb{N}^{m},
  \ \  \mbox{ with } \ \textstyle\sum_{j=1}^{m} f(t_j,s_r)=n;
\end{equation}
and has \emph{cost} obtained by evaluating each sink-arc cost
function \eqref{eq:fj} at the realized throughput $x_j=f(t_j,s_r)$, as follows:
\[
  cost(f)\;=\;\sum_{j=1}^{m}\operatorname{cost}_j\bigl(f(t_j,s_r)\bigr).
\] 

\vspace{-10mm}\boxtheorem
\end{definition}

\begin{definition} \em (minimum-cost flow problem on $\mc{G}(D^p)$) \ \label{def:mcf}
For the network $\mc{G}(D^p)$ in Definition~\ref{def:network}, and 
$\mathcal{F}$ the set of feasible integral flows on it
(as in Definition~\ref{def:flow}). \  The \emph{minimum-cost flow problem}  is computing:
\[
  \min_{f\in\mathcal{F}}\; cost(f)
  \;:=\;\min_{f\in\mathcal{F}}\;\sum_{j=1}^{m}\operatorname{cost}_j\bigl(f(t_j,s_r)\bigr),
\]
with $cost$ and $\operatorname{cost}_j$ as in Definition~\ref{def:flow}. A flow
$f^\star\in\mathcal{F}$ attaining the minimum is an \emph{optimal flow}, and
$cost(f^\star)$ is its \emph{optimal cost}. \boxtheorem
\end{definition}

\begin{remark} (flow value fixed by feasibility) \label{rem:value}
For every $f\in\mathcal{F}$, the constraint in  Definition~\ref{def:flow}.{\bf 2.} \ 
requires $f(s_l,b_i)=1$, for all $b_i\in L$. Then, the value of $f$ satisfies:
\[
  \operatorname{val}(f)\;=\;\sum_{b_i\in L} f(s_l,b_i)\;=\;n\;=\;|L|.
\]
Thus, every feasible integral flow carries exactly $n$ units from $s_l$ to
$s_r$ and is, in particular, a maximum flow. \boxtheorem
\end{remark}

\begin{lemma} \em (complexity) \label{lem:complexity} \ 
Assume each sink-arc cost, $\operatorname{cost}_j$, is convex. Then, the
minimum-cost flow problem in Definition~\ref{def:mcf} is solvable in polynomial-time, and admits an integral optimal flow.
\end{lemma}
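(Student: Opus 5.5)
We reduce to a standard convex-cost flow with linear costs by the usual arc-splitting construction \cite[chap.~14]{ahuja1993}. Each sink arc $(t_j,s_r)$, with capacity $u_j:=\operatorname{in\text{-}deg}(t_j)\le n$, is replaced by $u_j$ parallel unit-capacity arcs $e_{j,1},\dots,e_{j,u_j}$ from $t_j$ to $s_r$. Arc $e_{j,r}$ gets the constant cost
\[
c_{j,r}:=\operatorname{cost}_j(r)-\operatorname{cost}_j(r-1),
\]
that is, the $r$-th marginal contribution. The additive constant $\sum_j \operatorname{cost}_j(0)$ is recorded separately. The new network $\mc{G}'$ has $O(n+m)$ nodes and $O(|E_{LR}|+nm)$ arcs, which is polynomial in the size of the BID. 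The arc costs are computed by $O(nm)$ evaluations of the $d_j$. We assume, as usual, that these evaluations can be done to the required precision in polynomial time; for KL and similar divergences this means working with rational approximations or in a model with exact real arithmetic.

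The key step is to show that the minimum cost in $\mc{G}'$ equals the minimum in Definition~\ref{def:mcf}, and that optimal flows correspond to each other. By discrete convexity (Definition~\ref{def:sep}(b)), the costs $c_{j,1}\le c_{j,2}\le\dots\le c_{j,u_j}$ are non-decreasing. Suppose a flow in $\mc{G}'$ routes $x_j$ units out of $t_j$. Moving a unit from a higher-indexed arc to a free lower-indexed one never increases the cost. Hence there is an optimal flow that uses exactly the prefix $e_{j,1},\dots,e_{j,x_j}$. That prefix costs $\sum_{r\le x_j}c_{j,r}=\operatorname{cost}_j(x_j)-\operatorname{cost}_j(0)$, by telescoping. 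Conversely, any feasible flow $f$ of $\mc{G}(D^p)$ lifts to a prefix flow in $\mc{G}'$ with the same cost up to the constant. So the two optima coincide, and from an optimal flow of $\mc{G}'$ we read off $f(t_j,s_r)$ as the number of saturated arcs $e_{j,r}$.

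Next we solve $\mc{G}'$. It is a min-cost flow problem with linear costs and integral bounds. The lower bounds equal to $1$ on the source arcs are harmless, since by Remark~\ref{rem:value} feasibility just means a flow of value $n$; equivalently, all capacities $1$ on source arcs together with required value $n$. We can solve it either with successive shortest paths ($n$ augmentations, each a Bellman--Ford or Dijkstra-with-potentials computation) or with any strongly polynomial algorithm. By the integrality theorem for networks with integral capacities and bounds (total unimodularity of the incidence matrix), an integral optimal flow exists and is returned by these algorithms. Its projection is an integral optimal flow of $\mc{G}(D^p)$. If no flow of value $n$ exists, the problem is infeasible; this cannot happen here, since every block contains a tuple of positive probability.

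The only delicate step is the exchange argument. The exchange is within the parallel arcs of a single node $t_j$, so it does not affect conservation, and it relies only on the monotonicity of the marginal costs. This is exactly where convexity is needed; strict convexity is not required.
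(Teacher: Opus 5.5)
Your proposal is correct and follows the paper's own argument. Like the paper, you reduce the separable convex sink-arc costs to a linear min-cost flow by the standard arc-splitting technique, adding $\sum_j \operatorname{in\text{-}deg}(t_j)=|E_{LR}|$ unit arcs, and then invoke a polynomial-time linear min-cost flow algorithm that returns an integral optimum; you additionally spell out details the paper leaves to \cite{ahuja1993}, namely the monotone marginal costs, the prefix exchange with telescoping, and the caveat about evaluating the $d_j$ to sufficient precision.
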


\noindent {\bf Proof}: \
The objective function in Definition~\ref{def:mcf} is separable and convex in the
sink-arc throughputs, whose capacities are bounded by
$\operatorname{in\text{-}deg}(t_j)\le n$. Separable convex-cost flow
problems can be reduced to linear min-cost flow problems by means of a standard
arc-splitting technique \cite{ahuja1993}. Here, the reduction adds only
$\sum_j\operatorname{in\text{-}deg}(t_j)=|E_{LR}|=O(mn)$ arcs, and is, therefore,
polynomial. The resulting problem instance is a linear min-cost flow problem, solvable
in polynomial-time with an integral optimum \cite{orlin1988}.
\qquad\boxtheorem

\ignore{\begin{definition}[Network $\mc{G}(D^p)$]\label{def:network}
Let $D^p(D^\star,\mc{M})$ be a BID with a set of blocs $\mc{B}= \{B_1,\dots,B_n\}$ and a set of tuples 
$T=\langle t_1,\ldots,t_m\rangle$.
The associated network $\mc{G}(D^p)=(\mc{B}\cup\mc{T}\cup\{s\},A)$ has three layers.

\smallskip\noindent
\textbf{Block layer} $\mc{B}=\{b_1,\ldots,b_n\}$: node $b_i$
corresponds to block $B_i$ with unit supply $s_i=1$.

\smallskip\noindent
\textbf{Tuple layer} $\mc{T}=\{T_1,\ldots,T_m\}$: node $T_j$
corresponds to $t_j\in T$ and is a pure transshipment node.

\smallskip\noindent
\textbf{Sink} $s$: absorbs all supply via \emph{compliance arcs}.

\smallskip\noindent
\textbf{Compliance arcs}: arc $(T_j,s)$ carries
throughput $x_{js}=\sum_{i}x_{ij}$,
integer capacity $u_j=|\{i:t_j\in\mc{B}(\tau_i)\}|$, and convex cost
\begin{equation}\label{eq:fj}
  f_j(x_{js}) = d_j\!\left(\tfrac{x_{js}}{n},\,P^{\mc{M}}(t_j)\right),
\end{equation}
where $d_j$ is the $j$-th separable component of the distance $d$
(Definition~\ref{def:sep}).
\end{definition}

The  compliance objective now read:
\begin{equation}\label{eq:objectives}
  \displaystyle F(x)=\sum_{j=1}^{m}f_j(x_{js})
\end{equation}

******************
\vspace{1mm}
\noindent {\bf Nodes:} \ 
\red{$V = \{s_l\} \cup L \cup R \cup \{s_r\}$}, with
\ $L = \{b_i \mid B_i \in B\}$, the set of \emph{block-nodes};  \ and 
$R = T$, a set of \emph{tuple-nodes}. \ $L$ and $R$ for the left- and right-sets of nodes in a bipartite graph.

\vspace{1mm}
\noindent {\bf Edges:} \ 
The edge set $E = E_{s_l} \cup E_{LR} \cup E_{s_r}$ consists of three layers: \vspace{1mm}
\begin{enumerate}
    \item \textbf{Source edges:} For each $b_i \in L$, an edge $(s_l,b_i)$, with a capacity $c(s_l,b_i)=1$, and a \red{lower-bound $\ell(s_l,b_i)=1$, to be used later to constrain a flow}. 

    \item \textbf{Middle edges:} For each block $B_i \in \mc{B}$, and tuple $t_j \in T$, with $t_j \in B_i$ and $p_i(t_j) > 0$, an edge $(b_i,t_j)$, with capacity $c(b_i,t_j)=1$.
    
    \item \textbf{Sink edges:} For each $t_j \in R$, an edge $(t_j,s_r)$, with capacity $c(t_j,s_r) := \mbox{\nit{in-deg}}(t_j)$, the in-degree of $t_j$  (the number of its incoming  edges), which, in this case,   is the number of blocks that can select $t_j$ (in the sense that $t_j$ appeared already in the block in $D^p$).
\end{enumerate}

\noindent {\bf Flow:} \ 
A \emph{feasible integral flow} $f$ assigns, to each edge $e\in E$, a non-negative integer $f(e)$ that satisfies the constraints imposed by the lower-bounds and capacities; and also {\em flow conservation} at all intermediate nodes, those in $L\cup R$.  

\h \red{More precisely, XXX}

\comlb{State here, in general, the constraints that have to be satisfied. And, at the same time, what is the general optimization problem.}

\h For the source edges, the lower-bounds $\ell(s_l,b_i)$ forces $f(s_l,b_i)=1$, for every $i \in [1,n]$. 
This can be interpreted as  each block-vertex sending exactly one unit of flow into the network.  By flow conservation, that unit must travel along a middle edge (in $E_{LR}$) to some tuple node $t_j$ (with $p_i(t_j)>0$); and then, to the sink $s_r$ via the corresponding sink-edge.  

\comlb{It is not clear how they are determined.}

\h As a consequence, the flow-values on the sink-edges \red{determine} a {\em flow-vector}:
\begin{equation}
\mathbf{k}(f) = \langle f(t_1,s_r), \ f(t_2,s_r), \ \dots, \ f(t_m,s_r)\rangle \in \mathbb{N}^m  \label{eq:flow-vector}
\end{equation}
that satisfies $\sum_{j=1}^m f(t_j,s_r) = n$, because each block contributes exactly one unit to the total outflow from $s_l$.

\vspace{1mm}
\noindent {\bf Cost:} \ 
The {\em cost of a flow} $f$ is defined solely in terms of the vector $\mathbf{k}(f)$, as follows: $\mathbf{k}(f)$ can be seen as a vector of the form $\mathbf{k}=\langle k_1, \ldots, k_m\rangle$ that characterizes a class, those appearing in (\ref{eq:vector}). We still have to verify, right below, that this is an admissible class-vector (see Remark \ref{rem:canon}). For now, we define: \ (see Definition \ref{def:dist})
\[
\nit{cost}(f) \ := \ d^c\!\left(C_{\mathbf{k}(f)},\mathcal{M}\right).
\]

\comlb{Explain here how this ``cost of a flow" is used, to what constraints it leads, etc. What exactly is maximized? We haven't seen  the full instance of the MCFP yet.}

}

\h We now establish the formal correspondence between feasible integral flows in $\mc{G}(D^p)$ and admissible class-vectors, and prove that an optimal flow yields an optimal solution to the MCC-problem.

\begin{lemma} \label{lemma:reduc} \em
(a) For every feasible integral flow $f$ in $\mc{G}(D^p)$, the flow-vector
$\mathbf{k}(f)$ in (\ref{eq:flow-vector}) is an admissible class-vector; that is,
$\sum_j k_j = n$, and $k_j$ is the number of blocks assigned to tuple $t_j$.
Conversely, for every admissible class-vector $\mathbf{k}$, there exists a
feasible integral flow $f$, such that $\mathbf{k}(f) = \mathbf{k}$. \\ (b) Moreover, in
either case, the cost of the flow equals the distance from the induced class to
the model:
$\nit{cost}(f)\;=\;d^{c}\!\bigl(C_{\mathbf{k}(f)},\,\mathcal{M}\bigr)$.
\end{lemma}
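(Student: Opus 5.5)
The plan is to build the correspondence explicitly in both directions. The bridge is the observation that a feasible integral flow is exactly a choice, for each block, of one tuple in it with positive probability, i.e.\ a possible world (up to zero-probability choices). Part (b) then follows by unfolding the cost definitions.

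\emph{Forward direction of (a).} Let $f\in\mathcal{F}$. By the bound constraints and conservation at block-nodes (Definition~\ref{def:flow}, items 2 and 3), each $b_i$ sends exactly one unit, and by integrality it goes along a single middle arc $(b_i,t_{\sigma(i)})$. Here $t_{\sigma(i)}\in B_i$ and $p_i(t_{\sigma(i)})>0$. Define the world $W_f$ by selecting $t_{\sigma(i)}$ from block $B_i$ for each $i$; this is a possible world as in Definition~\ref{def:blocks}(d). Conservation at tuple-nodes gives
\[
f(t_j,s_r)=|\{i:\sigma(i)=j\}|=\nit{mult}^{W_f}(t_j).
\]
Hence $\mathbf{k}(f)$ is precisely the multiplicity vector of $W_f$, so $C_{\mathbf{k}(f)}$ exists and $\nit{adm}(\mathbf{k}(f))$ holds in the sense of Remark~\ref{rem:canon}. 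The identity $\sum_j k_j=n$ follows from Remark~\ref{rem:value}.

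\emph{Converse of (a).} Let $\mathbf{k}$ be admissible, and pick $W\in C_{\mathbf{k}}$. Then $W$ selects one tuple $t_{\sigma(i)}$ from each $B_i$. Set $f(s_l,b_i)=1$, $f(b_i,t_{\sigma(i)})=1$, all other middle arcs to $0$, and $f(t_j,s_r)=|\sigma^{-1}(j)|=k_j$. Three things need checking:
\begin{itemize}
\item Capacities hold, since $k_j\le \operatorname{in\text{-}deg}(t_j)$ because $\sigma^{-1}(j)$ consists of distinct blocks each having an arc to $t_j$.
\item Conservation holds by construction.
\item The arc $(b_i,t_{\sigma(i)})$ must exist, i.e.\ $p_i(t_{\sigma(i)})>0$.
\end{itemize}

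\emph{Main obstacle.} The third check is the delicate point: a world choosing a zero-probability tuple has no corresponding arc. I would handle it by reading admissibility as existence of a world of positive probability, equivalently by restricting blocks to their positive-probability support. This is consistent with Definition~\ref{def:network}, which drops zero-probability arcs. Under this reading the choice of $W$ can be made with $P^{\mc{W}}(W)>0$, which guarantees every chosen arc exists. If instead zero-probability tuples must be kept, the fix is simply to include those arcs too; costs are unaffected.

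\emph{Part (b).} By Definition~\ref{def:flow}(b) and \eqref{eq:fj},
\[
cost(f)=\sum_j d_j\bigl(f(t_j,s_r)/n,\,P^{\mc{M}}(t_j)\bigr).
\]
Since $f(t_j,s_r)/n=k_j/n=P^E_W(t_j)$ by Definition~\ref{def:emp} (using $||W||=n$), this equals $\sum_j d_j(P^E_W(t_j),P^{\mc{M}}(t_j))$. That sum is $d(P^E_W,P^{\mc{M}})=d^c(C_{\mathbf{k}(f)},\mc{M})$ by separability (Definition~\ref{def:sep}) and Definition~\ref{def:dist}(a). The only bookkeeping subtlety is that Definition~\ref{def:sep} writes $d_j$ as a function of the count $k_j$, while \eqref{eq:fj} feeds it the frequency $k_j/n$. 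Since $n$ is fixed, I would state once that $d_j$ is understood as a function of the normalized count, so the two forms agree.
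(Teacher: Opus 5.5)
Your proposal is correct and follows essentially the same route as the paper. The forced one-unit routing out of each block-node yields an assignment whose tuple-node throughputs are the multiplicities; the converse builds the flow from an assignment $\sigma$; and (b) unfolds \eqref{eq:fj} via separability. Two points you make explicit, the paper leaves implicit: it simply assumes $(b_i,\sigma(b_i))\in E_{LR}$, so zero-probability tuples are never addressed, and it silently switches the argument of $d_j$ from the count to the frequency.
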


\vspace{1mm} 
\noindent {\bf Proof}: \ (a) \ 
$(\Rightarrow)$ Let $f$ be a feasible integral flow. For each block-node
$b_i$, the source bound in Definition \ref{def:flow}.{\bf 2.} \  gives $f(s_l,b_i)=1$, and conservation at $b_i$, as in Definition \ref{def:flow}.{\bf 3.}, 
forces $\sum_{j}f(b_i,t_j)=1$; since each middle arc satisfies $f(b_i,t_j)\in\{0,1\}$,
exactly one middle arc leaves $b_i$. Thus, $f$ assigns every block to a unique tuple.
By conservation at each tuple-node $t_j$ as in Definition \ref{def:flow}.{\bf 4.},  
\[
  k_j \;=\; f(t_j,s_r) \;=\!\!\sum_{i:\,(b_i,t_j)\in E_{LR}}\!\! f(b_i,t_j)
       \;=\; \bigl|\{\,b_i : b_i \text{ assigned to } t_j\,\}\bigr|.
\]
 Then, $k_j$ is precisely the number of blocks assigned to $t_j$; in particular
$0\le k_j\le\operatorname{in\text{-}deg}(t_j)$; and then,  the sink-bound is automatically met. \
Summing over $j$ and using Definition \ref{def:flow}.{\bf 3.},
$\sum_{j}k_j=\sum_{i}\sum_{j}f(b_i,t_j)=\sum_i 1=n$. Hence, $\mathbf{k}(f)$ is an
admissible class-vector.

\vspace{1mm}
$(\Leftarrow)$ Let $\mathbf{k}$ be an admissible class-vector. By
admissibility, $\mathbf{k}$ arises from an assignment $\sigma$ that maps each block
$b_i$ to a tuple $\sigma(b_i)$ with $(b_i,\sigma(b_i))\in E_{LR}$, and satisfies
$|\sigma^{-1}(t_j)|=k_j$, for every $j$. \ Now, define $f$ by: \   $f(s_l,b_i) :=1$, \  $f(t_j,s_r) := k_j$,  and 
\[
f(b_i,t_j)=\begin{cases}1 & t_j=\sigma(b_i),\\ 0 & \text{otherwise.}\end{cases}
\]
Integrality holds by construction. The bounds are also satisfied: Source arcs carry $1$;
middle arcs carry $0$ or $1$; and $0\le k_j\le\operatorname{in\text{-}deg}(t_j)$. The latter upper-bound holds because the $k_j$ blocks mapped to $t_j$ are among the
$\operatorname{in\text{-}deg}(t_j)$ blocks adjacent to it. Conservation holds at each
$b_i$:  $1=f(b_i,\sigma(b_i))$; and at each $t_j$:
$\sum_i f(b_i,t_j)=|\sigma^{-1}(t_j)|=k_j=f(t_j,s_r)$. Thus, $f$ is a feasible
integral flow with $\mathbf{k}(f)=\mathbf{k}$.

\vspace{1mm} (b) \ 
\emph{Cost-distance identity:}  For any feasible integral flow $f$ with
$\mathbf{k}(f)=\langle k_1,\dots,k_m\rangle$, the definition of the flow cost and of
the sink-arc cost \eqref{eq:fj} give:
\[
  \nit{cost}(f)\;=\;\sum_{j=1}^{m}\operatorname{cost}_j(k_j)
             \;=\;\sum_{j=1}^{m} d_j\!\left(\tfrac{k_j}{n},\,P^{\mathcal{M}}(t_j)\right).
\]

  $k_j$ is the number of blocks assigned to $t_j$, so
$k_j/n$ is the frequency of $t_j$ in the class $C_{\mathbf{k}(f)}$; that is,
$P^{C_{\mathbf{k}(f)}}(t_j)=k_j/n$. Hence, by separability of $d$
(as in Definition~\ref{def:sep}),
\[
  \nit{cost}(f)\;=\;\sum_{j=1}^{m} d_j\!\left(P^{C_{\mathbf{k}(f)}}(t_j),\,P^{\mathcal{M}}(t_j)\right)    \;=\;d^{c}\!\bigl(C_{\mathbf{k}(f)},\,\mathcal{M}\bigr). 
\]

\vspace{-12mm}\boxtheorem

\begin{theorem}\label{thm:mcc-mcf}\em
Solving the minimum-cost flow problem on $\mc{G}(D^p)$
(Definition~\ref{def:mcf}) solves the MCC-problem instance: if $f^\star$
minimizes $\nit{cost}(f)$ over all feasible integral flows, then
$\mathbf{k}(f^\star)$ is an optimal class-vector for the MCC-problem, and
$$\nit{cost}(f^\star)=d^{c}(C_{\mathbf{k}(f^\star)},\mathcal{M})=\min\limits_{\mathbf{k}\text{ admissible}}
 d^{c}(C_{\mathbf{k}},\mathcal{M})$$
\end{theorem}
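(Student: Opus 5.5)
The plan is to combine Lemma~\ref{lemma:reduc} with the definition of optimality; essentially no new computation is needed. Let $f^\star$ be an optimal feasible integral flow. Lemma~\ref{lem:complexity} gives its existence (and polynomial-time computability) under convexity of the sink costs. By Lemma~\ref{lemma:reduc}(a), $\mathbf{k}(f^\star)$ is an admissible class-vector. By Lemma~\ref{lemma:reduc}(b), $\nit{cost}(f^\star)=d^{c}(C_{\mathbf{k}(f^\star)},\mathcal{M})$. This settles the first equality, and it shows that $\mathbf{k}(f^\star)$ is a feasible candidate for the MCC-problem. Hence
$$\min_{\mathbf{k}\text{ admissible}} d^{c}(C_{\mathbf{k}},\mathcal{M})\;\le\; \nit{cost}(f^\star).$$

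For the reverse inequality, take any admissible class-vector $\mathbf{k}$. The converse direction of Lemma~\ref{lemma:reduc}(a) yields a feasible integral flow $f$ with $\mathbf{k}(f)=\mathbf{k}$. By Lemma~\ref{lemma:reduc}(b) and the optimality of $f^\star$ over $\mathcal{F}$,
$$d^{c}(C_{\mathbf{k}},\mathcal{M})=\nit{cost}(f)\ge \nit{cost}(f^\star).$$
Taking the minimum over admissible $\mathbf{k}$ gives the reverse inequality. It follows that $\mathbf{k}(f^\star)$ attains the minimum in (\ref{eq:vector}), so $C_{\mathbf{k}(f^\star)}$ is an MC-class.

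The only point needing care is the consistency of the cost function. Definition~\ref{def:network} writes $d_j(x_j/n,\cdot)$, while Definition~\ref{def:sep} writes $d_j(k_j,\cdot)$. I would note that $d_j$ is to be read as a function of the empirical frequency $k_j/n$; since $n$ is fixed, the two readings differ only by rescaling, and this rescaling preserves discrete convexity. This matters for Lemma~\ref{lem:complexity}, and Lemma~\ref{lemma:reduc}(b) has already absorbed it into the identity $\nit{cost}(f)=d^c(C_{\mathbf{k}(f)},\mathcal{M})$.

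A second, minor point is that the minimum over admissible vectors exists because the set of admissible vectors is finite and nonempty; every block is nonempty, so some world exists. The argument does not depend on the bijection being one-to-one at the level of flows. Many flows, corresponding to different worlds, may map to the same $\mathbf{k}$, and only surjectivity onto admissible vectors together with the cost identity is used.
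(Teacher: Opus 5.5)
Your proof is correct and is essentially the paper's argument. The paper also uses Lemma~\ref{lemma:reduc} to note that $f\mapsto\mathbf{k}(f)$ maps the feasible integral flows onto the admissible class-vectors with $\nit{cost}(f)=d^{c}(C_{\mathbf{k}(f)},\mathcal{M})$, so the two minima coincide; your pair of inequalities is the same argument, and your remark on reading $d_j$ as a function of $k_j/n$ fairly clarifies a notational mismatch between Definition~\ref{def:sep} and Definition~\ref{def:network} without changing the proof.
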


\noindent {\bf Proof}: \
Let $\mathcal{F}$ be the set of feasible integral flows and $\mathcal{K}$ the set
of admissible class-vectors. By Lemma~\ref{lemma:reduc}, $f\mapsto\mathbf{k}(f)$
maps $\mathcal{F}$ onto $\mathcal{K}$, and $\nit{cost}(f)=
d^{c}(C_{\mathbf{k}(f)},\mathcal{M})$ for every $f\in\mathcal{F}$. Thus
$\nit{cost}$ factors through this surjection as
$\mathbf{k}\mapsto d^{c}(C_{\mathbf{k}},\mathcal{M})$, and the two functions have
the same image over $\mathcal{F}$ and $\mathcal{K}$ respectively; hence
\[
  \min_{f\in\mathcal{F}}\nit{cost}(f)
  \;=\;\min_{\mathbf{k}\in\mathcal{K}} d^{c}(C_{\mathbf{k}},\mathcal{M}).
\]
In particular, if $f^\star$ attains the left-hand minimum, then
$\mathbf{k}(f^\star)$ attains the right-hand one and is therefore an optimal
class-vector for the MCC-problem. \qquad\boxtheorem

\h We have established Theorem \ref{th:complresultsmcc}(a), except for the polynomial-time computational cost, which we now address.

\vspace{1mm}
\noindent {\bf Complexity Analysis:} \ 
By Theorem~\ref{thm:mcc-mcf}, solving the MCC-problem instance reduces to the
minimum-cost flow problem on $\mc{G}(D^p)$, a network with $n+m+2$ nodes, and
$n+|E_{LR}|+m=O(mn)$ arcs. By Lemma~\ref{lem:complexity}, this problem is
solvable in polynomial-time, with an integral optimum. By
Lemma~\ref{lemma:reduc}, that optimum yields an optimal class-vector. Hence the
MCC-problem instance is solved in polynomial-time in $n$ and $m$.

\ignore{XXXXXXXX
\subsection{\bf Complexity Analysis.}
We now analyze the computational complexity of solving the MCC-problem via the minimum-cost flow reduction. The complexity depends critically on the structure of the distance function $d^c$. When $d^c$ is separable convex, it induces a separable convex cost on the sink arcs of the flow in $\mc{G}(D^p)$. It is known that the  {\em minimum-cost flow problems with separable convex cost}, can be solved in polynomial-time \cite{Minoux86}. \red{Now, we analyze what we obtain for our MCC-problem at hand.} 

\begin{lemma} \em
\label{lem:complexitymcc}
With   a  convex-separable distance  $d^c(\cdot, \mathcal{M})$ for an MCC-problem instance with $n$ blocks and $m$ tuples, an optimal admissible class-vector $\mathbf{k}^\star$ can be computed polynomial- time \red{XXX}. 
\end{lemma}

\comlb{Theorem 7 gives a precise polynomial upper-bound. Here below, it is only argued it is polynomial-time.}

\noindent {\bf Proof}: \
By construction, the network $\mc{G}(D^p)$ has $N = O(n + m + nm)$ edges; and all capacities are integral and polynomially-bounded by the input size. In fact,
\begin{enumerate}
    \item[(a)] Source-edges: $n$ edges with unit capacity.
    \item[(b)] Middle-edges: At most $n \times m$ edges, only where $p_i(t_j) > 0$, with unit capacity.
    \item[(c)] Sink-edges: $m$ edges with capacity $\deg(v_j) \le n$.
\end{enumerate}

\h The objective is to minimize a separable convex cost-function of the flow on the sink edges. This is precisely a minimum-cost flow problem with an \emph{edge-separable convex cost-function} \cite{Minoux86}.
The algorithm in  \cite{Minoux86} solves those problems  for integral flows on directed graphs \red{in polynomial-time????}. \boxtheorem

\comlb{We should give a precise bound if we keep the bound in the theorem.}
XXXXXXXXXXXXXXXXXXX}


\subsection{\bf \#P-Completeness of \textbf{\#MCC}.}

We now turn to  Theorem \ref{th:complresultsmcc}(b), proving that \textbf{\#MCC} is \textbf{\#P-complete}, through Lemma \ref{lemma:mcccountmember} and its corollary that establish membership of \#P; and Lemma \ref{lemma:mcccounthard} that establishes hardness.

\begin{lemma}
\label{lemma:mcccountmember} \em
Verifying whether a class $C_{\mathbf{k}}$ is an optimal solution to the MCC-problem  can be done in polynomial-time. 
\end{lemma}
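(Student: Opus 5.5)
To check whether a given class-vector $\mathbf{k}$ is optimal, I would reduce the question to two polynomial-time computations and one comparison. First, test that $\mathbf{k}$ is admissible. Second, compute the optimal compliance value $d^\star=\min_{\mathbf{k}'\ \text{admissible}} d^c(C_{\mathbf{k}'},\mc{M})$. Then $C_{\mathbf{k}}$ is an MC-class iff $\mathbf{k}$ is admissible and $d^c(C_{\mathbf{k}},\mc{M})=d^\star$. Separability makes $d^c(C_{\mathbf{k}},\mc{M})=\sum_j d_j(k_j/n,P^{\mc{M}}(t_j))$ a sum of $m$ terms, so it is evaluated in polynomial time.

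\emph{Admissibility.} I would use Lemma \ref{lemma:reduc}(a): $\mathbf{k}$ is admissible iff $\mc{G}(D^p)$ has a feasible integral flow $f$ with $f(t_j,s_r)=k_j$ for all $j$. To decide this, fix lower bound and capacity of each sink-arc $(t_j,s_r)$ to $k_j$, after first checking $\sum_j k_j=n$ and $k_j\le \operatorname{in\text{-}deg}(t_j)$. Existence of such a flow is then a max-flow question: does the bipartite network with unit block supplies and tuple capacities $k_j$ have a flow of value $n$? Equivalently, there must be a $b$-matching saturating all block-nodes. This is solvable in polynomial time by any max-flow algorithm, and integrality of max-flow guarantees that an integral assignment $\sigma$ exists whenever the fractional one does.

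\emph{Optimal value.} By Theorem \ref{thm:mcc-mcf} together with Lemma \ref{lem:complexity}, an optimal flow $f^\star$, and hence $d^\star=\nit{cost}(f^\star)$, is computed in polynomial time for convex-separable $d$.

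The main obstacle I expect is numerical rather than combinatorial. Comparing $d^c(C_{\mathbf{k}},\mc{M})$ with $d^\star$ exactly requires that the costs be exactly representable. For KL these costs involve logarithms, and exact equality of real values is delicate. I would handle this by fixing the standard assumption, implicit in Lemma \ref{lem:complexity}, that the cost functions $\operatorname{cost}_j$ can be evaluated and compared at the required precision in polynomial time. Alternatively, I would avoid numeric equality by checking that no cheaper flow exists: with $f$ the flow realizing $\mathbf{k}$, $f$ is optimal iff the residual network of the arc-split linearized instance has no negative-cost cycle. This optimality criterion is checkable in polynomial time with Bellman--Ford and involves only comparisons of marginal costs $\operatorname{cost}_j(x+1)-\operatorname{cost}_j(x)$.
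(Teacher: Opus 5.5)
Your proposal is correct and follows essentially the same route as the paper. It tests admissibility by a max-flow on $\mc{G}(D^p)$ with the sink capacities set to $k_j$, then computes the optimal value via the min-cost flow of Theorem~\ref{thm:mcc-mcf} and compares it with $d^c(C_{\mathbf{k}},\mc{M})$. Your caveat about exactly comparing real-valued costs (e.g.\ for KL) is a reasonable refinement of a point the paper settles simply by assuming the distance is computable in polynomial time.
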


\noindent {\bf Proof}: \
We establish polynomial-time verification through two independent checks, each of them reducible to a classic network-flow problem as appearing in our general reduction in 
Section {\bf C.1}.  

\vspace{2mm}
{\bf Step 1: Admissibility Verification.} \ 
Given a candidate vector $\mathbf{k}=\langle k_1,\dots,k_m\rangle \in\mathbb{N}^m$ with $\sum_{j=1}^m k_j = n$, we have to determine whether there exists a feasible assignment of the $n$ blocks to tuples, such that exactly $k_j$ blocks are assigned to tuple $t_j$. This is equivalent to checking whether the constructed flow network $\mc{G}(D^p)$ admits a feasible integral flow $f$ with $f(t_j,s_r)=k_j$, for all $j$.

\h To test this, construct a modified network $\mc{G}_{\mathbf{k}}$ from $\mc{G}(D^p)$ by setting the capacity of each sink arc $(t_j,s_r)$ to $k_j$ (all other capacities remain as in $\mc{G}(D^p)$, with unit capacities on source arcs and middle arcs). A feasible integral flow of value $n$ in $\mc{G}_{\mathbf{k}}$ exists iff $\mathbf{k}$ is admissible, because such a flow saturates all source arcs (each block sends one unit), and exactly meets the prescribed outflows at the sink.

\h The network $\mc{G}_{\mathbf{k}}$ has $O(n+m)$ nodes and $O(nm)$ edges. Computing its maximum flow value can be done in polynomial time using, e.g., Dinic's algorithm \cite{dinic1970}, which runs in $O(\sqrt{n},|E|)$ for unit-capacity bipartite networks. Thus, admissibility is decidable in polynomial-time.

\vspace{2mm}
{\bf Step 2: Optimality Verification.}
Let $d_{\min}$ denote the minimum compliance distance for the given MCC instance. By Theorem~\ref{th:complresultsmcc}(a), an optimal class-vector can be computed in polynomial-time. So, $d_{\min}$ can be obtained efficiently. Once $d_{\min}$ is known, verifying whether $C_{\mathbf{k}}$ is optimal reduces to computing its compliance distance $d^c(C_{\mathbf{k}},\mathcal{M})$, and checking whether $d^c(C_{\mathbf{k}},\mathcal{M}) = d_{\min}$. The compliance distance depends only on $\mathbf{k}$ and the fixed instance parameters; and is assumed computable in polynomial-time (as is typical for distance functions in such settings). Thus, optimality can be checked in polynomial-time. \boxtheorem

\begin{corollary}
\label{cor:mccinp}
\em The problem of counting the number of distinct optimal MC-classes, denoted \#MCC, belongs to the complexity class \#P.
\end{corollary}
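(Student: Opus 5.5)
We show that \#MCC fits the standard definition of \#P: it counts the accepting computations of a polynomial-time nondeterministic machine or, equivalently, the certificates accepted by a polynomial-time verifier for a polynomially balanced relation. The certificate for an instance given by the BID $D^p(D^\star,\mc{M})$, with $n$ blocks and support $T$ of $m$ tuples, is a candidate class-vector $\mathbf{k}=\langle k_1,\dots,k_m\rangle\in\mathbb{N}^m$.

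First, certificates are polynomially bounded. Each $k_j\le n$, so $\mathbf{k}$ is written with $O(m\log n)$ bits, which is polynomial in the size of the BID. Second, each MC-class corresponds to exactly one certificate. By Remark~\ref{rem:canon}, a class is uniquely determined by its class-vector. Hence the map sending an optimal class $C_{\mathbf{k}}$ to $\mathbf{k}$ is a bijection between MC-classes and the vectors $\mathbf{k}$ that are both admissible and minimize $d^c(C_{\mathbf{k}},\mc{M})$. Counting MC-classes therefore equals counting accepted certificates, and no duplicate or spurious witnesses arise.

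Third, the relation is decidable in polynomial time. The verifier first checks syntactically that $\sum_j k_j=n$ and $k_j\le n$. It then runs the procedure of Lemma~\ref{lemma:mcccountmember}:
\begin{itemize}
\item[(i)] It tests admissibility via the max-flow computation on the modified network $\mc{G}_{\mathbf{k}}$, accepting only if the flow value is $n$.
\item[(ii)] It computes $d_{\min}$ through the min-cost flow of Theorem~\ref{thm:mcc-mcf} and Lemma~\ref{lem:complexity}. This quantity depends only on the instance, not on the certificate.
\item[(iii)] It evaluates $d^c(C_{\mathbf{k}},\mc{M})=\sum_j d_j(k_j/n,P^{\mc{M}}(t_j))$ and accepts iff this equals $d_{\min}$.
\end{itemize}
An equivalent formulation uses a nondeterministic machine that guesses $\mathbf{k}$ digit by digit and then deterministically runs this verifier. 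It has exactly one accepting path per MC-class.

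The main obstacle is step (iii), namely computing and comparing distance values exactly. Divergences such as KL involve logarithms of rationals, so "equals $d_{\min}$" is not obviously a polynomial-time exact test. We adopt the assumption made in the proof of Lemma~\ref{lemma:mcccountmember}: each $d_j$ is evaluable in polynomial time, under a fixed representation or sufficient precision, so that equality with the optimum can be decided. If this needs strengthening, a fallback avoids global distance comparisons altogether. Since $d$ is separable convex, optimality of $\mathbf{k}$ is equivalent to the absence of a negative-cost augmenting cycle in the residual network of the flow realizing $\mathbf{k}$. Such a cycle reroutes one block from $t_j$ to $t_{j'}$ with marginal cost $\Delta_{j'}^+ - \Delta_j^-$. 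Deciding this requires only comparing marginal differences of the $d_j$, a standard optimality criterion for convex-cost flows \cite{ahuja1993}. Under either route, the verifier runs in polynomial time, and the corollary follows.
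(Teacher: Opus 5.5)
Your proof is correct and takes the same approach as the paper. The paper also uses the class-vector $\mathbf{k}$ as the witness and verifies it via Lemma~\ref{lemma:mcccountmember}: admissibility by max-flow on $\mc{G}_{\mathbf{k}}$, then comparison of $d^c(C_{\mathbf{k}},\mc{M})$ with the polynomial-time computable $d_{\min}$, under the same assumption that distances can be evaluated exactly; you additionally spell out the polynomial witness size and the one-to-one correspondence between MC-classes and accepted witnesses, which the paper leaves implicit, and your residual-cycle fallback is a valid alternative that weakens, but does not remove, the need for exact comparison of real-valued marginals.
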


\noindent {\bf Proof}: 
Since both verification steps can be done in polynomial-time, the decision problem ``Is $\mathbf{k}$ an optimal MC-class?"   belongs to the class \textbf{P}; and, therefore,  \#MCC is in \#P. \boxtheorem

\begin{lemma} \em
\label{lemma:mcccounthard}
\textbf{\#MCC} is \#P-hard.
\end{lemma}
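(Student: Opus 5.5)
The paper hints that hardness of \textbf{\#MCC} comes from counting bases of transversal matroids (``many optimal classes''). The plan is a parsimonious reduction from counting bases of a transversal matroid, which is \#P-complete, or equivalently from counting the sets of right-vertices that can be matched by a matching saturating the left side of a bipartite graph.

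Let $G=(U\cup V,E)$ be a bipartite graph with $|U|=r$. The transversal matroid on $V$ has as bases exactly the sets $S\subseteq V$, $|S|=r$, that admit a perfect matching with $U$. We build a BID $D^p$ with $n=r$ blocks $B_u$, one per $u\in U$. Block $B_u$ contains the tuples $t_v$ for the $v$ adjacent to $u$, with arbitrary positive probabilities, for instance uniform. By Proposition~\ref{lemma:BIDrepresentation}, this BID is realized, up to equivalence, by an observed instance and a MAR MG, in polynomial time. The support is $T=\{t_v\mid v\in V\}$. Admissible class-vectors $\mathbf{k}\in\mathbb{N}^{|V|}$ correspond exactly to multiplicity vectors of assignments of blocks to adjacent tuples, i.e.\ of left-saturating ``many-to-one'' assignments.

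Next we choose $P^{\mc{M}}$ so that the minimizers of a strictly convex-separable $d$ are exactly the $0/1$ vectors. Take $P^{\mc{M}}(t_v)$ equal for all $v$, say $1/|V|$ scaled so that the ideal frequency per tuple lies strictly between $0$ and $1/n$; the normalization outside $T$ is absorbed by extra mass on tuples not in $T$, which is allowed since the comparison uses only tuples of $T$. Since all $d_j$ are the same strictly convex function $g$ of $k_j$, for vectors with $\sum_j k_j=n$ the sum $\sum_j g(k_j)$ is strictly Schur-convex. Hence, among all integer vectors with the given sum and all entries $\le 1$ feasible, the minimum is attained exactly at the $0/1$ vectors; this requires $r\le|V|$.

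Strict convexity is the key step here. If some $k_j\ge 2$, then some $k_{j'}=0$, and moving one unit from $j$ to $j'$ strictly decreases the cost. The delicate point is that this move must preserve admissibility, so the argument cannot be purely arithmetic. I would handle it by comparing values rather than moves: every $0/1$ vector attains the same cost $r\,g(1)+(|V|-r)g(0)$, and every non-$0/1$ vector has strictly larger cost by majorization. Thus, provided at least one base exists, which we may test in polynomial time by bipartite matching and otherwise output $0$, the optimum equals this common value. The MC-classes are then exactly the $0/1$ admissible vectors, i.e.\ the transversal bases. The reduction is therefore parsimonious, and \#P-hardness follows.

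The main obstacle is ensuring the chosen source problem really is \#P-hard in this bipartite, uniform-probability form; counting transversal-matroid bases is known to be \#P-complete (Colbourn--Provan--Vertigan). A secondary issue is making $P^{\mc{M}}$ consistent with an actual MG, which I would resolve with the Proposition~\ref{lemma:BIDrepresentation} construction, adjusting the joint distribution rows. Combined with Corollary~\ref{cor:mccinp}, this yields \#P-completeness.
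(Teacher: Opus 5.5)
Your proposal is correct and is essentially the paper's proof: the same reduction from counting transversal-matroid bases (one block per defining node, one tuple per ground element, uniform $P^{\mathcal{M}}$), the same strict-convexity argument that the minimizers are exactly the admissible $0/1$ vectors, and the same polynomial-time matching test to dispose of the case with no saturating matching. Your ``compare values rather than moves'' step is a slightly cleaner version of the paper's unit-shifting argument, whose shifted vector need not be admissible. Your rescaling of $P^{\mathcal{M}}$ below $1/n$ is unnecessary, since any common value, such as the paper's $1/|T|$, already makes all the $d_j$ coincide.
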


\vspace{1mm}\noindent {\bf Proof}: \ 
We do a reduction  from counting bases of a transversal matroid, which is \#P-complete
\cite{colbourn95}. A transversal matroid $M$ of rank $r$ is given
by a bipartite graph between a ground set $\Omega=\{g_1,\dots,g_p\}$ and defining
nodes $A=\{a_1,\dots,a_r\}$. Its bases are the $r$-subsets of $\Omega$ that admit
a perfect matching saturating $A$.

\h Let us fix an instance of a matroid $M=(\Omega \cup A, E_M)$, with $E_M \subseteq \Omega \times A$. We construct, in time-polynomial in $|E_M|$, a BID $D^p$ whose MC-classes are in bijection with the bases of $M$. The BID can be seen,  by
Proposition~\ref{lemma:BIDrepresentation}, as derived from  an observed database $D^\star$, and an  MG $\mathcal{M}$ that defines a join distribution $P^{\mathcal M}$.

\vspace{1mm}
{\bf Construction of the BID.}
 \ Introduce one tuple $t_g$ per ground element
$g\in\Omega$, so the support has size $p$; and one block
$B_a=\{t_g:(g,a)\in E_M\}$ per defining node $a\in A$, with uniform
probabilities inside the block. Make  $P^{\mathcal M}$ uniform: $P^{\mathcal M}(t_g):=1/p$. This is
polynomial in the size of $M$.

\vspace{1mm}
\h \textbf{Classes and cost.}
Each world selects one tuple per block, matching every $a\in A$ to an adjacent
ground element; the induced class-vector is $k_g= |\{a:B_a\text{ selects }t_g\}|$,
with $\sum_g k_g=r$.  Assume a strictly convex separable distance function $d(C_{\mathbf{k}}, \mathcal{M})=
\sum_{j=1}^m d_j\!\bigl(k_j, P^\mathcal{M}(t_j)\bigr)$. Since $P^{\mathcal M}$ is uniform, 
every tuple has the same target value
$1/p$. The per-tuple cost functions, therefore, all reduce to a single strictly convex function  of
the count.  Denoting
$\phi(k):=d_1(k,1/p)$, the cost is $d(C_{\mathbf k},\mathcal M)=\sum_g \phi(k_g)$.

\vspace{2mm}

\h \textbf{Case 1:} \  {\bf $A$ can be saturated (i.e. $M$ has at least one basis).} \ Suppose $E_M$ admits a matching
saturating $A$, i.e.\ $M$ has a basis. Then, there exist admissible $0/1$-vectors, i.e.\ classes $C_{\mathbf k}$ whose
count-vector $\mathbf k$ has every entry $k_g\in\{0,1\}$ (equivalently, in which
no tuple occurs more than once). By
strict convexity of $\phi$, the increments $(\phi(k)-\phi(k-1))$ are strictly
increasing. So, any admissible vector with a coordinate $k_g\ge2$ is strictly
improved by shifting one unit to a coordinate equal to $0$, \ which exists since
$\sum_g k_g=r\le p$. Hence, the minimizers of $\sum_g\phi(k_g)$ are exactly the
$0/1$-vectors with $r$ ones. Such a vector selects $r$ distinct tuples, one per
block, i.e.\ a matching saturating $A$ onto a set $S\subseteq\Omega$. with
$|S|=r$. As a consequence,  $S$ is a basis of $M$. 
Conversely, every basis $S$ is matchable to $A$ and yields the admissible
$0/1$-class, $C_{\mathbf k_S}$, of common minimum cost $r\,\phi(1)$, where
$\mathbf k_S\in\{0,1\}^{p}$ is the indicator vector of $S$, i.e.\
$(\mathbf k_S)_j=1$ iff $g_j\in S$, and $0$ otherwise.

\h The map $S\mapsto\mathbf C_{\mathbf k_S}$ is, therefore, a bijection between
the bases of $M$ and the MC-classes: It is well defined and surjective by the
above, and injective since distinct bases are distinct subsets, and then, distinct
indicator vectors. Consequently, $\#\textbf{MCC}=$ number of bases of $M$.

\vspace{2mm}

\h \textbf{Case 2:} \ {\bf $A$ cannot be saturated.} \ In this case,  $M$ has no basis, and
the number of bases of $M$ is $0$. Now, no \textbf{\#MCC} oracle call is needed.

\vspace{1mm}
\h Whether $E_M$ admits a matching saturating $A$ is decidable in polynomial-time. \ Then,  the number of bases of $M$ is obtained from a polynomial-time test; 
and, in Case~1, with a single \textbf{\#MCC} oracle call. \ Since counting
transversal-matroid bases is \#P-complete \cite{colbourn95},
\textbf{\#MCC} is \#P-hard. \boxtheorem

\subsection{\textbf{MCC-Enum} is in \textbf{DelayP}.}

We give a polynomial-delay algorithm enumerating all MC-classes. It performs a
depth-first traversal of the tree of \emph{class-vector prefixes}, using a
min-cost flow computation at each node, to prune any prefix that cannot be
extended to a most-compliant class.

\vspace{1mm}
\noindent {\bf Search Tree.} \ 
Recall that a class is an admissible class-vector
$\mathbf{k}=\langle k_1,\dots,k_m\rangle\in\mathbb{N}^m$ with
$\sum_{j=1}^m k_j=n$ (Remark~\ref{rem:canon}), where $m=|T|$ is the number of
distinct tuples, and $n$ the number of blocks. We organize the class-vectors in a
rooted tree of depth $m$:
\begin{itemize}
  \item[(a)] The root (depth $0$) is the empty prefix $\langle\,\rangle$.
  \item[(b)] A node at depth $l$ is a \emph{prefix} $\mathbf{h}=\langle
        k_1,\dots,k_l\rangle$, fixing the occurrence counts of the first $l$
        tuples $t_1,\dots,t_l$.
  \item[(c)] The children of $\mathbf{h}=\langle k_1,\dots,k_l\rangle$ are the
        prefixes $\langle k_1,\dots,k_l,k_{l+1}\rangle$ obtained by assigning to
        the next tuple $t_{l+1}$ a count $k_{l+1}\in\{0,1,\dots,n\}$. A child is
        kept only if it satisfies the necessary condition
        $\sum_{j=1}^{l+1} k_j\le n$.
\end{itemize}
Thus, level $l$ fixes the count of tuple $t_l$, and the branching factor is at
most $n+1$. A leaf, at depth $m$, is a complete vector
$\langle k_1,\dots,k_m\rangle$. The retained leaves, those additionally
satisfying $\sum_{j=1}^m k_j=n$, are exactly the admissible class-vectors, i.e.\
the classes. Among these, the MC-classes are those of minimum compliance
distance. The tree structure alone does not single them out, and it is the
validity test below that does it.

\vspace{1mm}{\bf Global Optimum.} \ 
First compute, in polynomial time, the global minimum compliance cost: \ $F^\star \;=\; \min\nolimits_{f}\ \nit{cost}(f)$ \ 
of an optimal flow $f$ on $\mc{G}(D^p)$, by solving the min-cost flow problem of
Appendix~\ref{sec:constr}. This is done once, at initialization.

\vspace{1mm}
\noindent {\bf Validity Test.} \ A node $\mathbf{h}=\langle k_1,\dots,k_l\rangle$ is \emph{valid} if it can be
extended to (the prefix of) some MC-class, i.e.\ if there exists an admissible
class-vector $\mathbf{k}$ with prefix $\mathbf{h}$ and
$d^c(C_{\mathbf{k}},\mc{M})=F^\star$. 

\h We decide validity by a single min-cost
flow computation, as follows: \  Construct the network $\mc{G}(D^p)^{\mid\mathbf{h}}$ from
$\mc{G}(D^p)$ by fixing the sink-arcs of the already validated tuples, and leaving the
others free:
\[
  \ell(t_j,s_r)=c(t_j,s_r):=k_j, \mbox{ for } 1\le j\le l;
  \ \ 
  \ell(t_j,s_r)=0; \ \  c(t_j,s_r)=\operatorname{in\text{-}deg}(t_j), \mbox{ for } l< j\le m.
\]

\h The equalities for $j\le l$ force each validated tuple $t_j$ to be selected
exactly $k_j$ times.  The free bounds for $j>l$ let the remaining
$(n-\sum_{j\le l}k_j)$ units of flow distribute over the not yet validated tuples. Then:
\[
  \mathbf{h}\ \text{is valid}
 \ \  \mbox{ iff } \ \ 
  \mc{G}(D^p)^{\mid\mathbf{h}} \ \text{admits a feasible flow, and its minimum cost equals } F^\star .
\]

\h If $\mc{G}(D^p)^{\mid\mathbf{h}}$ is infeasible (the fixed counts $k_j$ are not
jointly realizable), or its minimum cost exceeds $F^\star$, then no completion of
$\mathbf{h}$ is an MC-class. and $\mathbf{h}$ is invalid. Both the construction of
$\mc{G}(D^p)^{\mid\mathbf{h}}$ and the computation of its minimum cost can be done in
polynomial-time.

\vspace{1mm} {\bf Algorithm.} \ 
Traverse the tree depth-first from the root, expanding a node only if it is
valid, and outputting a class-vector each time a valid leaf (at depth $m$) is
reached. Since a valid leaf is an admissible vector of cost $F^\star$, it is an
MC-class. Conversely every MC-class is a leaf all of whose prefixes are valid,
hence is reached. Each MC-class is output exactly once, as distinct MC-classes
have distinct vectors. and thus, distinct root-to-leaf paths.

\vspace{1mm} \noindent {\bf Polynomial Delay.} \ 
The validity test guarantees that \emph{every expanded node lies on a
root-to-leaf path ending at an MC-class}: A valid node has, by definition, a
completion of cost $F^\star$, and every prefix of that completion is itself valid. \ 
Consequently, the search never enters a subtree devoid of solutions. Between two
consecutive outputs, the traversal, therefore, follows at most one descending path,
and one ascending path in the tree, i.e.\ it visits $O(m)$ nodes, each having at
most $n+1$ children to test. So, at most $O(mn)$ validity tests are performed
between consecutive MC-classes (and before the first, and after the last). As
each validity test is a polynomial-time min-cost flow computation, the delay
between consecutive outputs is polynomial.  \boxtheorem










\subsection{\bf \#P-Completeness of \textbf{\#MCW}.}

Now we prove  Theorem~\ref{th:complresultsmcc}(d). 

\vspace{1mm}
\noindent \emph{Membership.} \ 
A world $W$ is a polynomial-size certificate that can be verified in polynomial-time:
$W$ is an admissible world (verified by exhibiting a concrete flow, as in the
admissibility verification of Lemma~\ref{lemma:mcccountmember}), and, since, by (a), the
minimum compliance cost $F^\star$ is computable in \textbf{FP}, the
predicate ``$d^c(W,\mc M)=F^\star$'' is polynomial-time verifiable. As a consequence, 
\textbf{\#MCW} counts the accepting certificates of a polynomial-time predicate,
and belongs in \#P.

\vspace{1mm}\noindent\emph{Hardness.}\quad
Lemma~\ref{lemma:mcwcounthard} below establishes hardness by a reduction from
counting perfect matchings in $3$-regular bipartite graphs \cite{dagum92}.

\vspace{1mm}
\begin{lemma} \em
\label{lemma:mcwcounthard}
\textbf{\#MCW} is \#P-hard for every convex-separable compliance distance whose
per-tuple components are strictly convex; in particular for the (squared)
Euclidean distance.
\end{lemma}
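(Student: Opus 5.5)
We reduce from counting perfect matchings in a $3$-regular bipartite graph $G=(U\cup V,E_G)$ with $|U|=|V|=r$, which is \#P-complete \cite{dagum92}. The construction mirrors the one in the proof of Lemma~\ref{lemma:mcccounthard}, except that the ground side now carries exactly as many tuples as there are blocks. We introduce one tuple $t_v$ for each $v\in V$, so the support has size $m=r$. We introduce one block $B_u=\{t_v:(u,v)\in E_G\}$ for each $u\in U$, so there are $n=r$ blocks. Inside each block we put uniform probabilities $1/3$, using $3$-regularity. We let $P^{\mc{M}}$ be uniform, $P^{\mc{M}}(t_v)=1/r$. By Proposition~\ref{lemma:BIDrepresentation}, this BID is the BID $D^p(D^\star,\mc{M})$ of some observed instance with a MAR MG, and it is built in time polynomial in $|E_G|$.

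Worlds are in bijection with functions $\sigma:U\to V$ satisfying $(u,\sigma(u))\in E_G$ for all $u$; each block is its own tuple occurrence, so distinct choices give distinct worlds. The class-vector of such a world is $k_v=|\sigma^{-1}(v)|$, with $\sum_v k_v=r$. Since $P^{\mc{M}}$ is uniform, the compliance distance is $\sum_v\phi(k_v)$ for a single strictly convex function $\phi$, exactly as in Lemma~\ref{lemma:mcccounthard}. The same exchange argument then applies. If some $k_v\ge 2$, then, because $m=n$, some $k_{v'}=0$, and moving one unit from $v$ to $v'$ strictly lowers the cost. This move is a comparison of abstract vectors, not of admissible ones, so we first note that the all-ones vector is admissible. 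A $3$-regular bipartite graph has a perfect matching (Hall's theorem, or K\"onig's theorem), so the all-ones vector $\mathbf{1}$ is realized by some world. Hence $\mathbf{1}$ is the unique minimizer of $\sum_v\phi(k_v)$ over all vectors summing to $r$, and therefore also over admissible ones. In particular there is exactly one MC-class, $C_{\mathbf{1}}$.

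The MC-worlds are therefore exactly the worlds whose $\sigma$ is a bijection $U\to V$ along edges, that is, the perfect matchings of $G$. So $\textbf{\#MCW}(D^p)$ equals the number of perfect matchings of $G$, and a single oracle call gives a parsimonious reduction. For the squared Euclidean distance the components $(k/n-1/r)^2$ are strictly convex in $k$, so the lemma covers that case.

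The main point needing care is that worlds are counted as distinct selections. Two different matchings give the same multiset, namely all $t_v$ once, but they are different worlds because the tids differ. We must check that the counting in \textbf{\#MCW} is over worlds and not over multisets, and Definition~\ref{def:blocks}(d) confirms this. The only other step needing attention is confirming that strict convexity is stated over the increments $\phi(k+1)-\phi(k)$ at $k=0$ and $k=1$, which the exchange argument uses. This is guaranteed by Definition~\ref{def:sep}(b) for all $k\ge 1$. Membership in \#P was already shown above, so \#P-completeness follows.
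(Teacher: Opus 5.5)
Your proof is correct, and it uses a different and simpler instance than the paper's.

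\textbf{The paper's instance.} The paper also reduces from counting perfect matchings in $3$-regular bipartite graphs. It introduces one tuple per \emph{edge} ($3n$ tuples), one primary block per $u_i\in U$, and two supplementary blocks $s_{j1},s_{j2}$ per $v_j\in V$. The supplementary blocks exist only to make the number of blocks equal the number of tuples ($3n$ each), so that strict convexity singles out the all-ones vector. The paper then needs two further steps to put unit worlds in bijection with perfect matchings: a global column count showing that each $v_j$ receives exactly one primary selection, and a forced-completion argument showing that the supplementary selections are uniquely determined.

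\textbf{What your route buys.} You get the equality of blocks and tuples for free by taking the tuples to be the vertices of $V$. The MC-worlds are then exactly the edge-respecting bijections $U\to V$. In other words, you count the permanent of the biadjacency matrix, using the same $n$-blocks/$n$-tuples template as Lemma~\ref{lem:class-permanent}, and no gadget or column analysis is needed. The same instance also gives hardness of \textbf{\#Class} and \textbf{\#MCClass} at once. Your use of Hall's theorem to make $\mathbf{1}$ admissible replaces the paper's case split on whether a perfect matching exists; either works.

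\textbf{A minor bonus.} In your instance each $t_v$ lies in exactly three blocks with probability $\frac13$. So the uniform target $P^{\mc{M}}(t_v)=\frac1r$ coincides with the average block distribution $\frac1r\sum_u p_u(t_v)$. In the paper's instance the middle tuple of each column lies in three blocks and the outer two in two blocks each. There the uniform $P^{\mc{M}}$ is stipulated independently of the block probabilities.

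\textbf{A shared assumption.} Like the paper, you implicitly assume that the components $d_j$ depend on $j$ only through $P^{\mc{M}}(t_j)$. This is what lets them collapse to a single strictly convex $\phi$.
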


\vspace{1mm}\noindent {\bf Proof}: \
We prove hardness by a polynomial-time reduction from {\em counting perfect
matchings in 3-regular bipartite graphs}, which is \#P-complete \cite{dagum92}.

\h Let $\mc{G} = (U \cup V, E)$ be a 3-regular bipartite graph with
\(|U| = |V| = n\). We construct a BID $D^p$. By
Proposition~\ref{lemma:BIDrepresentation}, the resulting BID can be realised as
the BID induced by an observed database and a missingness graph; we may
therefore concentrate on building $D^p$, as follows.

\vspace{-2mm}
\begin{enumerate}
    \item \textbf{Tuples.} For each edge \((u_i, v_j) \in E\), create a
    distinct tuple \(t_{ij}\). Let \(T = \{ t_{ij} : (u_i, v_j) \in E \}\).
    Since \(\mc{G}\) is 3-regular with \(|U|=|V|=n\), \  \(|T| = 3n\).

\vspace{1mm}
    \item \textbf{Primary Blocks (of type \(B\)).} For each \(u_i \in U\),
    define a block \(b_i\) containing the tuples of its incident edges: If
    \(u_i\) is adjacent to \(v_{j_1}, v_{j_2}, v_{j_3}\), then
    \(b_i = \{ t_{ij_1}, t_{ij_2}, t_{ij_3}\}\). The distribution in \(b_i\) is
    uniform: \  \(p_i(t_{ij}) = \tfrac{1}{3}\).

\vspace{1mm}
    \item \textbf{Supplementary Blocks (of type \(S\)).} For each \(v_j \in V\),
    list its incident tuples in a fixed order \(t_{i_1j}, t_{i_2j}, t_{i_3j}\).
    Add two supplementary blocks:  \(s_{j1} = \{t_{i_1j}, t_{i_2j}\}\) and
    \(s_{j2} = \{t_{i_2j}, t_{i_3j}\}\), each with the uniform distribution
    \(\tfrac{1}{2}\) on its two tuples. Then,  \(t_{i_1j}\) belongs only to
    \(s_{j1}\); \(t_{i_2j}\) to both \(s_{j1}\) and \(s_{j2}\); and \(t_{i_3j}\)
    only to \(s_{j2}\); while each \(t_{ij}\) belongs to exactly one primary
    block \(b_i\).

\vspace{1mm}
    \item \textbf{Joint Distribution.} \(P^{\mathcal{M}}\) is uniform over all
    tuples: \ \(P^{\mathcal{M}}(t) := \tfrac{1}{3n}\), for every \(t \in T\).
\end{enumerate}

The construction of $D^p$ is computable in polynomial-time in the size of
\(\mc{G}\).

\vspace{1mm}
\h \emph{Cost.} We can use any convex-separable compliance distance:
\begin{equation}\label{eq:sep}
  d(C_{\mathbf k},\mathcal M)\;=\;\sum_{j=1}^{3n} d_j\!\bigl(k_j,\,P^{\mathcal M}(t_j)\bigr),
\end{equation}
with  each component  \emph{strictly convex} in its integer argument
\(k_j\). \ That is, its discrete second differences are strictly positive:
\begin{equation}\label{eq:strict}
  d_j(k+1)-2\,d_j(k)+d_j(k-1)\;>\;0, \ \ \mbox{ for } k\ge 1.
\end{equation}

\h Since \(P^{\mathcal M}\) is uniform, the components are identical:
\(d_j(\cdot,P^{\mathcal M}(t_j))=\delta(\cdot)\), for a common strictly convex
\(\delta\). For example,  the squared Euclidean distance
\(\delta(k)=\bigl(\tfrac{k}{3n}-\tfrac{1}{3n}\bigr)^2\), in this case. \ignore{ is the running example.} \ 
Then, the compliance cost of a world depends only on its occurrence-count vector
\(\mathbf k\), through \ \(F(\mathbf k)=\sum_{j}\delta(k_j)\).

\vspace{1mm}
\h \emph{A global count.} \ Each of the \(3n\) blocks (\(n\) primary, \(2n\)
supplementary) selects exactly one tuple, so every world \(W\) satisfies:
\begin{equation}\label{eq:sum3n}
  \sum_{j} k_j(W) \;=\; 3n .
\end{equation}

\h Now, consider minimizing \(F(\mathbf k)=\sum_{j}\delta(k_j)\) over non-negative integer
vectors, subject to \eqref{eq:sum3n} on \(3n\) coordinates. By \eqref{eq:strict}
the marginal increments \(\Delta(k):=\delta(k)-\delta(k-1)\) are strictly
increasing in \(k\). Hence, for any vector with two coordinates \(k_a\ge k_b+2\),
moving one unit from \(a\) to \(b\) changes the cost by
\(\Delta(k_b+1)-\Delta(k_a)<0\), a strict decrease. \ Thus, no minimizer has two
coordinates differing by \(2\) or more; with total mass \(3n\) spread over exactly
\(3n\) coordinates. The only such vector is  that for which \(k_j=1\), for all \(j\). It is, 
therefore, the \emph{unique} minimizer of \(F\) under \eqref{eq:sum3n}. We call a
world where every tuple is selected exactly once, a \emph{unit world}. The
minimum-cost worlds are precisely the unit worlds, when they exist.

\vspace{1mm}
\h \emph{Column analysis.} Fix \(v_j\in V\) with incident tuples
\(t_{i_1j},t_{i_2j},t_{i_3j}\). The supplementary blocks \(s_{j1},s_{j2}\) select
one tuple each, both from this triple, contributing exactly with two selections to
column \(v_j\). Let \(a_j\) be the number of \emph{primary} blocks selecting a
\(v_j\)-column tuple. \  Then, the three tuples of \(v_j\) receive \(a_j+2\)
selections. In a unit world, each is selected once, so \(a_j+2=3\), i.e.
\begin{equation}\label{eq:aj}
  a_j \;=\; 1, \ \ \mbox{ for every } \ v_j\in V.
\end{equation}

\h The exclusion of \(a_j=0\) and \(a_j\ge2\) is, thus, \emph{global}, forced by
\eqref{eq:sum3n}, and the two mandatory supplementary selections per column,
rather than by any local property of a single column.

\vspace{1mm}
\h \emph{Forced supplementary completion.} Assume \eqref{eq:aj}, and let the
unique primary selection at \(v_j\) be \(t\). The two other tuples of \(v_j\)
must each be selected once, by \(s_{j1},s_{j2}\). Checking the three cases
against \(s_{j1}=\{t_{i_1j},t_{i_2j}\}\), \(s_{j2}=\{t_{i_2j},t_{i_3j}\}\): If
\(t=t_{i_1j}\), then \(s_{j1}=t_{i_2j},\,s_{j2}=t_{i_3j}\). If \(t=t_{i_3j}\), then
\(s_{j1}=t_{i_1j},\,s_{j2}=t_{i_2j}\). And, if \(t=t_{i_2j}\), then
\(s_{j1}=t_{i_1j},\,s_{j2}=t_{i_3j}\). In each case the assignment of
\(s_{j1},s_{j2}\) is unique. Hence, the supplementary part of a unit world is
uniquely determined by its primary part.

\vspace{1mm}
\h \emph{Bijection with perfect matchings.} \ In a unit world, each primary block
\(b_i\) selects one edge incident to \(u_i\), and by \eqref{eq:aj}, each \(v_j\)
receives exactly one such primary-selected edge. The selected edges form, therefore,
a perfect matching \(M\) of \(\mc G\). Conversely, given a perfect matching
\(M\), assigning to each \(b_i\) its matched edge, yields \(a_j=1\), for all \(j\),
which the forced completion extends to a unique unit world. \ The correspondence is
a bijection.\ignore{: Distinct matchings differ on some edge, and then, assign to some primary
block a different tuple, and yield distinct worlds.\ignore{; and  every unit world \red{maps
to a specific matching as above???}.}} \ As a consequence: 

\h {\em The number of minimum-cost worlds is the same as the number of perfect matchings of $\mc{G}$, whenever a unit worlds exists.} \hfill (*) \ignore{Consequently
\begin{equation}\label{eq:mcw-pm}
  \#\{\text{minimum-cost worlds}\}
  \;=\; \#\{\text{perfect matchings of }\mc G\}
  \qquad\text{whenever a unit world exists.}
\end{equation}}

\vspace{1mm}
\h \emph{Reduction.} A unit world exists \ iff \ \(\mc G\) has a perfect matching,
which is decidable in polynomial-time. If a unit world exists, the most-compliant worlds are
exactly the unit worlds and, by (*), \ 
$\#\textbf{MCW}=$  the number of perfect matchings of $\mc{ G}$. 

\h If there are no unit worlds, the number of 
perfect matchings of $\mc{G}$ is $0$,  and no \textbf{\#MCW} query is
needed. \ In either case, the number of perfect matchings of \(\mc G\) is obtained
from a polynomial-time matching-existence test, and at most one \textbf{\#MCW}
oracle call. Since counting perfect matchings in 3-regular bipartite graphs is
\#P-complete \cite{dagum92}, \textbf{\#MCW} is \#P-hard. \boxtheorem

\ignore{
\vspace{1mm}\h 
Membership and hardness together establish that \textbf{\#MCW} is \#P-complete,
proving Theorem~\ref{th:complresultsmcc}(d).}

\subsection{\bf \#P-Completeness of \textbf{\#Class} and \textbf{\#MCClass}.}
We establish Theorem~\ref{th:complresultsmcc}(e): both \textbf{\#Class} and
\textbf{\#MCClass} are \#P-complete.

\vspace{1mm}\noindent\emph{Membership.}\quad
For both problems a world $W$ is a polynomial-size certificate whose validity is
checkable in polynomial time. \ In fact, first, that $W$ is an admissible world (one tuple
selected per block) is verified in polynomial-time, it exhibits a concrete
flow. \ 
Second, the defining predicate for \textbf{\#Class}, namely ``$W$ has
count-vector $\mathbf k$''  is polynomial-time checkable.  And, since the minimum
compliance cost $F^\star$ is computable in \textbf{FP} (by Theorem~\ref{th:complresultsmcc}~(a)), the predicate
``$d^c(W,\mc M)=F^\star$'' for \textbf{\#MCClass} is also verifiable in polynomial-time. As a consequence, both \textbf{\#Class} and \textbf{\#MCClass} count the accepting
certificates of a polynomial-time predicate; and then, both lie in \#P.

\vspace{1mm}\noindent\emph{Hardness.}\quad
We use the construction and cost of Lemma~\ref{lemma:mcwcounthard}. By the
strict-convexity argument, the all-ones vector $\mathbf k^\star=(1,\dots,1)$ is
the \emph{unique} minimiser of $F$ under \eqref{eq:sum3n}. Then, when $\mc G$
admits a perfect matching, $C_{\mathbf k^\star}$ is the unique minimum-cost admissible class. Its cardinality, $|C_{\mathbf k^\star}|$, is the number of worlds
with count-vector $\mathbf k^\star$, i.e.\ the number of unit worlds, which, by the
bijection of Lemma~\ref{lemma:mcwcounthard}, equals the number of perfect matchings of
$\mc G$. 

\h As perfect-matching existence is decidable in polynomial-time (and the
count is $0$, otherwise), a polynomial-time test together with one oracle call
yields the number of perfect matchings of $\mc G$. This single instance witnesses
hardness for both \textbf{\#Class} (with $\mathbf k^\star$ given) and
\textbf{\#MCClass} (with $C_{\mathbf k^\star}$ selected by optimality). \ Since
counting perfect matchings in $3$-regular bipartite graphs is \#P-complete
\cite{dagum92}, both problems are \#P-hard.
\boxtheorem



\ignore{XXXXXXXXX

\section{Appendix: \red{Material for Section \ref{sec:mpdbs}}}
\ignore{of Theorem \ref{th:complresultsmp}}\label{sec:app3}

\comlb{\red{\ul{\bf Why this is not part of section E????}}}

\comlb{OLD: This stuff here below was in the main body of the paper. I moved it here, but it has to be harmonized with what follows. See next comment below.}

 The complexity results in this theorem are established through two main reductions that exploit the algebraic structure of class probabilities.

 \begin{lemma} \em (class probability to permanent)\label{lem:class-permanent}

\vspace{1mm}\noindent
    1. For any $n \times n$ stochastic matrix $M$, there exists a polynomial-time reduction that constructs a BID instance $D^p$ associated with $M$, consisting of $n$ blocks and $n$ distinct tuples, such that:
    $$\red{\mathsf{perm}(M)} = P^{\mathcal{C}}(C_{\mathbf{k}^\star}), \text{ where } \mathbf{k}^\star = (1, \cdots, 1),$$
    where $\mathsf{perm}(M)$ denotes the permanent of matrix $M$.

    \vspace{1mm}\noindent
    2. Let $D^p$ be a BID with $n$ blocks and $n$ distinct tuples, and let $C_{\mathbf{k}}$ be a class with $\mathbf{k} \in \mathbb{N}^n$, such that $\sum_{i=1}^n k_i = n$. There exists a polynomial-time reduction that constructs a stochastic matrix $M \in [0,1]^{n \times n}$, such that:
    $$P^{\mathcal{C}}(C_{\mathbf{k}}) = \frac{\mathsf{perm}(M_{\mathbf{k}})}{\prod_{i=1}^n k_i!},$$
    where $M_{\mathbf{k}}$ is an $n \times n$ matrix obtained by duplicating each column $i$ of $M$ exactly $k_i$ times.
\boxtheorem
\end{lemma}

\h This lemma establishes the fundamental algebraic connection between class probability computation and permanent evaluation, providing the technical foundation for all complexity results. The first part demonstrates that permanent computation can be encoded as a class probability problem, establishing the hardness direction by reducing the $\#P$-complete permanent problem to class probability computation. The second part shows the converse direction, revealing how any class probability can be expressed in terms of a permanent of a structured matrix with duplicated columns, thereby establishing membership in $\#P$. Together, these bidirectional reductions yield $\#P$-completeness for the \textbf{Class Probability} problem.

\h Moreover, the second reduction naturally extends to handle the optimization and decision variants by associating different classes with a parametric family of matrices obtained through systematic column duplication according to the class specification vector $\mathbf{k}$. This parametric construction makes the membership of \textbf{MPC[D]} and \textbf{CMPC[D]} in $\text{NP}^{\text{PP}}$ straightforward: the algorithm non-deterministically guesses a class vector $\mathbf{k}$ (the NP step), constructs the corresponding matrix $M_{\mathbf{k}}$, and queries a PP oracle to decide whether $P^{\mathcal{C}}(C_{\mathbf{k}}) \geq \theta$.
The $\text{NP}^{\text{PP}}$-completeness of \textbf{CMPC[D]} is established via a polynomial-time many-one reduction from the canonical $\text{NP}^{\text{PP}}$-complete problem \textbf{E-MAJSAT}, as outlined below.

\h Given a Boolean formula $\varphi(x_1,\dots,x_n, y_1,\dots,y_m)$, \textbf{E-MAJSAT} asks whether there exist an assignment $\alpha \in \{0,1\}^n$ to the $x$ variables such that $\varphi(\alpha,\beta) = 1$ for at least half of all $2^m$ assignments $\beta \in \{0,1\}^m$ to the $y$ variables?

\textbf{E-MAJSAT} first guesses an assignment $\alpha$ for the $x$ variables (NP step), then checks whether $\varphi(\alpha,y)$ is satisfied by a majority of $y$ assignments (PP oracle step).

\h Building on Valiant’s reduction from \#SAT to the permanent \cite{Valiant1979}, we adapt his gadget-based construction: by employing a subtle mechanism of column duplication, the variable gadgets corresponding to the existential variables $x$ are modified to encode a class vector $\mathbf{k}$, thereby replacing nondeterministic guessing of $x$ with the deterministic selection of $\mathbf{k}$. The counting over the universal variables $y$ is then realized via the permanent of a carefully constructed matrix, in accordance with Valiant’s original proof (see Appendix \ref{sec:app3} for details).

\h Finally, since \textbf{MPC[D]} belongs to $\text{NP}^{\text{PP}}$, the corresponding optimization problem \textbf{MPC[O]} lies in the complexity class $\text{FP}^{\text{NP}^{\text{PP}}}$, which consists of function problems solvable in polynomial time using an $\text{NP}^{\#\text{P}}$ oracle.
This is achieved via a binary search over the possible values of the projection cost $\theta$. Each query asks whether there exists a class assignment $\mathbf{k}$ such that $P^{\mathcal{C}}(C_{\mathbf{k}}) \geq \theta$, which is an $\text{NP}^{\text{PP}}$ decision problem.
Although $\theta$ is a rational number in $[0,1]$, the set of achievable probabilities $P^{\mathcal{C}}(C_{\mathbf{k}})$ is finite and can be represented using a number of bits polynomial in the input size. Therefore, the binary search only needs to consider polynomially many thresholds to identify the maximum probability.

\h Consequently, \textbf{MPC[O]} is polynomial-time reducible to \textbf{MPC[D]} with an $\text{NP}^{\text{PP}}$ oracle, placing it in $\text{FP}^{\text{NP}^{\text{PP}}}$.

\comlb{OLD: This below was the proof that was already in the appendix.}

XXXXXXXXXXXXXXXXXXXXXXXXXXX}

\section{Appendix: Proof of Theorem \ref{th:complresultsmp}.}
\label{proof:mpc}

The following  lemma establishes the fundamental algebraic connection between class probability computation and permanent evaluation, providing the technical foundation for all complexity results in  Theorem \ref{th:complresultsmp}.

 \begin{lemma} \em (class probability to permanent)\label{lem:class-permanent}

\vspace{1mm}\noindent
    1. For any $n \times n$ stochastic matrix $M$, there exists a polynomial-time reduction that constructs a BID instance $D^p$ associated with $M$, consisting of $n$ blocks and $n$ distinct tuples, such that:
    $$\mathsf{perm}(M) \ = \ P^{\mathcal{C}}(C_{\mathbf{k}^\star}), \ \text{ where } \mathbf{k}^\star = (1, \cdots, 1),$$
    where $\mathsf{perm}(M)$ denotes the permanent of matrix $M$.

    \vspace{1mm}\noindent
    2. Let $D^p$ be a BID with $n$ blocks and $n$ distinct tuples, and let $C_{\mathbf{k}}$ be a class with $\mathbf{k} \in \mathbb{N}^n$, such that $\sum_{i=1}^n k_i = n$. There exists a polynomial-time reduction that constructs a stochastic matrix $M \in [0,1]^{n \times n}$, such that:
    $$P^{\mathcal{C}}(C_{\mathbf{k}}) \ = \ \frac{\mathsf{perm}(M_{\mathbf{k}})}{\prod_{i=1}^n k_i!},$$
    with $M_{\mathbf{k}}$ an $n \times n$ matrix obtained by duplicating each column $i$ of $M$ exactly $k_i$ times.
\boxtheorem
\end{lemma}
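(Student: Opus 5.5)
The plan is to expand the class probability directly as a sum over assignments of blocks to tuples, and then read that sum as a permanent.

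For part 2, fix a BID with blocks $B_1,\dots,B_n$ and support $T=\langle t_1,\dots,t_n\rangle$. Define $M\in[0,1]^{n\times n}$ by $M_{ij}:=p_i(t_j)$ if $t_j\in B_i$, and $0$ otherwise. Each row sums to the total mass of block $B_i$, which is $1$ in our BIDs by Definition~\ref{def:blocks}(b), so $M$ is (row-)stochastic.

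A world is a function $\sigma:\{1,\dots,n\}\to\{1,\dots,n\}$ choosing tuple $t_{\sigma(i)}$ in block $B_i$. Its probability is $\prod_i M_{i\sigma(i)}$, and worlds with a zero factor contribute nothing. Hence $P^\mc{C}(C_{\mathbf k})=\sum_{\sigma:\,|\sigma^{-1}(j)|=k_j}\prod_i M_{i\sigma(i)}$.

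Next, form $M_{\mathbf k}$ by replacing column $j$ with $k_j$ copies, indexed $(j,1),\dots,(j,k_j)$. Since $\sum_j k_j=n$, this matrix is square. Expanding $\mathsf{perm}(M_{\mathbf k})=\sum_{\pi}\prod_i (M_{\mathbf k})_{i\pi(i)}$, the projection that forgets the copy index sends each bijection $\pi$ to a map $\sigma$ with fibre sizes exactly $k_j$. Conversely, each such $\sigma$ has exactly $\prod_j k_j!$ preimages, one for each way of distributing the $k_j$ rows in $\sigma^{-1}(j)$ among the $k_j$ copies. The products agree along the projection, so $\mathsf{perm}(M_{\mathbf k})=\prod_j k_j!\cdot P^\mc{C}(C_{\mathbf k})$, which is the claim. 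The construction is clearly polynomial. If some $k_j=0$, column $j$ simply disappears, and $0!=1$ keeps the formula valid.

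For part 1, we reverse the construction. Given a stochastic $M$, create $n$ distinct tuples $t_1,\dots,t_n$ (say, with a single attribute whose value encodes $j$) and blocks $B_i:=\{t_j: M_{ij}>0\}$ with $p_i(t_j):=M_{ij}$. This is a legitimate BID because the rows sum to $1$. By Proposition~\ref{lemma:BIDrepresentation}, it can be realized up to equivalence as $D^p(D^\star,\mc{M})$ for an observed instance and a MAR MG. Corollary~\ref{cor:equiv-bid} then guarantees that the world probabilities, and hence the class probabilities, are preserved. Applying the identity from part 2 with $\mathbf{k}^\star=(1,\dots,1)$ gives $M_{\mathbf{k}^\star}=M$ and $\prod_j k_j!=1$, so $\mathsf{perm}(M)=P^\mc{C}(C_{\mathbf{k}^\star})$.

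The only step requiring care is the $\prod_j k_j!$-to-one counting in part 2. One must check that the copy labelling is the only source of overcounting, which holds because distinct $\sigma$ project from disjoint sets of $\pi$. A second point is that part 2 assumes the support has exactly $n$ tuples. If a given BID has a support of size $m\neq n$, one would either pad with zero columns or work with an $n\times m$ matrix before duplication; the statement as posed avoids this issue.
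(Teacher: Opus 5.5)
Your proposal is correct and follows essentially the same route as the paper: the same correspondence $M_{ij}=p_i(t_j)$ between stochastic matrices and BIDs, the same $\prod_j k_j!$-to-one projection from permutations of $M_{\mathbf{k}}$ onto the worlds of $C_{\mathbf{k}}$ (which yields part~2), and part~1 by specializing to $\mathbf{k}^\star=(1,\dots,1)$. Your extra appeal to Proposition~\ref{lemma:BIDrepresentation} and Corollary~\ref{cor:equiv-bid}, to realize the BID from an observed instance and an MG, is not used in the paper's proof of this lemma but is harmless.
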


\ignore{+++ \noindent {\bf Proof (Lemma \ref{lem:class-permanent}):} \
Consider an $n \times n$ stochastic matrix $M = (m_{ij})$ with rows $b_1, \cdots, b_n$ and columns $t_1, \cdots, t_n$. We define a BID instance $D^P$ associated with $M$, consisting of $n$ blocks, $B = \{b_1, \cdots, b_n\}$, and a set of  distinct tuples $T = \{t_1, \cdots, t_n\}$. The probability of tuple $t_j$ in block $b_i$ is given by:
$
p_{b_i}(t_j) = m_{ij}$.
Let $n_j$ denote the number of occurrences of  $t_j$ across the blocks in $B$ (i.e. the number of blocks that contain $t_j$).
 
\h A world $W$ is a multiset of (bag) cardinality $n$. Consequently, a class $C_{\mathbf{k}}$ of the BID $D^p$ is uniquely defined by a vector of positive integers $\mathbf{k}=(k_1, \cdots, k_n)$, with $\sum\limits_{i \in [1,n]} k_i=n$. Here, each $k_i \leq n_i$ gives the number of occurrences of tuple $t_i$ in class $C_{\mathbf{k}}$.
A class $C_{\mathbf{k}}$, with $\mathbf{k}=(k_1, \cdots, k_n)$ is associated  to a $n \times n$ matrix $M_{\mathbf{k}}$, the matrix formed by $k_i$ copies of each column $t_i$ of $M$.

\h The probability of a class $C_{\mathbf{k}}$ defined by $\mathbf{k}$ is given by:
\begin{equation*}
P^\mc{C}(C_{\mathbf{k}}) \ := \  \sum_{W \in C} P^\mc{W}\!(W) \ = \ \sum_{W \in C_{\mathbf{k}}} \prod_{b_i\in B} P_{b_i}(t^W_i) \
=  \ \frac{\mathsf{perm}(M_{\mathbf{k}})}{\prod\limits_{i=1}^n k_j!}.
\end{equation*}

\h In particular, when $\mathbf{k_0} = (1, \cdots, 1)$, the matrix $M_{\mathbf{k_0}}$ coincides with $M$, and  $\prod\limits_{j=1}^n k_j! = 1$. \ Accordingly, we obtain
$P^{\mathcal{C}}(C_{\mathbf{k_0}}) = \mathsf{perm}(M).
$
\boxtheorem  ++++}

\noindent {\bf Proof (Lemma \ref{lem:class-permanent}):} \ 
Both directions rely on the same correspondence between stochastic matrices and
BIDs. We formulate it once, and prove the general identity of part~2, from which
part~1 follows by specialization.

\vspace{1mm} \noindent  {\bf The Correspondence.} \ 
Given an $n \times n$ stochastic matrix $M = (m_{ij})$, define a BID instance
$D^{p}$ with $n$ blocks $B = \{b_1, \dots, b_n\}$, and $n$ distinct tuples
$T = \{t_1, \dots, t_n\}$, where tuple $t_j$ occurs in block $b_i$ with
probability \ $
  p_{b_i}(t_j) \ = \ m_{ij}$.
  
\h Since $M$ is stochastic, $\sum_{j} p_{b_i}(t_j) = 1$ for every block. So,  each
block contributes exactly with one tuple. This mapping is a bijection between such BIDs
and stochastic matrices; and is clearly computable in time polynomial in $n$ in
both directions.
This establishes the reductions claimed in parts~1 and~2. \ Let
$n_j$ denote the number of blocks that contain $t_j$, i.e. for which $m_{ij}>0$.

\vspace{1mm}
\noindent 
{\bf Worlds and Classes.} \ 
Since a world $W$ selects one tuple per block, it defines an assignment
$B_W\!: [n] \to [n]$, with $B_W(i)$ the index of the tuple drawn from $b_i$. Its
probability is:
\[
  P^{\mathcal{W}}(W) \ = \ \prod_{b_i \in B} p_{b_i}\!\big(t_{B_W(i)}\big)
  \ = \ \prod_{i=1}^n m_{i,\,B_W(i)}.
\]
Each world has a multiplicity vector $\mathbf{k} = \langle k_1, \dots, k_n\rangle$, where
$k_i = \lvert B_W^{-1}(i)\rvert$ counts the blocks that draw $t_i$.  Since the bag has
cardinality $n$, it holds $\sum_{i=1}^n k_i = n$. The class $C_{\mathbf{k}}$ collects the
worlds with a given $\mathbf{k}$, and is admissible only if $k_i \le n_i$, for all
$i$. To $\mathbf{k}$, we attach the $n \times n$ matrix $M_{\mathbf{k}}$ obtained
from $M$, by taking $k_i$ copies of column $i$ (the column of tuple $t_i$). It has
$\sum_i k_i = n$ columns, and then, it is a square matrix.

\vspace{1mm}
\noindent {\bf Class Probability.} \  
Summing world probabilities over the class, we obtain:
\[
  P^{\mathcal{C}}(C_{\mathbf{k}})
  \ := \ \sum_{W \in C_{\mathbf{k}}} P^{\mathcal{W}}(W)
  \ = \ \sum_{W \in C_{\mathbf{k}}} \ \prod_{i=1}^n m_{i,\,B_W(i)}.
\tag{$\ast$}
\]
To relate this to the permanent, let $c(\ell)\in[n]$ be the type of column $\ell$ of $M_{\mathbf{k}}$ (defined as the corresponding original column in $M$). Hence,    $c(\ell)$ is the column of $M$ duplicated as the $l$ column in $M_{\mathbf{k}}$; so that, with $\mathfrak{S}_n$ the set of permutations of $[1,n]$, it holds:
\[
  \mathsf{perm}(M_{\mathbf{k}})
  \ = \ \sum_{\pi \in \mathfrak{S}_n} \prod_{i=1}^n (M_{\mathbf{k}})_{i,\pi(i)}
  \ = \ \sum_{\pi \in \mathfrak{S}_n} \prod_{i=1}^n m_{i,\,c(\pi(i))}.
\]
\h Each permutation $\pi$ determines a world $W$ with $B_W = c \circ \pi$, which
lies in $C_{\mathbf{k}}$: Since $M_{\mathbf{k}}$ has exactly $k_i$ columns of type
$i$, and $\pi$ is a bijection, exactly $k_i$ rows map to type $i$; that is, 
$\lvert B_W^{-1}(i)\rvert = k_i$. \ 

\h Every world of $C_{\mathbf{k}}$ can be obtained in  this way,
and $\pi \mapsto W$ is exactly $\prod_{i=1}^n k_i!$-to-one: Two permutations give
the same world iff they differ by a permutation of the identical columns within
each type; and there are $\prod_{i=1}^n k_i!$ of these. Then, 
\[
  \mathsf{perm}(M_{\mathbf{k}})
  \ = \ \Big(\prod_{i=1}^n k_i!\Big)
        \sum_{W \in C_{\mathbf{k}}} \prod_{i=1}^n m_{i,\,B_W(i)}
  \ = \ \Big(\prod_{i=1}^n k_i!\Big)\, P^{\mathcal{C}}(C_{\mathbf{k}}).
\]
Dividing by $\prod_{i=1}^n k_i!$ gives the identity of part~2:
\[
  P^{\mathcal{C}}(C_{\mathbf{k}})
  \ = \ \frac{\mathsf{perm}(M_{\mathbf{k}})}{\prod_{i=1}^n k_i!}.
\]

\noindent {\bf Specialization (part~1).} \ 
For $\mathbf{k}^\star = (1, \dots, 1)$, every tuple is selected exactly once; and then, 
$M_{\mathbf{k}^\star} = M$, and $\prod_{i=1}^n k_i! = 1$. \ Therefore,
\[
  P^{\mathcal{C}}(C_{\mathbf{k}^\star})
  \ = \ \frac{\mathsf{perm}(M)}{1}
  \ = \ \mathsf{perm}(M).
\]
Both constructions run in polynomial-time in $n$. This  completes the proof.
\boxtheorem

\subsection{Class Probability is \#P-complete (Theorem \ref{th:complresultsmp}(a)).}

 
\textit{Hardness.} Follows from Lemma~\ref{lem:class-permanent}(1), since computing the permanent of an arbitrary stochastic matrix is $\#\mathsf{P}$-hard.

\emph{Membership.} By Lemma~\ref{lem:class-permanent}(2),
$P^{\mathcal C}(C_{\mathbf k})=\mathsf{perm}(M_{\mathbf k})/\prod_i k_i!$. Writing the entries of $M$ over
a common denominator, $\mathsf{perm}(M_{\mathbf k})$ scaled by a polynomial-time-computable integer is a
$\#\mathsf P$ function, and $\prod_i k_i!$ is polynomial-time computable; thus the (scaled) probability is
$\#\mathsf P$-complete under metric reductions.

\vspace{1mm}\noindent{\bf Single-threshold test.}\ 
For a fixed $\mathbf k$, deciding $P^{\mathcal C}(C_{\mathbf k})\ge\theta$ is in $\mathsf{PP}$:
by (a), $P^{\mathcal C}(C_{\mathbf k})$ is $\#\mathsf P$-complete, so (up to a poly-time integer
factor) it is a $\#\mathsf P$ function, and comparing a $\#\mathsf P$ function to a threshold is
exactly $\mathsf{PP}$. We use this in (b) and (c).


\subsection{MPC[D] \ and \ CMPC[D]$^{\{=, \leq\}}$ \ are in \ NP\textsuperscript{PP} (Membership for Theorem \ref{th:complresultsmp}(b) and (c)).}
\label{sec:membership-bc}


$\mathsf{NP^{PP}}$ is the class of decision problems solvable by a nondeterministic
polynomial-time Turing machine with access to a $\mathsf{PP}$ oracle. Recall the
\textbf{MPC[D]} problem: given $\theta \in [0,1]$, decide whether there exists
$\mathbf{k} = (k_1, \dots, k_m) \in \mathbb{N}^m$ such that
\[
  \sum_{i=1}^{m} k_i = n
  \quad \text{and} \quad
  P^{\mathcal{C}}(C_{\mathbf{k}}) \ \geq \ \theta.
\]
\h The search space is exponential: each candidate $\mathbf{k}$ has $\sum_i k_i = n$ with every
$k_i \le n$, so there are up to $(n+1)^m$ of them. A nondeterministic machine sidesteps this by
\emph{guessing} a single $\mathbf{k}$. The guess is a valid certificate: in data complexity $m$ is at most the input size and each
$k_i \le n$, so $\mathbf{k}$ has $O(m\log n)$ bits --- polynomially many --- and can be guessed
in nondeterministic polynomial time.

Verifying a guess takes one oracle call. The feasibility check $\sum_i k_i = n$ is polynomial-time,
and deciding $P^{\mathcal{C}}(C_{\mathbf{k}}) \ge \theta$ is a single $\mathsf{PP}$ query: by
Theorem~\ref{th:complresultsmp}(a), $P^{\mathcal{C}}(C_{\mathbf{k}})$ is (up to a polynomial-time
factor) a $\#\mathsf{P}$ quantity, and comparing a $\#\mathsf{P}$ quantity to a threshold is exactly
what a $\mathsf{PP}$ oracle decides. The machine accepts iff both checks pass, so
$\textbf{MPC[D]} \in \mathsf{NP^{PP}}$.

The same argument covers \textbf{CMPC[D]$^{\{=,\le\}}$}: the machine additionally verifies the
component-wise constraints $k_i \mathrel{op_i} b_i$ for all $i$, which is polynomial-time and leaves
the single $\mathsf{PP}$ query unchanged. Hence \textbf{CMPC[D]$^{\{=,\le\}}$} $\in \mathsf{NP^{PP}}$ as well.

\subsection{CMPC[D]$^{\{=\}}$ is \textbf{PP}-complete (Theorem \ref{th:complresultsmp}(c), first part).}
This is for the first part of Theorem \ref{th:complresultsmp}(c). 
Membership was established in Section~\ref{sec:membership-bc}.

\textit{Hardness.}
 With $\mathcal{O} = \{=\}$, the constraints force $\mathbf{k} = \mathbf{b}$ (and if $\sum_i b_i\neq n$ then
$C_{\mathbf b}=\varnothing$ and $P^{\mathcal C}(C_{\mathbf b})=0$). The problem is therefore the single test
$P^{\mathcal C}(C_{\mathbf b})\ge\theta$, which is in $\mathsf{PP}$ by the single-threshold test and
$\mathsf{PP}$-hard because computing $P^{\mathcal C}$ is $\#\mathsf P$-complete by (a). Hence it is $\mathsf{PP}$-complete.

\subsection{CMPC[D]$^{\{=, \leq\}}$ is NP\textsuperscript{PP}-complete (Theorem \ref{th:complresultsmp}(c), second part).}
This is the proof for the second part of Theorem \ref{th:complresultsmp}(c). Membership was established in Section~\ref{sec:membership-bc}.

\textit{Hardness.} For the $\mathsf{NP^{PP}}$-hardness of
\textbf{CMPC[D]$^{\{=, \leq\}}$}, we give a reduction from \textsc{E-MajSat}, a canonical
$\mathsf{NP^{PP}}$-complete problem. Let $\varphi(\bar{x},\bar{y})$ be a Boolean formula with
disjoint variable sets $\bar{x}=(x_1,\dots,x_r)$ and $\bar{y}=(y_1,\dots,y_l)$, called
respectively the \emph{existential} and the \emph{counting} variables. \textsc{E-MajSat} asks
whether there exists $\alpha\in\{0,1\}^r$ such that
\[
    \bigl|\{\beta\in\{0,1\}^l : \varphi(\alpha,\beta)=1\}\bigr| \;\ge\; 2^{l-1},
\]
that is, whether some assignment to $\bar{x}$ makes a majority of the assignments to $\bar{y}$
satisfy $\varphi$. In our reduction the nondeterministic choice of $\alpha$ is realized by the
choice of the duplication applied to the existential gadgets.

\vspace{1mm}
\h For a directed graph, $\mc{G}$, possibly with self-loops, with adjacency matrix, $A(\mc{G})$, a \emph{cycle
cover} is a set of vertex-disjoint directed cycles covering all vertices. $\perm(A(\mc{G}))$ coincides with
the number of cycle covers of $\mc{G}$. \ Valiant's parsimonious reduction from $\#\textsc{Sat}$ to
\textsc{Permanent} \cite{Valiant1979} rests on this identity: From a CNF formula,  a graph
$\mc{G}_\varphi$ is built out of variable gadgets, clause gadgets, and identification (XOR) gadgets, whose cycle
covers correspond to the satisfying assignments of $\varphi$. \ 
We keep the clause and XOR gadgets
of \cite{Valiant1979} unchanged and modify only the gadgets of the existential variables.

\vspace{1mm} \noindent {\bf The Augmented Variable Gadget.} \ 
Valiant's variable gadget for variable $x$ has a positive and a negative path, through the literal nodes
$x_P$ and $x_N$, each with a self-loop, and a node $x'$ closing the gadget cycle via the edge $x'\!\to\!x$.
For every existential variable $x$, we insert two \emph{gate nodes}, $G_P,G_N$, one per path, so
that each path reaches $x'$ through its gate:
\[
    x_P\to G_P\to x', \ \mbox{ and } \  x_N\to G_N\to x',
\]
the gates carrying no self-loop. This yields the core matrix $C_x$ on the ordered node set
$(x,x_P,x_N,x',G_P,G_N)$ shown on the top-left in Figure~\ref{fig:proof-gadget}; we call
this six-node subgraph the \emph{(existential) core} of $x$, i.e.\ the augmented gadget before the XOR and
clause connectors are attached. The edges used as
\emph{connectors} by the XOR gadgets are the original path edges leaving $x_P$ and $x_N$. The
gate edges are never connectors; so, the XOR and clause gadgets are attached exactly as in
\cite{Valiant1979}. 
Since the existential cores are pairwise vertex-disjoint, each
gate-column overwrite  is confined to its own core, so the per-core assignment
choices are independent.

\begin{figure}[h]
    \centering
    \begin{tikzpicture}[->, >=stealth, node distance=0.6cm and 1cm]
        \node (x) {$x$};
        \node (xp) [above left=0.8cm and 0.25cm of x] {$x_P$};
        \node (xn) [above right=0.8cm and 0.25cm of x] {$x_N$};
        \node (gp) [above=1.2cm of xp] {$G_P$};
        \node (gn) [above=1.2cm of xn] {$G_N$};
        \node (xpr) [above=3.0cm of x] {$x'$};
        \draw[->] (x) -- (xp);
        \draw[->] (x) -- (xn);
        \draw[->] (xp) -- (gp);
        \draw[->] (xn) -- (gn);
        \draw[->] (gp) -- (xpr);
        \draw[->] (gn) -- (xpr);
        \draw[->] (xpr) -- (x);
        \draw[->] (xp) edge[loop left] ();
        \draw[->] (xn) edge[loop right] ();

        \node (x2) [right=5.5cm of x] {$x$};
        \node (xp2) [above left=0.8cm and 0.25cm of x2] {$x_P$};
        \node (xn2) [above right=0.8cm and 0.25cm of x2] {$x_N$};
        \node (gp2) [above=1.2cm of xp2] {$G_P$};
        \node (gn2) [above=1.2cm of xn2] {$G_N$};
        \node (xpr2) [above=3.0cm of x2] {$x'$};
        \draw[->] (x2) -- (xp2);
        \draw[->] (x2) -- (xn2);
        \draw[->] (xp2) -- (gp2);
        \draw[->] (gp2) -- (xpr2);
        \draw[->] (gn2) -- (xpr2);
        \draw[->] (xpr2) -- (x2);
        \draw[->] (xp2) edge[loop left] ();
        \draw[->] (xn2) edge[loop right] ();
        \draw[->] (gn2) edge[loop right] ();
        \draw[->, densely dashed] (gp2) to[bend left=20] (gn2);

        \node (Mx) [below=1.0cm of x] {\textbf{Core Matrix $C_x$}};
        \node (M1) [below=0.15cm of Mx] {$
            \begin{array}{c|cccccc}
                    & x & x_P & x_N & x' & G_P & G_N \\ \hline
                x   & 0 & 1   & 1   & 0  & 0   & 0   \\
                x_P & 0 & 1   & 0   & 0  & 1   & 0   \\
                x_N & 0 & 0   & 1   & 0  & 0   & 1   \\
                x'  & 1 & 0   & 0   & 0  & 0   & 0   \\
                G_P & 0 & 0   & 0   & 1  & 0   & 0   \\
                G_N & 0 & 0   & 0   & 1  & 0   & 0   \\
            \end{array}$};

        \node (Mx2) [below=1.0cm of x2] {\textbf{Matrix $C_x^{\mathsf{T}}$, $x=\mathsf{true}$ \; ($G_N \!\leftarrow\! x'$)}};
        \node (M2) [below=0.15cm of Mx2] {$
            \begin{array}{c|cccccc}
                    & x & x_P & x_N & x' & G_P & G_N \\ \hline
                x   & 0 & 1   & 1   & 0  & 0   & 0   \\
                x_P & 0 & 1   & 0   & 0  & 1   & 0   \\
                x_N & 0 & 0   & 1   & 0  & 0   & 0   \\
                x'  & 1 & 0   & 0   & 0  & 0   & 0   \\
                G_P & 0 & 0   & 0   & 1  & 0   & 1   \\
                G_N & 0 & 0   & 0   & 1  & 0   & 1   \\
            \end{array}$};
    \end{tikzpicture}
    \caption{\ {\bf The Augmented Variable Gadget.} Left: The core gadget $C_x$, obtained from
    Valiant's variable gadget by inserting the gate nodes $G_P,G_N$ on the positive and
    negative paths, so that they reach $x'$ through their gates
    ($x_P\!\to\!G_P\!\to\!x'$, $x_N\!\to\!G_N\!\to\!x'$). The connector edges used by the XOR
    gadgets are the original path edges out of $x_P$ and $x_N$. The gate edges are never used
    as connectors. The bare core has $\mathsf{perm}(C_x)=0$: With no self-loops on the gates,
    an idle gate cannot be covered; so, no cover exists until a value is selected. \  Right: The
    instance selecting $x=\mathsf{true}$, obtained by duplicating column $x'$ over column
    $G_N$. Since column $x'$ has its ones in rows $G_P$ and $G_N$, the copy gives column $G_N$
    the entries $(G_P,G_N)$ and $(G_N,G_N)$: It severs $x_N\!\to\!G_N$, and creates the gate
    self-loop $G_N\!\to\!G_N$ (drawn) together with the edge $G_P\!\to\!G_N$ (dashed). Thus,
    $x_N$ can only be covered by its own self-loop, and the negative path is idle. The instance
    for $x=\mathsf{false}$ is symmetric ($G_P\!\leftarrow\!x'$). Each selected instance has
    permanent $2$.}
    \label{fig:proof-gadget}
\end{figure}


\vspace{1mm} {\bf Existential Variable Assignment.} \ This is done through a controlled column duplication mechanism. 
\ The matrix size is fixed, and on each existential gadget, a single operation is allowed: Column
$x'$ is duplicated over a target column, overwriting it. \ Overwriting column $G_N$ (resp.\ $G_P$)
by a copy of column $x'$ yields the matrix $C_x^{\mathsf{T}}$ (resp.\ $C_x^{\mathsf{F}}$). The
instance $C_x^{\mathsf{T}}$, encoding $x=1$, is shown on the right of
Figure~\ref{fig:proof-gadget}.

\vspace{2mm}
\begin{lemma}\label{lem:gadget} \em
For every existential variable $x$, $\perm(C_x)=0$, while
$\perm(C_x^{\mathsf{T}})=\perm(C_x^{\mathsf{F}})=2$. \  Moreover, every cycle cover of
$C_x^{\mathsf{T}}$ traverses the positive connector edge, and leaves the negative connector edge
unused, with $x_N$ covered by its self-loop; symmetrically for $C_x^{\mathsf{F}}$.
\end{lemma}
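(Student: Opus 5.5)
The plan is to compute all three permanents directly as numbers of cycle covers of the six-node cores, using the identity $\perm(A(\mc{G}))=$ number of cycle covers recalled above. A cycle cover is the same thing as a bijection between rows and columns that uses only nonzero entries. So the work reduces to propagating \emph{forced} entries: a row or column with a single nonzero entry fixes the corresponding successor or predecessor. No general tool is needed beyond this bookkeeping. The case $C_x^{\mathsf{F}}$ follows from $C_x^{\mathsf{T}}$ by the symmetry $x_P\leftrightarrow x_N$, $G_P\leftrightarrow G_N$ of the construction.

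First, $\perm(C_x)=0$. In $C_x$, rows $G_P$ and $G_N$ each have a single nonzero entry, both in column $x'$. Any cycle cover would therefore need both $G_P$ and $G_N$ to have successor $x'$, which is impossible. So no cover exists.

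Second, $C_x^{\mathsf{T}}$, in which column $G_N$ is overwritten by column $x'$, so that its ones sit in rows $G_P,G_N$. I will fix forced choices in the following order.
\begin{itemize}
\item Row $x_N$ now has only its self-loop, so $x_N\to x_N$ is forced and column $x_N$ is used up.
\item Column $x$ has its only one in row $x'$, so $x'\to x$ is forced.
\item Row $x$ can then only use column $x_P$, so $x\to x_P$.
\item Row $x_P$ must use column $G_P$, since column $x_P$ is taken, so $x_P\to G_P$.
\item The remaining rows $\{G_P,G_N\}$ must be matched bijectively to the remaining columns $\{x',G_N\}$. All four entries there are $1$, which gives exactly two completions.
\end{itemize}
The two completions are, explicitly:
\begin{itemize}
\item $G_P\to x'$ and $G_N\to G_N$, giving the cycle $x\to x_P\to G_P\to x'\to x$ plus the loops at $x_N$ and $G_N$;
\item $G_P\to G_N\to x'$, giving the cycle $x\to x_P\to G_P\to G_N\to x'\to x$ plus the loop at $x_N$.
\end{itemize}
Hence $\perm(C_x^{\mathsf{T}})=2$. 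Both covers use the positive path edge out of $x_P$, i.e.\ the positive connector. Neither uses the negative path edge out of $x_N$, since $x_N$ is covered by its self-loop. This proves the ``moreover'' part for $\mathsf{true}$; the $\mathsf{false}$ case is symmetric.

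The only delicate point I expect is the connector statement once the core is embedded in Valiant's full graph $\mc{G}_\varphi$. There the connector edges out of $x_P,x_N$ are routed through XOR gadgets rather than going straight to the gates. I would handle this by noting that the duplication only alters column $G_N$ (resp.\ $G_P$), which lies inside the core, and that the gate edges are never connectors. The forcing argument above uses only rows and columns of core nodes whose entries are unaffected by the attachments, namely column $x$, row $x_N$ and the gate rows and columns. So it transfers verbatim: with $x=\mathsf{true}$ selected, the negative literal node can only be covered by its self-loop and the negative connector is idle, exactly as Valiant's construction requires. If some core row (e.g.\ $x_P$) acquires extra connector entries in the embedding, I would state the lemma at the core level and defer the interaction to the XOR-gadget analysis of \cite{Valiant1979}.
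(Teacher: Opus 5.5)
Your proof is correct and is essentially the paper's argument: a direct count of cycle covers via forced entries. In $C_x$ both gate rows point only to $x'$, and in $C_x^{\mathsf{T}}$ the forced choices $x_N\to x_N$, $x'\to x$, $x\to x_P$, $x_P\to G_P$ leave an all-ones $2\times 2$ block on rows $G_P,G_N$ and columns $x',G_N$. Your explicit list of the two covers is actually slightly more accurate than the paper's remark that the cycle $x\to x_P\to G_P\to x'\to x$ occurs in every cover, since the second cover routes through $G_P\to G_N\to x'$. Your closing paragraph on the embedding is unnecessary, because the lemma concerns only the six-node core. Its claim that row $x_N$ is unaffected by the attachments is also doubtful, since the connector edge leaving $x_N$ is exactly the edge an XOR gadget is attached to.
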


\noindent {\bf Proof:} \ 
In $C_x$, the gates have no self-loop; so,  the gate of the inactive path cannot be covered. Then, 
$C_x$ has no cycle cover, and $\perm(C_x)=0$. Column $x'$ has its two ones in rows $G_P$ and $G_N$.
Overwriting column $G_N$ by column $x'$ deletes the entry $(x_N,G_N)$, and creates $(G_P,G_N)$ and
$(G_N,G_N)$. It, thus, severs $x_N\!\to\!G_N$, and gives $G_N$ a self-loop. Row $x_N$ then, has its
self-loop as its only out-edge; so, in every cover $x_N$ is covered by its self-loop, the negative path is idle,
$x$ maps to $x_P$, and the positive path $x\!\to\!x_P\!\to\!G_P\!\to\!x'\!\to\!x$ is used. \ The two
covers of $C_x^{\mathsf{T}}$ correspond to the two ways of covering $\{G_P,G_N\}$ that are compatible with
the duplicated column, each with weight $1$. Hence, $\perm(C_x^{\mathsf{T}})=2$. The case
$C_x^{\mathsf{F}}$ is symmetric. \boxtheorem

\h When $x$ occurs in several clauses, the single literal node $x_P$ is replaced by a chain of
 nodes $p_1,\dots,p_k$, one per occurrence, each with a self-loop; so that the positive
path becomes $x\to p_1\to\cdots\to p_k\to G_P\to x'$. The negative path is treated symmetrically, with 
the gate lying below the whole chain. Overwriting column $G_N$ by column $x'$ severs the last
negative edge and, by downward induction along the chain, forces every negative occurrence onto
its self-loop. The $k$ positive occurrences are then, simultaneously active. This yields the
consistency across occurrences that the identification gadgets require, and
$\perm(C_x^{\mathsf{T}})=\perm(C_x^{\mathsf{F}})=2$, independently of $k$.

\vspace{1mm}
\noindent {\bf The Reduction.}\ 
Apply Valiant's construction to $\varphi(\bar{x},\bar{y})$, replacing each existential variable
gadget by its augmented core $C_x$; let $A:=A(\mc{G}_\varphi)$ be the adjacency matrix of the
resulting graph. Since a BID requires a row-stochastic matrix, we normalize: let
$s_i=\sum_j A_{ij}$ be the out-degree of node $i$ and $M:=D^{-1}A$, with
$D=\mathrm{diag}(s_1,\dots,s_n)$, the row-stochastic normalization of $A$ (every node lies on a
cycle, so $s_i\ge 1$ and $M$ is a valid BID probability matrix). Writing $Z:=\prod_i s_i$, the
permanent satisfies $\perm(M_{\mathbf k})=\perm(A_{\mathbf k})/Z$. The BID has one tuple per column
of $A$, and we impose, per existential core,
\[
  k_{x'}=2,\qquad k_{G_P}\le 1,\qquad k_{G_N}\le 1,
\]
with $k_j=1$ for every other column. All operators lie in $\{=,\le\}$, so this is a legal
\textbf{CMPC[D]$^{\{=,\le\}}$} instance, and $P^{\mathcal C}(C_{\mathbf k})=\perm(M_{\mathbf k})/\prod_i k_i!$.

\vspace{1mm}
\noindent {\bf Feasible vectors encode assignments.} \ 
Fix a feasible $\mathbf k$. The constraints leave only the gate multiplicities free, and preserving
the matrix size ($\sum_i k_i=n$) forces exactly $r$ gate-columns deleted in total. We claim
$\perm(A_{\mathbf k})>0$ requires \emph{exactly one} deletion per core. Indeed, in a core with
\emph{no} gate deleted the $k_{x'}=2$ copies leave more core-columns than core-rows, so no cycle
cover matches them; in a core with \emph{both} gates deleted, rows $x_P,x_N$ keep only their
self-loops and $x$ has no free out-column, so again no cover exists. Hence each core deletes exactly
one gate, and by Lemma~\ref{lem:gadget} the two choices $(k_{G_P},k_{G_N})=(1,0)$ and $(0,1)$ are
precisely the selecting duplications encoding $x=\mathsf{true}$ and $x=\mathsf{false}$. Feasible
vectors with positive probability are thus in bijection with assignments
$\alpha(\mathbf k)\in\{0,1\}^r$.

\vspace{1mm}
\noindent {\bf Probability counts satisfying $\bar y$.} \ 
For such a $\mathbf k$, Lemma~\ref{lem:gadget} puts every connector edge in one of the two states of
Valiant's variable gadget --- the active connector traversed, the idle one unused with its literal
node on a self-loop, exactly as an idle track is covered in \cite{Valiant1979}. As the connectors
and XOR gadgets are unchanged and no gate meets an XOR, the clause/XOR correctness is inherited, and
\[
  \perm\bigl(A_{\mathbf k}\bigr)
  \;=\; 2^{r}\,K_0 \cdot
        \bigl|\{\beta\in\{0,1\}^l : \varphi(\alpha(\mathbf k),\beta)=1\}\bigr|,
\]
where $K_0>0$ is Valiant's global constant and $2^{r}$ collects the two internal gate-coverings per
core. Writing $N(\alpha)=|\{\beta:\varphi(\alpha,\beta)=1\}|$ and using
$\prod_i k_i! = \prod_x 2! = 2^{r}$, the two powers of $2^{r}$ cancel:
\[
  P^{\mathcal C}(C_{\mathbf k})
  \;=\;\frac{\perm(M_{\mathbf k})}{\prod_i k_i!}
  \;=\;\frac{\perm(A_{\mathbf k})}{Z\,\prod_i k_i!}
  \;=\;\gamma\cdot N(\alpha(\mathbf k)),
  \qquad \gamma:=\frac{K_0}{Z}>0,
\]
a fixed constant, independent of $\mathbf k$.

\vspace{1mm}
\noindent {\bf Threshold and conclusion.} \ 
Set $\theta:=\gamma\,2^{l-1}$. A feasible $\mathbf k$ satisfies $P^{\mathcal C}(C_{\mathbf k})\ge\theta$
iff it has positive probability and $\gamma\,N(\alpha(\mathbf k))\ge\gamma\,2^{l-1}$, i.e.\ iff
$\alpha(\mathbf k)$ makes a majority of the $\beta$ satisfy $\varphi$. Hence some feasible $\mathbf k$
meets the threshold iff $\varphi$ is a positive instance of \textsc{E-MajSat}. The construction is
carried out in polynomial time, so \textbf{CMPC[D]$^{\{=,\le\}}$} is $\mathsf{NP^{PP}}$-hard, and with
membership (established earlier), $\mathsf{NP^{PP}}$-complete. \qed

\subsection{MPC[O] is in FP\textsuperscript{NP\textsuperscript{PP}} (Theorem \ref{th:complresultsmp}(d)).}
Since \textbf{MPC[D]} $\in \mathsf{NP}^{\mathsf{PP}}$, we solve \textbf{MPC[O]} in
$\mathsf{FP}^{\mathsf{NP}^{\mathsf{PP}}}$.
Recall MPC[O] asks for a maximizer
\[
  \mathbf k^\star \ \in\ \argmax_{\mathbf k\in\mathbb N^m,\ \sum_i k_i=n} P^{\mathcal C}(C_{\mathbf k}).
\]
We compute one in two phases, each using polynomially many calls to the $\mathsf{NP^{PP}}$ oracle
\textbf{MPC[D]}, which decides ``$\exists\,\mathbf k:\ P^{\mathcal C}(C_{\mathbf k})\ge\theta$''.

\emph{Phase 1: the optimal value.}\ 
By (a), every achievable probability is a rational $\perm(M_{\mathbf k})/Q$ with a common
denominator $Q$ of polynomially many bits; hence there are finitely many achievable values, any two
differing by at least $1/Q^2$. Binary search on $\theta\in[0,1]$, querying MPC[D] at each step,
therefore isolates the optimum
$P^\star=\max_{\mathbf k}P^{\mathcal C}(C_{\mathbf k})$ after polynomially many queries.

\emph{Phase 2: a maximizer.}\ 
We reconstruct $\mathbf k^\star$ component by component by self-reduction. Having fixed
$k_1^\star,\dots,k_{j-1}^\star$, try each candidate value $v\in\{0,\dots,n\}$ for $k_j$ and ask the
oracle whether some completion with $k_j=v$ (and the earlier components fixed) attains
$P^{\mathcal C}\ge P^\star$; set $k_j^\star$ to the first $v$ that succeeds. After $m$ such rounds,
$\mathbf k^\star$ is fully determined and satisfies $P^{\mathcal C}(C_{\mathbf k^\star})=P^\star$.

Both phases make polynomially many $\mathsf{NP^{PP}}$ queries with polynomial-time bookkeeping, so
$\textbf{MPC[O]}\in\mathsf{FP^{NP^{PP}}}$.
\boxtheorem

\section{Appendix: Proof of Proposition \ref{prop:recov}.}\label{app:recov}

\noindent {\bf Proof}: \ We consider the MAR case. The MCAR case is simpler. We recall that we assumed that MGs are $\mbb{I}$-deterministic. (The extension to MGs satisfying Condition {\bf {C}} only should be easy.)
\ It is good enough to prove that the joint distributions $P(\bar{A}^o,\bar{C}^m,\mbb{I}^{\bar{C}},\bar{C}^\star)$ are recoverable.
\ Due to the $\mbb{I}$-determinacy, it suffices to estimate  $P(\bar{A}^o,\bar{C}^m,\mbb{I}^{\bar{C}})$, and express it in terms of $P(\bar{A}^o,\bar{C}^\star,\mbb{I}^{\bar{C}})$.

\h Now, $P(\bar{A}^o,\bar{C}^m,\mbb{I}^{\bar{C}}) = P(\bar{A}^o_1,\bar{C}^m | \bar{A}^o_2) \times P(\bar{A}^o_2)$, where $\bar{A}^o_1$ are parents of variables in $\mbb{I}^{\bar{C}}$. Then, $(\bar{A}^o_1\bar{C}^m \! \independent  \mbb{I}^{\bar{C}})|\bar{A}^o_2$. It follows that $P(\bar{A}^o_1,\bar{C}^m | \bar{A}^o_2) \times P(\bar{A}^o_2) = P(\bar{A}^o_1,\bar{C}^m | \mbb{I}^{\bar{C}} = \bar{0}, \bar{A}^o_2) \times P(\bar{A}^o_2) = P(\bar{A}^o_1,\bar{C}^\star | \mbb{I}^{\bar{C}} = \bar{0}, \bar{A}^o_2) \times P(\bar{A}^o_2)$. These probabilities are expressed in terms of the observed variables in $D^\star$. The algorithm specified in the definition of recoverability amounts, in this case, to computing relative frequencies from the observed data. \boxtheorem

\ignore{
\section{Styles of lists, enumerations, and descriptions}\label{sec:itemStyles}  }











\begin{thebibliography}{2}

\bibitem{we} Anonymous. {\em Title}. Accepted at {\em Hidden} Conference, 2026.

\bibitem{ahuja1993}
Ahuja,~R.~K., Magnanti,~T.~L. and  Orlin,~J.B. \
\newblock {\em Network Flows: Theory, Algorithms, and Applications}.
\newblock Prentice Hall,  1993.

\bibitem{ali1966}
Ali,~S.~M. and Silvey,~S.~D.
\newblock A General Class of Coefficients of Divergence of one Distribution
from Another.
\newblock {\em Journal of the Royal Statistical Society, Series B},
1996, 28(1):131--142.

\ignore{
\bibitem{allisonML}
Allison,~P.~D. \ {\em Maximum Likelihood Estimation}. \ \ Quantitative Applications in the Social Sciences, 96. Sage Publications, 1993.}

\ignore{
\bibitem{allisonMI}
Allison,~P.~D. \ {\em Missing Data}. \ Quantitative Applications in the Social Sciences, 136. Sage Publications, 2002.
}

\ignore{
\bibitem{uai26corr}
Bertossi,~L., Toumani,~F. and Buron,~M. \ Database Querying under Missing Values Governed by Missingness Mechanisms. arXiv paper 2604.06520.
}


\ignore{
\bibitem[Bertossi et al.,2013]{li}
Bertossi,~L. and Li,~L. \ Achieving Data Privacy through Secrecy Views and Null-Based Virtual Updates. \ {\em IEEE Transactions on Knowledge and Data Engineering}, 2013, 25(5):987-1000.}

\ignore{
\bibitem{p2p}
Bertossi,~L. and Bravo,~L. \ Consistency and Trust in Peer Data Exchange Systems. {\em Theory and Practice of Logic Programming}, 2017, 17(2):148-204.}

\ignore{
\bibitem{Bishop2006} Bishop,~C.~M. \ {\em Pattern Recognition and Machine Learning}. \ Springer, 2006.
}

\ignore{
\bibitem[Takeaki,2001]{uno2001}
Takeaki,~U. \ \textit{A Fast Algorithm for Enumerating Bipartite Perfect Matchings}.,  ISAAC 2001 (LNCS vol 2223), 2001.

\bibitem[Chen et al.,2022]{chen2022minimum}
Chen, L., Kyng, R., Liu, Y. P., Peng, R., Gutenberg, M. P., \& Sachdeva, S. (2022).
\newblock Minimum cost flows, MDPs, and \(\ell_1\)-regression in nearly linear time for separable convex instances.
\newblock In \emph{Proceedings of the 54th Annual ACM SIGACT Symposium on Theory of Computing} (STOC 2022).
}

\bibitem{console}
Console,~M., Libkin,~L. and Peterfreund,~L. \
Querying Incomplete Numerical Data: Between Certain and Possible Answers. \ Proc. PODS 2023, pp. 349-358.

\bibitem{dalvi2004}
Dalvi,~N. and Suciu,~D. \
Efficient Query Evaluation on Probabilistic Databases. In Proc VLDB 2004.

\bibitem{sarma2006working}
Das Sarma,~A., Benjelloun,~O., Halevy,~A. and Widom,~J.
\newblock Working Models for Uncertain Data.
\newblock Proc ICDE 2006.





\ignore{
\bibitem{devore}
Devore,~J.~L., Berk,~K.~N. and Carlton,~M.~A. \ {\em Modern Mathematical Statistics with Applications}. 3rd Ed., Springer, 2021.
}




\bibitem{darwicheCommACM} Darwiche,~A. \
Bayesian Networks. {\em Communications of the ACM}, 2010, 53(12):80-90.

\ignore{
\bibitem[Dechter,2019]{dechter}
Dechter,~R. \
{\em Reasoning with Probabilistic and Deterministic Graphical Models}. \ 2nd Ed., Synthesis Lectures on Artificial Intelligence and Machine Learning, Morgan \& Claypool Pubs., 2019.}

\bibitem{benny}
De Sa,~C.,  Ilyas,~I.,  Kimelfeld,~B., Re,~C. and  Rekatsinas,~T. \
A Formal Framework for Probabilistic Unclean Databases. Proc. ICDT 2019, pp. 6:1-6:18.





\ignore{
\bibitem{ducamp2020agrum}
Ducamp,~G., Gonzales,~C., Wuillemin,~P.-H.
aGrUM/pyAgrum: a toolbox to build models and algorithms for Probabilistic Graphical Models in Python.
Proc. International Conference on Probabilistic Graphical Models, 2020, PMLR.



\bibitem{Fink12}
Fink ,~R.,  Han,~L., Olteanu,~D. \
Aggregation in Probabilistic Databases via Knowledge Compilation. Proc. VLDB Endow. 5(5): 490-501 (2012).
}

\bibitem{lastLibkin}
Gheerbrant,~A., Libkin,~L., Rogova,~A. and Sirangelo,~C. \
Querying Incomplete Data: Complexity and Tractability via Datalog and First-Order Rewritings. \ {\em Theory and Practice of Logic Programming}, 2024,
24(2):279-309.

\bibitem{gelman}
Gelman,~A. and Hill,~J. \ {\em Data Analysis Using Regression and Multilevel/Hierarchical Models}. Cambridge
Univ. Press, 2007.



\ignore{
\bibitem{bennyICDT23}
Gilad,~A., Imber,~A. and Kimelfeld,~B. \ The Consistency of Probabilistic Databases with
Independent Cells. \ Proc. ICDT, 2023.
}

\ignore{
\bibitem{greco}
Greco,~S., Molinaro,~C. and Spezzano,~F. \ Incomplete Data and Data Dependencies in Relational Databases. Synthesis Lectures in Data Management, Morgan \& Claypool Pubs., 2012.
}

\bibitem{marko} Kumar Jha,~A. and Suciu,~D. Probabilistic Databases with MarkoViews.
Proc. VLDB, 2012, 5(11):1160-1171

\ignore{
\bibitem[Gribkoff et al.,2014]{mostProbDB}
Gribkoff,~E., Van den Broeck,~G. and Suciu,~D. \
The Most Probable Database Problem. \ Proc. BUDA 2014, collocated with SIGMOD, 2014.}

\ignore{
\bibitem{Fukuda1994}
Fukuda, K. , Matsui, T. \
Finding all the perfect matchings in bipartite graphs.
{\em Applied Mathematics Letters}. 1994.
}

\ignore{
\bibitem{hastie}
Hastie,~T., Tibshirani,~R.
and Friedman,~J. \
{\em The Elements of Statistical Learning}. 2nd Ed., Springer, 2017.
}

\bibitem{heckerman95}
Heckerman,~D., Geiger,~D. and Chickering,~D.~M. \ Learning Bayesian Networks: The Combination of
Knowledge and Statistical Data. \ {\em Machine Learning}, 1995, 20:197-243.

\bibitem{heckerman}
Heckerman,~D. \ A Tutorial on Learning with Bayesian Networks. In  {\em Innovations in Bayesian Networks}, Springer 2008, pp. 33-82.   Updated as arXiv paper 2002.00269, 2022.


\bibitem{imielinski}
Imielinski,~T. and Lipski Jr.,~W. \
Incomplete Information in Relational Databases. \ {\em Journal of the ACM}, 1984, 31(4):761-791.



\ignore{
 \bibitem{guy2}
Khosravi,~P.,  Liang,~Y., Choi,~Y. and Van den Broeck,~G. \ What to Expect of Classifiers? Reasoning about Logistic Regression with Missing Features. \ Proc. IJCAI, 2019, pp. 2716-2724.
}

\ignore{
\bibitem{hungarian}
Kuhn,~H.~W. \ The Hungarian Method for the Assignment Problem. \ {\em Naval Research Logistics Quarterly}, 1955, 2(1-2):83-97.
}

\ignore{
\bibitem{markoviews}
Kumar Jha,~A. and  Dan Suciu, D. \
Probabilistic Databases with MarkoViews. Proc. VLDB 5(11):1160-1171,  2012.
}

\bibitem{rubinBook}
Little,~R.~J. and Rubin,~D.~B. \ {\em Statistical Analysis with Missing Data}. \ John Wiley \& Sons, 3rd Ed., 2019.

\ignore{
\bibitem{littman1998}
Littman,~M.~L. and Goldsmith,~J. and Mundhenk,~M. \ {\em The Computational Complexity of Probabilistic Planning}, JAIR, 1998.
}

\ignore{
\bibitem{MacKay2003} MacKay,~D.~J.  \ {\em Information Theory, Inference, and Learning Algorithms}. Cambridge University Press, 2003.
}




\ignore{
\bibitem[Mohan et al.,2013]{mohan2013}
Mohan,~K., Pearl,~J. and Tian, ~J. \
Graphical Models for Inference with Missing Data. \
{\em Proc NIPS}, 2013, pp. 1277-1285.
}

\ignore{
\bibitem{mohan2014}
Mohan,~K and Pearl,~J. \textit{On the Testability of Models with Missing Data}. \ Proc. AISTAT, 2014.
}

\bibitem{mohan13}
Mohan,~M., Pearl,~J. and Tian,~J. \
Graphical Models for Inference with Missing Data. \
Proc. NIPS 2013. Vol.  26.


\bibitem{mohanthesis2017}
Mohan,~K. \ \textit{Graphical Models for Inference with Missing Data}. \ PhD Thesis, UCLA, 2017. 

\bibitem{mohan21}
Mohan,~K. and Pearl,~J. \ Graphical Models for Processing Missing Data. \
{\em Journal of the American Statistical Association},
2021, 116(534):1023–1037.

\ignore{
\bibitem[Olteanu et al.,2015]{Olteanu2015}
Olteanu,~D. and Z\'{a}vodn\'{y},~J. \ Size Bounds for Factorised Representations of Query Results. \
{\em ACM TODS}, 2015, 40(1).
}

\bibitem{Papadimitriou1994}
Papadimitriou,~C. \ {\em Computational Complexity}, Addison-Wesley, Reading, MA, 1994.


\bibitem{pearl}
Pearl,~J. \ {\em Causality: Models, Reasoning and Inference}. \ Cambridge Univ. Press, 2nd edition, 2009.

\ignore{
\bibitem[Pearl,2010]{pearl2010}
Pearl,~J. \ The Foundations of Causal Inference. \ {\em Sociological Methodology}, 2010, 40:75–149.  \url{http://www.jstor.org/stable/41336883}. 
}



\bibitem{re07}
R\'{e},~C., Dalvi,~N. and Suciu,~D. \
Efficient Top-k Query Evaluation on Probabilistic Data. \ Proc. ICDE,  2007, pp.  886-895.



\bibitem{re2005trio}
R\'e,~C., Dalvi,~N. and Naughton,~J.~F.
\ Trio: A System for Integrated Management of Data, Accuracy, and Lineage.
\newblock Proc. CIDR, 2005, pp. 262–276.

\ignore{
\bibitem{reiter}
Reiter,~R.\
 A Sound and Sometimes Complete Query Evaluation Algorithm for Relational Databases with Null Values. \ {\em J. of the ACM}, 1986, 33(2):349-370.
}


\ignore{
\bibitem{Olteanu2012}
Olteanu, Dan and Z\'{a}vodn\'{y}, Jakub. \
Factorised representations of query results: size bounds and readability.\ Proc ICDT 2012.}

\ignore{
\bibitem{re09}
R\'{e},~C., Suciu,~D. \
The Trichotomy of HAVING Queries on a Probabilistic Database. \ {\em VLDB J.}, 2009, 18(5):1091-1116.}


\bibitem{royden1988}
Royden,~H.~L.
\newblock {\em Real Analysis}, 3rd ed.
\newblock Macmillan, New York, 1988.

\ignore{
 \bibitem{salimi19}
Salimi,~B., Rodriguez,~L., Howe,~B. and Suciu,~D. \ Interventional
Fairness: Causal Database Repair for Algorithmic Fairness. \ Proc. SIGMOD, 2019, pp. 793–810.
}

\ignore{
 \bibitem{salimi24}
Salimi,~B., Milani,~M., Pirhadi,~A., Cloninger,~A. and Moslemi,~M.~H. \ OTClean: Data Cleaning for Conditional Independence Violations using Optimal Transport. \ Proc. SIGMOD, 2024.
}

\bibitem{rubin76}Rubin,~D.~B. \
 Inference and Missing Data.  {\em Biometrika}, 1976, 63(3):581–592.

\ignore{\bibitem{Senellart2018}
Senellart,~P., Jachiet,~L., Maniu,~S. and Ramusat,~Y. \
ProvSQL: Provenance and Probability Management in PostgreSQL. \ Proc VLDB, 2018.}

\bibitem{soliman2007topk}
Soliman,~M.,  Ilyas,~I and Chang,~K. \ Top-k Query Processing in Uncertain Databases. In Proc. ICDE, 2007.

\ignore{
\bibitem[Soliman et al.,2008]{soliman2008}
Soliman,~M.,  Ilyas,~I. and Chang,~K. \ Probabilistic Top-k and Ranking-Aggregate Queries. {\em ACM TODS}, 2008, 33(3).
}

\ignore{
\bibitem{strozecki2023}
Strozecki,~Y. \ Enumeration Complexity: Incremental Time, Delay and Space.  arXiv, 2023., \url{https://arxiv.org/abs/2309.17042}.
}

\bibitem{suciu}
Suciu,~D., Olteanu,~D., R\'e,~C. and Koch,~C. \ {\em Probabilistic Databases}. \
Synthesis Lectures on Data Management, Morgan \& Claypool Pubs., 2011.

\bibitem{forall} Suciu,~D. \ Probabilistic Databases for All. \ Proc. PODS 2020, pp. 19-31.

\ignore{
\bibitem{toran91}
Tor\'{a}n,~J. Complexity Classes Defined by Counting Quantifiers. {\em J. ACM.}, 1991.}

\bibitem{libkin}
Toussaint,~E., Guagliardo,~P., Libkin,~L. and Sequeda,~J. \
Troubles with Nulls, Views from the Users. Proc. VLDB, 2022, 15(11):2613-2625.


\bibitem{broeck2015}
Van den Broeck,~G., Mohan,~K., Choi,~A., Darwiche,~A. and Pearl,~J.  Efficient Algorithms for Bayesian Network
Parameter Learning from Incomplete Data. \  Proc. UAI, 2013.



\bibitem{NOW}
Van den Broeck,~G. and Suciu,~D. \
Query Processing on Probabilistic Data:
A Survey. \ {\em Foundations and Trends
in Databases}, 2015,
7(3-4):197–341. \ NOW Publishers.

\bibitem{villani}
Villani,~C. \ {\em Optimal Transport. Old and New.}, Springer, 2009.




\bibitem{larry}
Wasserman,~L. \ {\em All of Statistics}. \ Springer, 2010.

\ignore{
\bibitem{vegh2016strongly}
Végh, L. A. (2016).
\newblock A strongly polynomial algorithm for a class of minimum-cost flow problems with separable convex objectives.
\newblock \emph{SIAM Journal on Computing}, 45(5), 1729--1761.
}

\ignore{
\bibitem{wong1}
Wong,~S.~K.~M. and Wang,~Z.~W. \  On Axiomatization of Probabilistic
Conditional Independence. \ Proc. Conf. Uncertainty in Artificial Intelligence, 1994, pp. 591–597.
}

\ignore{\bibitem[Wong et al.,2000]{wong2}
Wong,~S.~K.~M., Butz,~C.~J.  and Wu,~D. \
On the Implication Problem for Probabilistic
Conditional Independency. \ {\em IEEE Transactions on Systems, Man, and Cybernetics, Part A}, 2000, 30(6):785-805.}

\ignore{
\bibitem[Stats]{websiteMI}
\url{https://stats.oarc.ucla.edu/stata/seminars/mi_in_stata_pt1_new}
}

\end{thebibliography}

\begin{thebibliography}{2}
\setcounter{enumiv}{35}

\bibitem{colbourn95}
Colbourn,~C.~J., Provan,~J.~S. and Vertigan,~D. \ The Complexity of Computing the Tutte Polynomial on Transversal Matroids. {\em Combinatorica}, 1995, 15:1–10.

\bibitem{dagum92}
Dagum,~P.  and Luby,~M. \ Approximating the Permanent of Graphs with Large Factors. \  {\em Theoretical Computer Science}, 1992, 102:283–305.

\bibitem{dinic1970}
Dinic,~E.~A. Algorithm for Solution of a Problem of Maximum Flow in a Network with Power Estimation. \emph{Soviet Mathematics Doklady}, 1970, 11:1277--1280.

\bibitem{orlin1988}
Orlin, James. A faster strongly polynomial minimum cost flow algorithm. STOC, 1988.

\bibitem{Minoux86}
Minoux, M.. Solving Integer Minimum Cost Flows with Separable Convex Cost Objective Polynomially. In: Gallo, G., Sandi, C. (eds) `Netflow at Pisa'. Mathematical Programming Studies, Springer 1986, vol 26.

\bibitem{Valiant1979}
Valiant,~L.~G. \
\newblock The Complexity of Computing the Permanent.
\newblock {\em Theoretical Computer Science}, 8(2):189--201, 1979.


\end{thebibliography}
\end{document}